\newcommand{\N}{\mathbb{N}}
\newcommand{\I}{\mathbb{I}}
\newcommand{\R}{\mathbb{R}}
\def\BD{D}
\def\BDi{D_{i}}
\def\BDone{D_{1}}
\def\BDtwo{D_{2}}
\def\BDn{D_{n}}
\def\trho{t_{\theta}}
\def\const{c}
\def\V{c_2}
\def\te{t_{\epsilon}}
\def\teta{t_{\eta}}
\def\ourrho{\bar \theta}
\def\NN{\rho_a}
\def\MN{t_{a}}
\def\Q{u}
\def\tildet{\widetilde t}
\newcommand{\negin}[1]{\noindent{\textcolor{red}{\{{\bf NG:} #1\}}}}
\newcommand{\jasoncom}[1]{\noindent{\textcolor{blue}{\{{\bf JL comment:} #1\}}}}
\def\low {\underline p}
\def\high{\bar p}
\def\margin{m}
\def\cone{s}
\def\var{z}
\newcommand{\norm}[1]{\lVert#1\rVert}
\newcommand{\defeq}{:=}
\newcommand{\cond}[1]{\left. \right|}
\newtheorem{theorem}{Theorem}[section]
\newtheorem{corollary}{Corollary}[theorem]
\newtheorem{lemma}[theorem]{Lemma}
\newtheorem{proposition}[theorem]{Proposition}
\newtheorem{definition}{Definition}[section]
\newtheorem{example}{Example}
\newtheorem{assumption}{Assumption}
\begin{document}
%%%%%%%%%%%%%%%%

% Outcomment only when entries are known. Otherwise leave as is and 
%   default values will be used.
%\setcounter{page}{1}
%\VOLUME{00}%
%\NO{0}%
%\MONTH{Xxxxx}% (month or a similar seasonal id)
%\YEAR{0000}% e.g., 2005
%\FIRSTPAGE{000}%
%\LASTPAGE{000}%
%\SHORTYEAR{00}% shortened year (two-digit)
%\ISSUE{0000} %
%\LONGFIRSTPAGE{0001} %
%\DOI{10.1287/xxxx.0000.0000}%

% Author's names for the running heads
% Sample depending on the number of authors;
% \RUNAUTHOR{Jones}
% \RUNAUTHOR{Jones and Wilson}
% \RUNAUTHOR{Jones, Miller, and Wilson}
% \RUNAUTHOR{Jones et al.} % for four or more authors
% Enter authors following the given pattern:
%\RUNAUTHOR{}

% Title or shortened title suitable for running heads. Sample:
% \RUNTITLE{Bundling Information Goods of Decreasing Value}
% Enter the (shortened) title:
%\RUNTITLE{}

% Full title. Sample:
% \TITLE{Bundling Information Goods of Decreasing Value}
% Enter the full title:
\TITLE{No-regret Learning in Price Competitions under Consumer Reference Effects}

% Block of authors and their affiliations starts here:
% NOTE: Authors with same affiliation, if the order of authors allows, 
%   should be entered in ONE field, separated by a comma. 
%   \EMAIL field can be repeated if more than one author
\ARTICLEAUTHORS{%
\AUTHOR{Negin Golrezaei}
\AFF{Sloan School of Management, Massachusetts Institute of Technology \EMAIL{golrezaei@mit.edu}, \URL{}}
\AUTHOR{Patrick Jaillet}
\AFF{Department of Electrical Engineering and Computer Science, Massachusetts Institute of Technology \EMAIL{jaillet@mit.edu}, \URL{}}
\AUTHOR{Jason Cheuk Nam Liang}
\AFF{Operations Research Center, Massachusetts Institute of Technology \EMAIL{jcnliang@mit.edu}, \URL{}}
% Enter all authors
} % end of the block

\ABSTRACT{
We study long-run market stability for repeated price competitions between two firms, where consumer demand depends on firms' posted prices and consumers’ price expectations called \emph{reference prices}. Consumers' reference prices vary over time according to a {memory-based} dynamic, which is a weighted average of all historical prices. We focus on the setting where firms are not aware of demand functions and how reference prices are formed but have access to an oracle that provides a measure of consumers'
responsiveness to the current posted prices. We show that if the firms run no-regret algorithms, in particular, online mirror descent (OMD), with decreasing step sizes,  the market stabilizes in the sense that firms' prices and reference prices converge to a stable Nash Equilibrium (SNE). Interestingly, we also show that there exist constant step sizes under which the market stabilizes. {We further characterize the rate of convergence to the SNE for both decreasing and constant OMD step sizes.}}%

% Sample
%\KEYWORDS{deterministic inventory theory; infinite linear programming duality; 
%  existence of optimal policies; semi-Markov decision process; cyclic schedule}

% Fill in data. If unknown, outcomment the field
\KEYWORDS{price competition, consumer reference effect, no-regret learning, convergence in games}
\maketitle
%%%%%%%%%%%%%%%%%%%%%%%%%%%%%%%%%%%%%%%%%%%%%%%%%%%%%%%%%%%%%%%%%%%%%%

% Samples of sectioning (and labeling) in MSOM
% NOTE: (1) \section and \subsection do NOT end with a period
%       (2) \subsubsection and lower need end punctuation
%       (3) capitalization is as shown (title style).
%
%\section{Introduction.}\label{intro} %%1.
%\subsection{Duality and the Classical EOQ Problem.}\label{class-EOQ} %% 1.1.
%\subsection{Outline.}\label{outline1} %% 1.2.
%\subsubsection{Cyclic Schedules for the General Deterministic SMDP.}
%  \label{cyclic-schedules} %% 1.2.1
%\section{Problem Description.}\label{problemdescription} %% 2.

% Text of your paper here

\section{Introduction}
\label{sec:intro}
In markets with repeated consumer-seller interactions, consumers develop price expectations (or reference prices) based on past observed prices. Such price memories would influence consumers' willingness-to-pay and hence their purchasing decisions, eventually impacting the overall aggregate market demand. {Due to such memory dependent reference price effects, developing pricing strategies is challenging because firms may not necessarily know how consumers form and adjust price expectations.} The complexity of pricing is further increased with competition, as competitors' pricing decisions impact not only a firm's immediate demand but also  consumers' reference prices. 
Such challenges in pricing  under competition and  reference price effects make market stability particularly attractive to firms:  under stable markets, %the analysis on relevant market factors largely increases understanding of behavioral patterns and dynamics of the market (see \cite{caves1978market}), which may
%potentially lead to better
long-term organizational planning and business strategy development can be conducted more effectively  (see \cite{caves1978market}). %Nevertheless, it is unclear whether market stability can be warranted in competitive markets   with memory-based reference effects.
Inspired by this, in this paper, \emph{we study the impact of consumer reference prices on the long-term stability of competitive markets.}

We examine a simplified market scenario where two firms sequentially set prices to sell goods over an infinite time horizon, and demand of each firm's goods are  influenced by both firms' current prices and the current consumers'  reference price, which is a weighted average of all past price trajectories. Also, the repeated price competitions occur in an opaque environment, where firms are not aware of any demand or reference price characteristics, and only have access to an oracle that returns consumers' responsiveness to posted prices.\footnote{We consider a linear demand model, but firms are not aware of the functional form of the demand.} 
% whether firms' prices and consumer reference prices eventually stabilize in the long-run if firms follow the general online mirror descent (OMD) algorithm.
In such a market scenario, we consider that both firms run a general online mirror descent (OMD) algorithm.\footnote{OMD algorithms are closely related to the regularized learning paradigm, which includes algorithms such as follow the regularized leader (FTRL), EXP3, Hedge, etc (see \cite{hoi2018online} for a comprehensive survey).} {Despite its simplicity, OMD algorithms have been theoretically shown to have good performance guaranties in both purely stochastic and adversarial environments (see \cite{bubeck2012best,zimmert2018optimal,zimmert2019beating}), and hence would be a plausible option for firms in this opaque environment of interest.} 
%\negin{Can you improve the flow here? Maybe yiu can start with sth like this ``our work focuses on competition under an opaque information structure where firms post prices using no-regret learning algorithms like OMD," which is the last sentence here. Also, feel free to shorten this paragraph.}

{Our goal is to investigate  
whether firms' prices and consumer reference prices eventually stabilize in the long-run if firms run OMD.} The notion of stability that we consider is represented by the convergence of firms' price profiles and reference prices such that there is no incentive for firms to deviate, eliminating the possibility for long-run price cycles and fluctuations. Similar notions of stability under dynamic competition has been studied under various equilibrium frameworks, and most relevant to this work are Markov perfect equilibrium and stationary equilibrium (see for example \cite{hopenhayn1992entry,escobar2007existence,doraszelski2010computable,weintraub2011industry, adlakha2015equilibria}). Nevertheless, these frameworks assume firms have complete information or optimize pricing decisions according to some prior on their competitors (or their aggregate) and the market. %Moreover, these frameworks do not involve the notion of learning. 
In contrast, our work focuses on competition in an opaque environment where firms post prices using no-regret learning algorithms like OMD. {Here, we point out that our objective is not to present dynamic pricing polices that maximize firms' cumulative revenue. Instead, we seek to shed light on whether simple pricing polices like OMD that do not require a large amount of information eventually achieve market stability. Our contributions are summarized as follows.} 
%We assume that each firm's demand is more sensitive to its own prices compared to its competitor's prices as well as consumers' perceived surcharges or discounts with respect to (w.r.t.) the reference price.

\begin{itemize}[leftmargin=*]
    \item We characterize stability for dynamic competitive markets under consumers' reference price effects by defining the notion of \textit{Stable Nash Equilibrium (SNE)}.  We theoretically demonstrate its existence and shed light on its structural properties; see Theorem \ref{thm:NEconvergence}.
    \item We transform the two-firm game with a  dynamic state (reference price) that  varies in time according to firms' posted prices to a three-firm game without a state. The added virtual firm, which is referred to as nature, runs OMD with a constant step size (i.e. has a fast learning rate), and models how reference prices are affected by firms' past pricing decisions. 
    
    %Under the assumption that  each firm's demand is more sensitive to its own prices compared to its competitor's prices as well as consumers' perceived surcharges or discounts with respect to (w.r.t.) the reference price, w
    
    \item We show that prices and reference prices converge to an SNE and achieve stable markets when the two (real) firms adopt decreasing step sizes that go to zero at a moderate rate; see Theorem \ref{thm:OMD:convergenceSufDecStep} for details.  We further show that with decreasing step sizes, the market stabilizes at a linear rate. 
We highlight that obtaining these convergence results is challenging because in our three-firm game, there is a firm (nature) who adopts a constant step size and learns at a fast rate. Our results show that despite the need to deal with such an inflexible virtual firm, the real firms can stabilize the market by adopting  decreasing step sizes. In fact, the existence of the inflexible virtual firm in our game does not allow us to use the results in the literature on multi-agent online learning, where multiple interacting agents make sequential decisions via running the OMD algorithm  to maximize individual rewards (see  \cite{mertikopoulos2017convergence,bravo2018bandit,mertikopoulos2019learning}). More specifically, in the multi-agent online learning literature, agents in the system of interest typically use step sizes of the same order (i.e. homogeneously decreasing or constant). In contrast, in our setting, firms are unaware of reference price update dynamics, and may possibly take decreasing step sizes while nature's step sizes remain constant.
% Furthermore,  the convergence results in this line of work heavily rely on \textit{variational stability} property, \negin{which does not hold in our setting? Also, provide more details about this property.} 
%\negin{In many cases, instead of saying OMD algorithm, you say OMD. This does not seem right.}
    \item Interestingly, we also show that there exist constant step sizes under which markets will converge to an SNE at much faster rates compared to adopting decreasing step sizes. Additionally, we show through an example that not every constant step size results in a stable market. Roughly speaking, if the firms' constant step size is compatible with nature's constant step size, the market stabilizes at a faster rate compared to decreasing step sizes; see Corollary \ref{cor:OMD:convergenceSuf} and Theorem \ref{thm:OMD:constExpConvergence} for details.   %\negin{it seems to me that $\beta$ is actually important, not $a$.} 
    %Finally, we provide examples in which prices converge to some point other than the SNE under decreasing step sizes, and where prices oscillate in the long run under constant step sizes.  
    
    %We view reference prices as an underlying game state that varies in time according to firms' decisions, we model reference prices as decisions made by an additional firm running OMD with constant step size (i.e. fast learning rate), and transform the game into one without states but has an additional virtual firm who learns at a quick pace.  
\end{itemize} 
{We refer the readers to  Appendix \ref{app:litreview} for  an expanded literature review.}

\section{Preliminaries} \label{Sec:model}
\paragraph{Consumer Demand and Reference Price Update Dynamics.} We study a dynamic  system where two firms simultaneously set prices in each period over an infinite time horizon to sell goods to consumers whose willingness-to-pay is affected by their price expectations, referred to as \emph{reference prices}. We assume that the number of consumers is large so that  demand for each firm  is governed by the aggregate behavior of all consumers. Specifically, the demand of firm $i\in \{1,2\}$ in time period $t$ with posted prices $\bm{p}_t=(p_{1,t}, p_{2,t})$ and  consumers' reference price $r_{t}$ is given by
\begin{align}
\label{def:RAdemand}
    d_i(p_{i,t}, p_{-i,t}, r_{t}) = \alpha_{i} - \beta_{i} p_{i,t} + \delta_{i} p_{-i,t} + \gamma_{i} r_{t}\,, 
\end{align}
where $p_{i,t}$ is the  price of firm $i$ and $p_{-i,t}$ is the price of the other firm. To simplify notation, we may denote $d_i(p_{i,t}, p_{-i,t}, r_{t})$ with $d_i(\bm{p}_t, r_{t})$. We assume  prices $p_{i,t}$ and reference prices $r_{t}$ are bounded, i.e., for $i\in \{1,2\}$,  $ p_{i,t}, r_{t}\in \mathcal{P} = [\low, \high]$ for some $0 < \low <  \high< \infty$, and $d_{i}(p_{i}, p_{-i}, r) \geq 0$  for any $p_{i}, p_{-i}, r \in \mathcal{P}$. 
The boundedness of prices corresponds to real-world price floors or price caps and is not unnatural. In Equation \eqref{def:RAdemand},  $\alpha_{i}, \delta_{i}, \gamma_{i}> 0$ and  {$\beta_{i} \geq \margin\left(\delta_{1} + \delta_{2} + \max\{\gamma_{1},\gamma_{2}\}\right)$}, where $\margin>0$. Later in this section, we will  provide an  interpretation for these parameters that characterize our linear demand model. We note that linear demand models, which are widely used in the literature (see \cite{huang2013demand} for a comprehensive survey),  can be viewed as a first-order approximation to more complex models. 

After firms post prices, reference prices update according to the following dynamics: 
\begin{align}
\label{eq:referenceUpdate}
    r_{t+1} = ar_{t}  + (1-a)\left(\theta_1 p_{1,t} + \theta_2 p_{2,t} \right)\,,
\end{align}
{where} $\theta_1,\theta_2, a \in (0,1)$ and $\theta_1 + \theta_2 =1$. {Here, $\theta_{i}$, which is independent of prices,  represents how visible firm $i$ is to consumers: the larger the $\theta_{i}$, the more visible firm $i$ is, and the more it influences  consumers' price expectations. %For example, $\theta_{i}$ can represent how visible firm $i$ is among consumers due to advertising campaigns.
} The reference price update dynamics can be viewed as a memory-based  process that characterizes how consumers adjust price expectations for goods over time as they observe new prices. Reference prices are formed by a weighted average of historical prices,  where more recent prices are assigned larger weights. The specific exponential weighting scheme adopted in this paper has been motivated and empirically validated in the literature of  behavioral economics (see, for example \cite{winer1986reference,sorger1988reference,greenleaf1995impact}). The parameter $a$ in the reference price update model  characterizes to what extent  consumers' reference price depends on past prices: As $a$ increases,  the reference prices depend less on recently observed prices. 
{Empirical estimates of $a$ typically range from $0.47$ to $0.925$ (see \cite{greenleaf1995impact,briesch1997comparative}) depending on the type of goods sold.}

We now provide an economic interpretation for our linear demand model  by rearranging terms:
\begin{align} \textstyle d_i(p_{i,t}, p_{-i,t},r_{t}) = \alpha_{i} - \left(\beta_{i} - \gamma_{i} \right) p_{i,t} + \delta_{i} p_{-i,t} + \gamma_{i} {\left(r_{t} - p_{i,t}\right)}\,.
\label{eq:d_2}\end{align}
%Observe that the demand increases when the offered price $p_{i,t}$ is lower than the  consumer's price expectation or reference price $r_t$  
When the posted price is greater than the reference price, i.e., $p_{i,t} > r_{t}$, the value $ p_{i,t} - r_{t}$ can be viewed as the consumers' perceived price surcharge w.r.t. the reference price, and  when $p_{i,t} < r_{t}$,  the value $ r_{t} - p_{i,t} $ is consumers' perceived price discount.
%This rearrangement illustrates the dependence of demand on consumers' perceived price surcharge $ p_{i,t} - r_{t}$ (if $p_{i,t} > r_{t}$) or discounts  $ r_{t} - p_{i,t} $ (if $r_{t} > p_{i,t} $)with respect to their reference prices, which is a conventional representation of how  reference effects affect consumer decisions in related literature, see for example \cite{popescu2007dynamic, nasiry2011dynamic}.
Observe that in this rearrangement, demand increases when consumers' perceived price discount $(r_{t}-p_{i,t})\I\{r_{t}> p_{i,t}\}$%\footnote{$\I\{\cdot\}$ is the indicator function: $\I\{\mathcal{A}\} =1$ if event $\mathcal{A}$ occurs and $0$ otherwise.}
increases, and decreases as price surcharge $(p_{i,t}-r_{t})\I\{p_{i,t} > r_{t}\}$ increases, which is a conventional representation of how  reference prices  affect consumer decisions in the related literature, see, for example, \cite{popescu2007dynamic, nasiry2011dynamic}. Furthermore, the coefficients $\beta_{i} - \gamma_{i}, \delta_{i}$, and  $\gamma_{i}$ measure the demand sensitivity of firm $i$ to its own prices $p_{i,t}$, its competitor's prices $p_{-i,t}$, and price surcharge/discount respectively.\footnote{The dependency of demand on price surcharges and discounts are of the same order $\gamma_{i}$, which corresponds to so-called risk-neutral consumers. Related literature have also studied asymmetric demand dependencies on surcharges and discounts; see \cite{popescu2007dynamic,nasiry2011dynamic,hu2016dynamic}.} With these interpretations, parameter $\margin>0$ in the condition of 
{$\beta_{i} \geq \margin\left(\delta_{1} + \delta_{2} + \max\{\gamma_{1},\gamma_{2}\}\right)$} can be viewed as a \emph{sensitivity margin} that represents to what extent demand is more sensitive to a firm's own prices relative to competitor's prices and surcharge/discount. Take for example the case where $\margin = 1$: we have $\beta_{i} - \gamma_{i} > \delta_{i}$ {$ + \delta_{-i}$}, which means the impact of firm $i$'s price on its demand is greater than the aggregate impact of its price on the competitor's demand and the competitor's price on firm $i$'s demand (see Equation \eqref{eq:d_2}).  %\st{Additionally, for $\margin = 2$, we  have both $\beta_{i} - \gamma_{i} > \delta_{i}$ and $\beta_{i} - \gamma_{i} >\gamma_{i} $ which represents the fact that  demand is most affected by a firm's own prices relative to competitor's prices and surcharge/discounts.} 
 Additionally, for $\margin = 2$, we  have  $ \max\{\gamma_{1},\gamma_{2}\} < \beta_{i} - \gamma_{i} $,  which represents the fact that reference effects in the market due to  surcharge/discounts are generally less influential than any firm's price on its own demand.

We point out that the aforementioned relationships between model parameters $\{\alpha_{i},\beta_{i},\delta_{i}, \gamma_{i},\theta_{i}\}_{i=1,2}$ lead to  a diagonally dominant Jacobian matrix w.r.t. some mapping that characterizes the linear system consisting of firms and reference prices (particularly linearity in demand and reference price updates). We will provide further details on this particular mapping and its relevance with variational inequalities in Section \ref{sec:comparison}.

\textbf{Market Stability.} In this work, our goal is to present simple pricing policies for the firms that stabilize the market even when firms do not have complete information on market conditions. Define  $\pi_i(\bm{p},r) := p_{i}\cdot  d_i(\bm{p},r)$ as the single-period firm $i$'s revenue when prices are $\bm{p}$ and the reference price is $r$.  
We say the market is stable at point $(\bm{p}^*, r^*)$ if the following two conditions hold:
\begin{enumerate} [leftmargin=*]
    \vspace{-0.2cm}
    \item \textbf{Best-response Conditions.} for $i\in\{1, 2\}$, we have  $ \pi_{i}(p_{i}^{*},{p}_{-i}^{*}, r^{*}) \geq \pi_{i}(p,{p}_{-i}^{*}, r^{*})$ for any $p$ in the feasible set $\mathcal{P}$; that is, firm $i$ cannot increase its revenue by posting another price $p\ne p_i^*$ when the other firm posts a price of ${p}_{-i}^{*}$ and the reference price is $r^*$. 
    \item \textbf{Stability Condition. } $r^{*}= \theta_{1}p_{1}^{*} + \theta_{2}p_{2}^{*}$; that is,
    the reference price does not change if the firm $i\in\{1,2\}$ keeps posting price $p_i^*$; see Equation \eqref{eq:referenceUpdate}.
\end{enumerate}
Throughout the paper, we may refer to a point $(\bm{p}^*, r^*)$ that satisfies the aforementioned conditions as a Stable Nash Equilibrium (SNE). 
\paragraph{Firms' Information Structure.}

We present pricing policies under a {partial information} setting. 

In this setting, a firm $i$ does not know 
$d_i$, $d_{-i}$, reference price update dynamics, and does not observe any of historical competing prices nor the current reference price. To be more specific, in this setting, firms do not know the specific form of the demand functions and reference update dynamics, which in our case are linear.  Nevertheless, we assume that after firms post prices $\bm{p}_{t}$ under reference price $r_{t}$, they can access a first-order oracle that outputs ${\partial 
{\pi}_i(\bm{p}_{t}, r_{t})}/{\partial p_{i}}$, which intuitively represents consumers' responsiveness to a firm's prices under current market conditions.\footnote{We note that such information can be obtained by a slight perturbation of the posted price. Furthermore,  the assumption of having access to the first-order oracle is very common in the literature; see, for example, a comprehensive introduction to convex optimization in \cite{nesterov2013introductory}.} We note that  the partial information setting models real-world opaque environments where firms do not  possess information of the market or its competitors.
In this setting, firms set prices simultaneously, so a firm does not observe its competitor's pricing decision in the current period before setting its own price.

\section{Existence and Structural Properties of SNE}
\label{sec:NE}
In this section, we show that an SNE exists. Recall that for any SNE, each firm best responds to its competitor as well as consumers' reference price with no incentive for unilateral deviation.  Let $\psi_i(p_{-i},r) = \argmax_{p\in \mathcal{P}} \pi_i(p,p_{-i},r)$, $i\in \{1,2\},$\footnote{Here, the revenue function $\pi_{i}$ is quadratic, so $\argmax_{p\in \mathcal{P}} \pi_{i}$ is a singleton.} be firm $i$'s best-response to the reference price $r$ and the price of the other firm $p_{-i}$. Further, %define $\bm{\psi}(\bm{p},r)= (\psi_{1}(p_{2},r),\psi_{2}(p_{1},r))$, and 
for any reference price $r$, define 
set  $\mathcal{B}(r)$ as follows
\begin{align}
\label{eq:NE:bestresponseprfile}
\mathcal{B}(r) = \left\{\bm{p}~:~ p_i = \psi_i(p_{-i},r), i = 1,2 \right\}\,. 
\end{align}
As we will show in Theorem \ref{thm:NEconvergence} below, $\mathcal{B}(r)$ is non-empty and when it is not a singleton, it is an ordered set with total ordering.\footnote{A set $\mathcal{A} \subset \R^{d}$ is an ordered set with total ordering if for any $\bm{x},\bm{y} \in \mathcal{A}$, either $\bm{x} \leq \bm{y}$ or $\bm{y} \leq \bm{x}$ where the relationship $\leq$ and 
$\geq$ between two vectors is component-wise.} To show the existence of an SNE, we consider a simple pricing strategy that works as follows: in each period, firms set the largest best response profiles $\bm{p}_{t}$ w.r.t. reference price $r_{t}$, i.e., $\bm{p}_{t} = \max \mathcal{B}(r_{t})$ (because $\mathcal{B}(\cdot)$ is an ordered set, $\max \mathcal{B}(r_{t})$ is well-defined).
We show that  for any initial reference price $r_{1} \in \mathcal{P}$, $(\bm{p}_{t},r_{t})$ converges monotonically to an SNE. Of course, this pricing strategy is only possible under the complete information setting, where each firm knows its own demand function $d_i$, its competitor's demand function $d_{-i}$, and the current reference price.  That is, the described pricing strategy cannot be implemented in our partial information setting. 
Nevertheless, the convergence under this policy confirms the existence of an SNE. 

\begin{theorem}[Existence of an SNE]
\label{thm:NEconvergence}
% For a fixed reference price $r \in \mathcal{P}$, the set of best response profiles w.r.t. $r$  is nonempty, i.e. $\mathcal{B}(r) = \left\{\bm{p}\in \mathcal{P}^{2}: \bm{p} = \bm{\psi}(\bm{p}, r)\right\} \neq \emptyset$. If  $\mathcal{B}(r)$ is not a singleton, then it is an ordered set with total ordering . 
Let $\mathcal{B}(r)$, defined in Equation \eqref{eq:NE:bestresponseprfile}, be the set of best-response profiles w.r.t. reference price $r$. Then, for a fixed reference price $r \in \mathcal{P}$, $\mathcal{B}(r)$ is non-empty, and when  $\mathcal{B}(r)$ is not a singleton, it is an ordered set with total ordering. Furthermore, assume that in each period $t$, firms set the largest best response prices $\bm{p}_{t}$ w.r.t. reference price $r_{t}$, i.e., $\bm{p}_{t} = \max \mathcal{B}(r_{t})$. Then, for any initial reference price $r_{1} \in \mathcal{P}$, $(\bm{p}_{t},r_{t})$ converges monotonically to an SNE.
\end{theorem}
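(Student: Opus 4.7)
My approach is to analyze the structure of $\mathcal{B}(r)$ via supermodular-game arguments and then exploit monotonicity of the induced reference-price dynamics. For fixed $r$, observe that $\pi_i(\cdot, p_{-i}, r)$ is strictly concave quadratic in $p_i$ with unconstrained maximizer $(\alpha_i + \delta_i p_{-i} + \gamma_i r)/(2\beta_i)$, so the constrained best response $\psi_i(p_{-i}, r)$ is simply the projection of this affine expression onto $\mathcal{P}$. Since $\partial^2 \pi_i/\partial p_i \partial p_{-i} = \delta_i > 0$ and $\partial^2 \pi_i/\partial p_i \partial r = \gamma_i > 0$, each $\psi_i$ is continuous and weakly increasing in both arguments. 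Hence the joint best-response map $\Phi_r(\bm{p}) := (\psi_1(p_2, r), \psi_2(p_1, r))$ is a continuous, isotone self-map on the complete lattice $\mathcal{P}^2$, and Tarski's fixed point theorem yields that $\mathcal{B}(r)$ (its fixed point set) is a non-empty complete sublattice; in particular the componentwise maximum $\max \mathcal{B}(r)$ is well-defined.

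To upgrade the lattice structure to a total ordering when $|\mathcal{B}(r)| > 1$, take $\bm{p}, \bm{p}' \in \mathcal{B}(r)$ and suppose without loss of generality that $p_1 \geq p_1'$. Monotonicity of $\psi_2$ in its first argument then gives $p_2 = \psi_2(p_1, r) \geq \psi_2(p_1', r) = p_2'$, so $\bm{p} \geq \bm{p}'$ componentwise. Hence no two equilibria in $\mathcal{B}(r)$ are incomparable.

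Next, I would analyze the trajectory $(\bm{p}_t, r_t)$ generated by $\bm{p}_t = \max \mathcal{B}(r_t)$. By monotone comparative statics applied to the parametrized game (since $\Phi_r$ is isotone in $r$ for each fixed $\bm{p}$), the selector $r \mapsto \max \mathcal{B}(r)$ is weakly increasing. Consequently, the scalar reference update $r_{t+1} = T(r_t) := a r_t + (1-a)\bigl[\theta_1 \max \mathcal{B}(r_t)_1 + \theta_2 \max \mathcal{B}(r_t)_2\bigr]$ defines a monotone self-map of the bounded interval $\mathcal{P}$. Since $r_1$ and $T(r_1)$ are real numbers, exactly one of $r_2 \geq r_1$ or $r_2 \leq r_1$ holds; iterating monotonicity of $T$ then shows $\{r_t\}$ is monotone, and boundedness in $\mathcal{P}$ gives $r_t \to r^{\ast}$. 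Monotonicity of the selector forces $\bm{p}_t$ to be monotone componentwise as well, so $\bm{p}_t \to \bm{p}^{\ast}$ for some limit $\bm{p}^{\ast} \in \mathcal{P}^2$.

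Finally, I would verify $(\bm{p}^{\ast}, r^{\ast})$ is an SNE. Passing Equation \eqref{eq:referenceUpdate} to the limit and cancelling $a r^{\ast}$ from both sides yields $r^{\ast} = \theta_1 p_1^{\ast} + \theta_2 p_2^{\ast}$, i.e., the stability condition. Since each iterate satisfies $\bm{p}_t = \Phi_{r_t}(\bm{p}_t)$ and $\Phi$ is jointly continuous, the limit obeys $\bm{p}^{\ast} = \Phi_{r^{\ast}}(\bm{p}^{\ast})$, i.e., $\bm{p}^{\ast} \in \mathcal{B}(r^{\ast})$, which is the best-response condition. The main technical obstacle I anticipate is justifying that the selector $r \mapsto \max \mathcal{B}(r)$ is monotone; in the abstract setting this follows from standard Topkis-style monotone comparative statics, but in our concrete linear setup it can alternatively be verified directly by iterating $\Phi_r$ from the top of $\mathcal{P}^2$ and comparing trajectories for two reference prices $r < r'$, which stay ordered throughout thanks to componentwise monotonicity of $\Phi_{(\cdot)}$ in the parameter and the state.
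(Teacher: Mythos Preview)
Your proposal is correct and follows essentially the same route as the paper: compute the best responses as projections of affine functions, invoke Tarski to get $\mathcal{B}(r)\neq\emptyset$ with a well-defined maximal element, derive the total ordering from monotonicity of $\psi_i$ in $p_{-i}$, show the selector $r\mapsto\max\mathcal{B}(r)$ is increasing so that the reference-price update $T$ is a monotone self-map of $\mathcal{P}$, and pass to the limit via continuity. The only cosmetic difference is packaging---the paper reproves the Tarski step by hand (via $\bm U(r)=\sup\{\bm p:\bm p\le\bm\psi(\bm p,r)\}$) and reads off monotonicity of the selector directly from this supremum characterization, whereas you cite Tarski and Topkis-style comparative statics; both arguments have identical content.
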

The proof of the first half of the result regarding the structural properties of the set of best response profiles $\mathcal{B}(r)$ is inspired by that of Tarski's fixed point theorem (e.g., see \cite{echenique2005short}). The proof of the second half regarding the convergence of the pricing policy  builds on that of Theorem 6 in \cite{milgrom1990rationalizability}. (This theorem shows  the monotonocity of pure-strategy Nash Equilibrium for paramtererized games.) Detailed proofs can be found in Appendix \ref{app:NE}.
Theorem \ref{thm:NEconvergence} illustrates structural properties of SNEs: since $\mathcal{B}(\cdot)$ is an ordered set with total ordering, if there are multiple SNE's, any two SNE's $(\bm{p}_{a}^*, r_{a}^*)$ and $(\bm{p}_{b}^{*}, r_{b}^*)$ must either satisfy $\bm{p}_{a}^* \geq \bm{p}_{b}^*$ or $\bm{p}_{a}^* \leq \bm{p}_{b}^*$ under component-wise comparisons.

Due to the decision set boundaries, there may exist multiple SNE's. %, and any SNE profile $(\bm{p}^*, r^*)$ must satisfy $\bm{p}^{*} \in \mathcal{B}(r^{*})$. Therefore this theorem illustrates the structural properties of SNE:
% Theorem \ref{thm:NEconvergence} provides insights into properties of SNE. By definition, at an SNE characterized by tuple $(\bm{p}^*, r^*)$, we must have $\bm{p}^*\in \mathcal{B}(r^*)$ where $r^*=\theta_1 p_1^*+\theta_2p_2^*$. \negin{Please complete this part by talking about properties of SNEs.}
However, to simplify our analyses, in the rest of the paper we assume that there exists an SNE that lies within the interior of the action set $\mathcal{P}$.
Under this assumption, Lemma \ref{lem:uniqueNE} shows that the interior SNE is unique. 
\begin{assumption}
    \label{assum:FOCinterior} There exists an SNE $(\bm{p}^{*},r^{*})$ such that $(\bm{p}^{*},r^{*}) \in (\low, \high)^{3}$.
\end{assumption}
\begin{lemma}[Uniqueness of SNE]
    \label{lem:uniqueNE}
    Under Assumption \ref{assum:FOCinterior}, there is a unique SNE $(\bm{p} ^*, r^*)\in (\low, \high)^{3}$. 
\end{lemma}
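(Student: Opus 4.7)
The plan is to reduce the SNE conditions in the interior to a linear system in $(p_1^*, p_2^*, r^*)$ and argue invertibility via strict diagonal dominance, which follows from the margin assumption $\beta_i \ge \margin(\delta_i + \gamma_i)$.

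First I would use the interior hypothesis to write down the first-order conditions. Since the revenue $\pi_i(\bm{p},r) = p_i(\alpha_i - \beta_i p_i + \delta_i p_{-i} + \gamma_i r)$ is concave (and quadratic) in $p_i$, any SNE $(\bm{p}^*,r^*)$ with $p_i^* \in (\low,\high)$ must satisfy
\begin{equation*}
\frac{\partial \pi_i}{\partial p_i}(\bm{p}^*,r^*) = \alpha_i - 2\beta_i p_i^* + \delta_i p_{-i}^* + \gamma_i r^* = 0,\qquad i = 1,2,
\end{equation*}
and, by the stability condition, $r^* = \theta_1 p_1^* + \theta_2 p_2^*$.

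Next I would eliminate $r^*$ by substitution into the two FOCs to obtain the $2\times 2$ linear system $A\bm{p}^* = (\alpha_1,\alpha_2)^{\top}$, where
\begin{equation*}
A \;=\; \begin{pmatrix} 2\beta_1 - \gamma_1\theta_1 & -(\delta_1 + \gamma_1\theta_2) \\ -(\delta_2 + \gamma_2\theta_1) & 2\beta_2 - \gamma_2\theta_2 \end{pmatrix}.
\end{equation*}
Uniqueness of an interior SNE reduces to showing $A$ is nonsingular, after which $r^*$ is pinned down by the stability condition.

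The main (and only nontrivial) step is to verify invertibility. I would argue strict diagonal dominance row by row: using $\theta_1 + \theta_2 = 1$,
\begin{equation*}
\bigl(2\beta_i - \gamma_i\theta_i\bigr) - \bigl(\delta_i + \gamma_i\theta_{-i}\bigr) \;=\; 2\beta_i - \delta_i - \gamma_i.
\end{equation*}
Since $\beta_i \ge \margin(\delta_i + \gamma_i)$ with $\margin \ge 2$ (as stipulated in the model description, and more than enough for $\margin > \tfrac{1}{2}$), this quantity is strictly positive for $i = 1,2$. Hence $A$ is strictly diagonally dominant, so $\det A \ne 0$ and $\bm{p}^*$ is the unique solution of the linear system. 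Existence of an interior SNE is guaranteed by Assumption~\ref{assum:FOCinterior}, so combining existence with the uniqueness of the linear solve yields the claim.

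Rather than an obstacle, the only subtlety is to make sure the FOCs are genuinely implied by the best-response condition at an interior point; this follows because $\pi_i(\cdot, p_{-i}^*, r^*)$ is a strictly concave quadratic (its second derivative in $p_i$ is $-2\beta_i < 0$), so an interior maximizer must satisfy the stationarity equation used above. The remainder of the argument is a short linear-algebraic computation.
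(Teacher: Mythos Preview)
Your proposal is correct and is essentially the same as the paper's proof: both write down the interior first-order conditions together with the stability condition, reduce to the same $2\times 2$ linear system, and verify nonsingularity via the inequality $2\beta_i-\gamma_i\theta_i>\delta_i+\gamma_i\theta_{-i}$, i.e., strict diagonal dominance. The only cosmetic difference is that the paper goes on to write out the explicit closed-form solution, whereas you appeal directly to diagonal dominance for invertibility.
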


\section{No-regret Pricing Policies  under Partial Information Setting}
\label{sec:OMD}

%In the previous section, we provided a myopic policy under complete information that 
%guaranties convergence to SNE (Theorem \ref{thm:NEconvergence}).
%In this section, we seek to answer the following  key questions: \textit{can the dynamic system  converge if firms are competing in an opaque environment with partial information? If so, what pricing policies are sufficient for convergence?}

Recall that under partial information, firms 
are unaware of the consumer demand function (they do not know the demand function is linear), reference prices, and reference price update dynamics. Hence, a natural approach for firms to increase revenue is to employ so-called \textit{no-regret online learning} algorithms that adjusts prices in a dynamic fashion.  %Some of the most well known no-regret learning algorithms include follow the regularized leader (FTRL), EXP3, Hedge, etc (see \cite{hoi2018online} for a comprehensive survey). These well-studied no-regret  online algorithms generally fall under the umbrella of regularized learning and are closely related to the general mirror descent paradigm, which will be our main focus.
We study the regime in which firms adopt the general OMD algorithm. We start by the following standard definition.  %introducing  the key component of the OMD algorithms.%: the notion of strong-convexity.% and Bregman divergence (for the $1$-dimensional setting).

\begin{definition}[Strong convexity] 
Let $\mathcal{C} \subset \R$ be a convex set. A function $R: \mathcal{C} \to \R$ is said to be $\sigma$-strongly convex if for any $x,y \in \mathcal{C}$, we have
$ R(x) - R(y) \geq  \frac{d R(y)}{dy} (x-y) + \frac{\sigma^2}{2}(y-x)^{2}$. 
\end{definition}

In the OMD algorithm, each firm $i$ chooses a continuously differentiable and strongly convex regularizer $R_i: \R \to \R$ associated with strong-convexity parameter $\sigma_i$, a sequence of step sizes $\{\epsilon_{i,t}\}_t$, and, for our convenience, minimizes the cost function (i.e. inverse of revenue) $\widetilde{\pi}_i \defeq -\pi_i$, which is convex in $p_i$. Here, we assume each regularizer also satisfies a standard \textit{reciprocity condition} used in optimization and online learning literature \cite{chen1993convergence,kiwiel1997free,alvarez2004hessian}, i.e. whenever $x\to y$ for $x,y \in \R$ we have $\BD_{i}(x,y)\to 0$ where $\BD_{i}$ is the Bregman divergence w.r.t. $R_{i}$.\footnote{The Bregman divergence $\BD: \mathcal{C} \times \mathcal{C} \to \R^+$ associated with convex and continuously differentiable regularizer function $R: \mathcal{C} \to \R$ and convex set $\mathcal{C} \subset \R$ is defined as $ \BD(x,y) \defeq R(x) - R(y) - R'(y) (x-y)$.
} In OMD, each firm $i$ maintains a \emph{proxy variable} $y_{i,t} \in \R$ over time, and in each period $t$, conducts pricing according to the following three steps: 
\begin{enumerate}[leftmargin=*]
    \item Project the proxy variable $y_{i,t}$ back to the decision interval $\mathcal{P} = [\underline{p},\Bar{p}]$: $p_{i,t} = \Pi_{\mathcal{P}}( y_{i,t})$, where $\Pi_{\mathcal{P}}:\R \to \mathcal{P}$ is the projection operator such that $\Pi_{\mathcal{P}}(z) = z \I\{z \in \mathcal{P}\} + \underline{p} \I\{z < \underline{p}\}+\Bar{p} \I\{z > \Bar{p}\}$.
    \item Access the first-order oracle $g_{i,t} \defeq g_{i}(\bm{p}_{t},r_{t})$ defined by  $g_{i}: \mathcal{P}^{3} \to \R$, where 
    \begin{align}
    \label{OMD:gradient}
        g_{i}(\bm{p},r) = {\partial 
        \widetilde{\pi}_i(\bm{p}, r)}/{\partial p_{i}}= 2\beta_i p_{i} - \left(\alpha_{i} + \delta_{i} p_{-i} + \gamma_{i} r \right)\,.
    \end{align}
    This oracle can be viewed as a feedback mechanism that outputs the payoff gradient $\partial \widetilde{\pi}_{i}/\partial p_{i}$ evaluated at a given price profile $\bm{p}$ and reference price $r$.
    We note that the first-order feedback is very common in the optimization and learning literature as discussed in Section \ref{Sec:model}.
    Here, we point out that after a firm posts prices according to the OMD algorithm, it only obtains $g_{i,t}$, and does not necessarily observe the prices of its competitor nor the reference price.\footnote{Firms do not know the linear form of demand, and hence cannot learn parameters and then best respond given parameter estimates.} 
    \item Update proxy variable $y_{i,t+1}$ such that $R_{i}'(y_{i,t+1}) =  R_{i}'(p_{i,t})  - \epsilon_{i,t} g_{i,t},$\footnote{$y_{i,t+1}$ exists when $R_{i}$ is continuously differentiable and convex, see Section 3.3 of \cite{boyd2004convex} or Section 5.2 of \cite{bubeck2011introduction}} where we define $R'_{i}(q) \defeq \frac{dR_{i}(y)}{dy}\Big|_{y=q}$.
\end{enumerate}

We summarize the two-firm OMD pricing scheme in Algorithm \ref{algo:firmOMD}. 

\begin{minipage}{0.45\textwidth}
\begin{algorithm}[H]
    \centering
    \caption{2-firm OMD pricing under reference price updates}\label{algo:firmOMD}
    \footnotesize
    \begin{algorithmic}[1]
    \Require $\{R_{i},\{\epsilon_{i,t}\}_t\}_{i=1,2}$,  $y_{i,1} = \arg\min_{y\in \mathcal{P}}R_{i}(y)$ for $i = 1,2$.
    % $\{R_{i},\{\epsilon_{i,t}\}_t \}_{i=1,2}$.
    %   \State \textbf{Initialize:} $y_{i,1} = \arg\min_{y\in \mathcal{P}}R_{i}(y)$.
        \For {$t= 1, 2,\ldots$}
        \For {$i=1,2$}
        \State Set price: $p_{i,t} = \Pi_{\mathcal{P}}(y_{i,t})$. 
        \State Access gradient $g_{i,t}= g_{i}(\bm{p}_{t},r_{t})$. 
        \State Update proxy variable: 
        $$R_i'(y_{i, t+1})= 
        %\frac{dR_{i}(y_{i,t+1})}{dy} = 
        R_i'(p_{i,t})
        %\frac{dR_{i}(p_{i,t})}{dy}
        - \epsilon_{i,t} g_{i,t}.$$
        % \State Access gradient $g_{i,t+1}\defeq g_{i}(\bm{p}_{t},r_{t})$. 
        \EndFor
        \State Reference price update (unobservable):$r_{t+1} = ar_{t}  + (1-a)\left(\theta_1 p_{1,t} + \theta_2 p_{2,t}\right)$
        % \vspace{-0.2cm}
        % \begin{align*}
        %     \textstyle r_{t+1} = ar_{t}  + (1-a)\left(\theta_1 p_{1,t} + \theta_2 p_{2,t}\right).
        % \end{align*}
        % \vspace{-0.5cm}
        % $$r_{t+1} = ar_{t}  + (1-a)\left(\theta_1 p_{1,t} + \theta_2 p_{2,t}\right).$$
        \EndFor
    \end{algorithmic}
\end{algorithm}
\end{minipage}
\hfill
\begin{minipage}{0.48\textwidth}
\begin{algorithm}[H]
    \centering
    \caption{Induced 3-firm OMD pricing with no reference price}\label{algo:firmOMDinduced}
    \footnotesize
    \begin{algorithmic}[1]
      \Require $\{R_{i},\{\epsilon_{i,t}\}_t \}_{i=1,2,n}$, $y_{n,1} = r_{1}$, $y_{i,1} = \arg\min_{y\in \mathcal{P}}R_{i}(y)$ for $i=1,2$.
    %   \State \textbf{Initialize:} $y_{i,1} = \arg\min_{y\in \mathcal{P}}R_{i}(y)$.
        \For {$t= 1, 2,\ldots$}
        \For {$i=1,2, n$}
        \State Set price: $p_{i,t} = \Pi_{\mathcal{P}}(y_{i,t})$. 
        \State Access gradient $g_{i,t}= g_{i}(\bm{p}_{t},r_{t})$. 
        \State Update proxy variable: 
        $$R_i'(y_{i,t+1}) =
        %\frac{dR_{i}(y_{i,t+1})}{dy} = 
        R_i'(p_{i,t})
        %\frac{dR_{i}(p_{i,t})}{dy}
        - \epsilon_{i,t} g_{i,t}.$$
        % \State Access gradient $g_{i,t+1}\defeq g_{i}(\bm{p}_{t},r_{t})$
        \EndFor
        \EndFor
        \vspace{1.5cm}
    \end{algorithmic}
\end{algorithm}
\end{minipage}
One can think of this sequential price competition with reference prices as a state-based dynamic game model where the reference price plays the role of an underlying state: each player (i.e., firm) has a continuous action space $\mathcal{P}$ and
payoff function $\widetilde{\pi}_{i}$ that depends on all players' actions as well as an underlying state variable $r_{t}$ that undergoes deterministic transitions.  However, the view that we will adopt in the rest of the paper perceives reference prices $r_{t}$ as price decisions $p_{n,t} = r_{t}$ posted by a virtual firm which we refer to as nature and denote it by $n$. This is possible if, for any $\widetilde{\pi}_{i}, R_{i}, \{\epsilon_{i,t}\}_{t}$ ($i=1,2$), we are able to construct a universal nature cost function $\widetilde{\pi}_{n}(p_{1},p_{2}, p_{n})$, strongly convex regularizer $R_{n}:\R \to \R$, and step size sequence $\{\epsilon_{n,t}\}_{t}$, such that when firms 1, 2 and nature independently run the OMD algorithm with their respective regularizers and step sizes (as summarized in Algorithm \ref{algo:firmOMDinduced}), the resulting price profiles $\{p_{1,t},p_{2,t},p_{n,t}\}_{t}$ recover the respective prices $\{\bm{p}_{t},r_{t}\}_{t}$ of Algorithm \ref{algo:firmOMD}. Here, note that $g_{n,t} = g_{n}(p_{1,t},p_{2,t},p_{n,t}) = {\partial \widetilde{\pi}_{n}(p_{1,t},p_{2,t},p_{n,t})}/{\partial p_{n,t}}$. The following Proposition \ref{lem:nature} formalizes this view and shows that such $\widetilde{\pi}_{n}, R_{n}$, and $\epsilon_{n,t}$ indeed exist. The proof is provided in Appendix \ref{app:OMD}, and we will refer to the dynamic game characterized in Algorithm \ref{algo:firmOMDinduced} as the \textit{induced 3-firm dynamic game}.
\begin{proposition}[Induced 3-firm dynamic  game]
\label{lem:nature}
Fix any $\widetilde{\pi}_{i}, R_{i}, \{\epsilon_{i,t}\}_{t}$, $i=1,2$, and initial reference price $r_{1}$. If nature (called firm $n$) is associated with cost function 
$\widetilde{\pi}_{n}(\bm{p},r) = \frac{1}{2} r^2 - \left(\theta_1 p_1 + \theta_2 p_2\right)r$, and 
chooses regularizer $R_{n}(r) = \frac{1}{2}r^{2}$ and step size $\epsilon_{n,t} = 1-a$, for any $t\ge 1$, then the price profiles $\{p_{1,t},p_{2,t},p_{n,t}\}_{t\geq 1}$  resulting from the game in Algorithm \ref{algo:firmOMDinduced} recovers the induced price and reference price trajectory $\{\bm{p}_{t},r_{t}\}_{t\geq 1}$ of Algorithm \ref{algo:firmOMD}.
\end{proposition}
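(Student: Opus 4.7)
The plan is to verify the proposition by induction on the time index $t$, showing that for all $t \geq 1$ the price profiles produced by Algorithm \ref{algo:firmOMDinduced} under the proposed choice of $\widetilde{\pi}_{n}$, $R_{n}$, and $\epsilon_{n,t}$ satisfy $p_{n,t} = r_{t}$ and $p_{i,t}$ (for $i=1,2$) coincides with the corresponding price in Algorithm \ref{algo:firmOMD}. Once this invariant is established, the gradients $g_{i,t} = g_{i}(\bm{p}_{t}, r_{t})$ observed by firms $1, 2$ in the two algorithms are identical, so their OMD updates agree period by period.

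For the base case, note that by construction $y_{n,1} = r_{1} \in \mathcal{P}$, so the projection step yields $p_{n,1} = r_{1}$; meanwhile firms $1, 2$ start from the same $y_{i,1} = \arg\min_{y\in \mathcal{P}} R_{i}(y)$ in both algorithms, so $p_{i,1}$ agrees trivially. For the inductive step, I would compute nature's OMD update explicitly. Since $R_{n}(r) = \tfrac{1}{2}r^{2}$ gives $R_{n}'(r) = r$, and since the gradient of the proposed cost satisfies
\begin{equation*}
g_{n,t} \;=\; \frac{\partial \widetilde{\pi}_{n}(p_{1,t},p_{2,t},p_{n,t})}{\partial p_{n,t}} \;=\; p_{n,t} - \bigl(\theta_{1} p_{1,t} + \theta_{2} p_{2,t}\bigr),
\end{equation*}
the update rule $R_{n}'(y_{n,t+1}) = R_{n}'(p_{n,t}) - \epsilon_{n,t} g_{n,t}$ with $\epsilon_{n,t} = 1-a$ simplifies to
\begin{equation*}
y_{n,t+1} \;=\; a \, p_{n,t} + (1-a)\bigl(\theta_{1} p_{1,t} + \theta_{2} p_{2,t}\bigr).
\end{equation*}
By the inductive hypothesis $p_{n,t} = r_{t}$, and $p_{1,t}, p_{2,t}$ agree with the prices in Algorithm \ref{algo:firmOMD}; substituting gives exactly the reference-price update \eqref{eq:referenceUpdate}, so $y_{n,t+1} = r_{t+1}$.

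The one point that requires care is verifying that the projection in Algorithm \ref{algo:firmOMDinduced} acts as the identity on $y_{n,t+1}$, so that $p_{n,t+1} = y_{n,t+1} = r_{t+1}$. This follows because $y_{n,t+1}$ is a convex combination of $r_{t}, p_{1,t}, p_{2,t} \in \mathcal{P}$ (using $a, \theta_{1}, \theta_{2} \in (0,1)$ with $\theta_{1} + \theta_{2} = 1$), and $\mathcal{P} = [\underline{p}, \overline{p}]$ is convex. This absence of projection clipping is what makes the induced update of nature exactly match the deterministic reference-price dynamic, rather than a projected version of it. With $p_{n,t+1} = r_{t+1}$ established, the gradients $g_{1,t+1}, g_{2,t+1}$ in the two algorithms coincide, so the OMD updates of firms $1, 2$ produce identical $y_{i,t+2}$ and hence identical $p_{i,t+2}$, completing the induction. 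I expect no significant obstacle: the main subtlety is simply identifying the right cost function $\widetilde{\pi}_{n}$ and step size $1-a$ so that the OMD increment replicates the convex combination appearing in \eqref{eq:referenceUpdate}, and confirming that boundedness of $\mathcal{P}$ makes projection vacuous.
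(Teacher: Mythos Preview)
Your proposal is correct and follows essentially the same route as the paper's own proof: both argue by induction on $t$, compute nature's OMD update explicitly using $R_{n}'(r)=r$ and $g_{n,t}=p_{n,t}-(\theta_{1}p_{1,t}+\theta_{2}p_{2,t})$, identify $y_{n,t+1}$ with the reference-price recursion \eqref{eq:referenceUpdate}, and then note that projection is vacuous since $r_{t+1}\in\mathcal{P}$. Your convex-combination justification for the projection step is slightly more explicit than the paper's, but the argument is otherwise identical.
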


We note that the  choices for nature's cost function $\widetilde \pi_n$, regularizer $R_n$ and step sizes $\{\epsilon_{n,t}\}_t$ may not be unique, and in Proposition \ref{lem:nature},  we simply choose the most straightforward feasible candidate.
 Nonetheless, by this lemma, the nature takes constant step sizes $1-a$, which implies that we have an inflexible (virtual) firm whose learning rate is always very fast.

%  The existence of such a firm  prohibits the direct application of methodologies introduced in multi-agent online learning to show convergence results as they require decreasing step sizes (e.g. \cite{mertikopoulos2019learning}); see Section \ref{sec:firstOrder} for further discussions.

 %\negin{Please improve the flow here. The previous sentence does not talk a ``view".}

By viewing reference prices as prices posted by nature, the induced 3-firm game is also associated with the static game that involves 3 players $i=1,2,n$ with respective  payoffs $\{ \widetilde{\pi}_{i}\}_{i=1,2,n}$ and common action set $\mathcal{P}$. It turns out that the \textit{pure strategy Nash Equilibrium (PSNE)} of this static game is unique and is identical to the SNE of Lemma \ref{lem:uniqueNE}:
\begin{proposition}[PSNE of induced 3-firm static game]
\label{lem:OMD:PSNE}
Consider the static game with players $i=1,2$ and nature $n$, who aims to minimize respective costs $\widetilde{\pi}_1, \widetilde{\pi}_2, \widetilde{\pi}_{n}$ with identical action set $\mathcal{P} = [\underline{p},\Bar{p}]$. Then, under Assumption \ref{assum:FOCinterior}, this game admits a unique PSNE $(\bm{p}^{*}, r^{*})$, i.e., $\widetilde{\pi}_{i}(p_{i}^{*}, \bm{p}_{-i}) \geq \widetilde{\pi}_{i}(p_{i}, \bm{p}_{-i}^{*})$ for $\forall p_{i} \in \mathcal{P}$ and $i = 1,2,n$. Furthermore, this PSNE is identical to the interior SNE of Lemma \ref{lem:uniqueNE}.
\end{proposition}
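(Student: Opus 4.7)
The plan is to show that the set of pure-strategy Nash equilibria (PSNE) of the three-firm static game coincides exactly with the set of SNE defined in Section \ref{sec:NE}, and then invoke Lemma \ref{lem:uniqueNE} together with a short boundary-exclusion argument to conclude uniqueness.

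My first step would be to characterize each player's best-response condition individually. For nature, $\widetilde{\pi}_n(\bm{p},r) = \tfrac{1}{2}r^2 - (\theta_1 p_1 + \theta_2 p_2)r$ is strictly convex in $r$, and its unconstrained minimizer $\theta_1 p_1 + \theta_2 p_2$ automatically lies in $[\low,\high]$ since $\theta_1 + \theta_2 = 1$ and $p_1, p_2 \in \mathcal{P}$. Consequently nature's best response is always interior and equals $r^{\ast} = \theta_1 p_1^{\ast} + \theta_2 p_2^{\ast}$, which is precisely the stability condition of an SNE. For each firm $i \in \{1,2\}$, the identity $\widetilde{\pi}_i = -\pi_i$ turns the PSNE condition $\widetilde{\pi}_i(p_i^{\ast}, \bm{p}_{-i}^{\ast}) \le \widetilde{\pi}_i(p, \bm{p}_{-i}^{\ast})$ for all $p \in \mathcal{P}$ directly into the SNE best-response condition $\pi_i(p_i^{\ast}, p_{-i}^{\ast}, r^{\ast}) \ge \pi_i(p, p_{-i}^{\ast}, r^{\ast})$. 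Together, these two observations show that the PSNE of the three-firm static game and the SNE defined in Section \ref{sec:NE} coincide as sets.

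Under Assumption \ref{assum:FOCinterior}, Lemma \ref{lem:uniqueNE} then furnishes a unique interior SNE $(\bm{p}^{\ast}, r^{\ast}) \in (\low,\high)^{3}$, which is therefore an interior PSNE of the three-firm static game and matches the ``furthermore'' claim. To upgrade interior uniqueness to global PSNE uniqueness, I would exploit the diagonal-dominance property noted at the end of Section \ref{Sec:model}. Substituting nature's best response $r = \theta_1 p_1 + \theta_2 p_2$ into firms' interior FOCs $2\beta_i p_i = \alpha_i + \delta_i p_{-i} + \gamma_i r$ produces a linear system in $(p_1, p_2)$ whose coefficient matrix is strictly diagonally dominant under the sensitivity-margin condition $\beta_i \ge m(\delta_i + \gamma_i)$ with $m \ge 2$. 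This diagonal dominance yields strict monotonicity of the stacked best-response operator, so the variational inequality associated with the three-firm game admits a unique solution over $\mathcal{P}^{3}$; hence no boundary PSNE can coexist with the interior one.

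The main obstacle I expect is this last step, i.e., formally excluding boundary PSNE. The cleanest execution is either (i) a direct Rosen-type uniqueness argument for smooth concave games via diagonal strict concavity of the summed payoffs, or (ii) a variational-inequality argument building on strict monotonicity of the best-response operator. Once this is in hand, identification with the interior SNE of Lemma \ref{lem:uniqueNE} closes the proof.
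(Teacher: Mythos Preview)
Your first two steps essentially reproduce the paper's own argument, which simply writes down the three first-order conditions
\[
2\beta_{1}p_{1} = \alpha_{1} + \delta_{1}p_{2} + \gamma_{1}r,\qquad
2\beta_{2}p_{2} = \alpha_{2} + \delta_{2}p_{1} + \gamma_{2}r,\qquad
r = \theta_{1}p_{1} + \theta_{2}p_{2},
\]
solves this linear system, observes that the unique solution coincides with the interior SNE of Lemma~\ref{lem:uniqueNE}, and stops. Your observation that nature's unconstrained minimizer $\theta_{1}p_{1}+\theta_{2}p_{2}$ always lands in $\mathcal{P}$, so that the PSNE conditions collapse exactly to the SNE conditions, is a cleaner packaging of the same computation.

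On boundary exclusion you are in fact being more careful than the paper, which simply asserts that the interior FOC solution ``is the unique PSNE'' without a separate argument. However, the mechanism you propose for this step does not work as stated. Strict diagonal dominance of the Jacobian $J$ does \emph{not} imply strict monotonicity of $\bm{g}$: for an affine map, monotonicity is equivalent to $J+J^{\top}\succ 0$, and Section~\ref{sec:comparison} explicitly notes that $J$ need not be positive definite here, so neither a direct VI-monotonicity argument nor an unweighted Rosen DSC argument will go through. What row diagonal dominance with positive diagonal \emph{does} give you is that $J$ is a $P$-matrix; for an affine VI on a box this yields uniqueness of the solution (classical LCP theory), which is the correct route if you want a rigorous boundary exclusion. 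A weighted Rosen condition with suitably chosen scaling can also be made to work, but the unweighted version you sketch cannot.
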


\section{Convergence Results}
\label{sec:firstOrder}
The key challenge in showing convergence for the induced 3-firm OMD game play in Algorithm  \ref{algo:firmOMDinduced} lies in the fact that the step size sequence for nature is the constant $1-a$, unlike previously studied multi-agent learning settings where step size sequences are typically identical across agents (see for example \cite{scutari2010convex, nagurney2012projected,bravo2018bandit,tampubolon2019pricing, mertikopoulos2019learning}). This highlights the fundamental issue in our problem of interest: \textit{will convergence still occur if one of the players takes a constant (fixed) step size?} 

In Section \ref{sec:decreasingsteps}, we show that prices and reference prices converge to the unique interior SNE  when the two firms adopt decreasing step sizes and characterize the corresponding convergence rate. 
In Section \ref{sec:constantsteps}, we show that there exist constant step sizes for the two firms with which prices  convergence to the SNE at faster rates compared to decreasing step sizes.  %and we show for such step sizes, convergence occurs 
\subsection{Decreasing Step Sizes}
\label{sec:decreasingsteps}
The first key result in this section is the following theorem, which states that if the two firms run the OMD algorithm with decreasing step sizes that do not go to zero too fast, then convergence to the SNE is guarantied.

% \jasoncom{Discuss: 1. why $\lambda$-variational stability does not work? (for the constant step size case, we don't have step size going to 0, for the decreasing step size case, $\lambda_r$ must depend on $t$) 2. Why does coordinate descent not work}
\begin{theorem}[Convergence under Decreasing Step Sizes]\label{thm:OMD:convergenceSufDecStep}
Suppose that Assumption \ref{assum:FOCinterior} holds and firm $i=1,2$ adopts regularizer $R_{i}$ that is $\sigma_{i}$-strongly convex, constinuosly differentiable, and satisfies the reciprocity condition (see Section \ref{sec:OMD}). Then, when the sequence  $\{\epsilon_{i,t} =\epsilon_{t} \}_{t}$ is nonincreasing   with $\lim_{t \to \infty} \epsilon_{t} =  0$, we have  $\lim_{t\rightarrow \infty}(\sum_{i\in[2]}\theta_{i}p_{i,t} -r_t) \to 0$. Furthermore, if $\lim_{T \to \infty}\sum_{t=1}^{T}\epsilon_{t} = \infty$, $\lim_{T \to \infty}\sum_{t=1}^{T} \epsilon_{t}^{2} < \infty $ {and the sensitivity margin $\margin \geq 1$}, then $\left\{\bm{p}_{t}, r_{t}\right\}_{t}$ converges to the unique interior SNE $\left(\bm{p}^{*}, r^{*}\right)$. 
\end{theorem}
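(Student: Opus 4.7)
The plan is to analyze the induced 3-firm OMD dynamics of Algorithm \ref{algo:firmOMDinduced} using a Lyapunov argument anchored at the unique interior SNE $(\bm{p}^{*}, r^{*})$ of Lemma \ref{lem:uniqueNE}. The cornerstone is a weighted strong monotonicity property of the payoff operator $g = (g_1, g_2, g_n)$. Since $g$ is affine with Jacobian
\begin{equation*}
J \;=\; \begin{pmatrix} 2\beta_1 & -\delta_1 & -\gamma_1 \\ -\delta_2 & 2\beta_2 & -\gamma_2 \\ -\theta_1 & -\theta_2 & 1 \end{pmatrix},
\end{equation*}
and since $\beta_i \ge \margin(\delta_i + \gamma_i)$ with $\margin \ge 2$ makes the firm rows strictly diagonally dominant while $\theta_1 + \theta_2 = 1$ makes nature's row weakly dominant, I would construct positive constants $c_1, c_2, c_n$ and $\kappa > 0$ with $C J + J^{\top} C \succeq 2\kappa I$ for $C = \mathrm{diag}(c_1, c_2, c_n)$. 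Combined with the first-order optimality $g_i(\bm{p}^{*}, r^{*}) = 0$ implied by Assumption \ref{assum:FOCinterior}, this gives the weighted monotonicity
\begin{equation*}
\sum_{i \in \{1, 2, n\}} c_i (p_{i,t} - p_i^{*}) g_{i,t} \;\ge\; \kappa \bigl\| (\bm{p}_t, r_t) - (\bm{p}^{*}, r^{*}) \bigr\|^{2}.
\end{equation*}

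Next, for each player I would invoke the standard Bregman contraction implied by $\sigma_i$-strong convexity of $R_i$: $D_{R_i}(p_i^{*}, y_{i,t+1}) \le D_{R_i}(p_i^{*}, y_{i,t}) - \epsilon_{i,t}(p_{i,t} - p_i^{*}) g_{i,t} + \tfrac{\epsilon_{i,t}^{2}}{2\sigma_i^{2}} g_{i,t}^{2}$, with $g_{i,t}$ uniformly bounded by compactness of $\mathcal{P}$. The main obstacle is that nature's step size $\epsilon_{n,t} = 1 - a$ is constant while the firms' step sizes vanish, so simply summing the three inequalities does not expose the weighted monotonicity above. To reconcile them I would work with the time-rescaled energy
\begin{equation*}
V_t \;=\; \sum_{i \in \{1, 2\}} \frac{c_i}{\epsilon_{i,t}}\, D_{R_i}(p_i^{*}, y_{i,t}) + \frac{c_n}{1 - a}\, D_{R_n}(r^{*}, y_{n,t}),
\end{equation*}
so that multiplying each firm's Bregman inequality by $1/\epsilon_{i,t}$ produces a common coefficient $c_i$ on the linear term and brings all three per-player estimates onto the same footing. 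The residual cost is an extra term $\bigl(\tfrac{1}{\epsilon_{i,t+1}} - \tfrac{1}{\epsilon_{i,t}}\bigr) D_{R_i}(p_i^{*}, y_{i,t+1})$ coming from the time-varying weights; because $\{\epsilon_{i,t}\}_t$ is nonincreasing and $D_{R_i}$ is bounded on $\mathcal{P}$, this contribution is nonnegative and controllable. The resulting recursion then takes the form $V_{t+1} - V_t \le -\kappa \bigl\|(\bm{p}_t, r_t) - (\bm{p}^{*}, r^{*})\bigr\|^{2} + E_t$, with $E_t$ collecting quadratic OMD errors of order $\epsilon_{i,t}$ together with the reciprocal step-size increments.

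For part (a), the assumption $\epsilon_{i,t} \to 0$ with $\{\epsilon_{i,t}\}$ nonincreasing is enough to show $\{V_t\}$ is eventually quasi-monotone modulo perturbations that vanish; combined with compactness of $\mathcal{P}^{3}$ and an Opial-type subsequence argument, the Bregman divergences stabilize and $(\bm{p}_t, r_t)$ converges to some limit. For part (b), the Robbins-Monro conditions $\sum_t \epsilon_{i,t}^{2} < \infty$ and $\sum_t \epsilon_{i,t} = \infty$ imply $\sum_t E_t < \infty$, so telescoping yields $\sum_t \bigl\|(\bm{p}_t, r_t) - (\bm{p}^{*}, r^{*})\bigr\|^{2} < \infty$; together with a quantitative lower bound on the drift whenever the iterates stay outside a neighborhood of $(\bm{p}^{*}, r^{*})$, this rules out any sublimit other than the unique interior SNE. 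The hard part is the construction of $V_t$ itself: the inflexibility of nature's constant step size breaks the symmetric multi-agent OMD analyses of \cite{mertikopoulos2019learning,bravo2018bandit}, and carefully balancing the weights $c_i$, the reciprocal step sizes, and the residual errors is the delicate step.
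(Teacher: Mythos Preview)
Your rescaled Lyapunov approach has a genuine gap. After multiplying firm $i$'s Bregman inequality by $c_i/\epsilon_{i,t}$, the error budget $E_t$ contains three pieces that are \emph{not} summable under the Robbins--Monro conditions: (i) firm $i$'s quadratic term becomes $\tfrac{c_i\epsilon_{i,t}}{2\sigma_i^2}g_{i,t}^2 = O(\epsilon_{i,t})$, and $\sum_t\epsilon_{i,t}=\infty$ is precisely one of your hypotheses; (ii) nature's quadratic term, after multiplying by $c_n/(1-a)$, is of constant order $\tfrac{c_n(1-a)}{2}g_{n,t}^2$; (iii) the reciprocal step-size increment $\bigl(\tfrac{1}{\epsilon_{i,t+1}}-\tfrac{1}{\epsilon_{i,t}}\bigr)D_{R_i}(p_i^*,y_{i,t+1})$ is nonnegative (so on the wrong side) and for $\epsilon_{i,t}=1/t$ equals a bounded Bregman divergence, hence $\Theta(1)$ unless you already know $D_{R_i}\to 0$. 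Even if one tries to absorb the quadratic errors into the drift via $g_{i,t}^2\le C\|(\bm p_t,r_t)-(\bm p^*,r^*)\|^2$, the reciprocal increments remain, and for small deviations $\delta$ the per-step drift $-\kappa\delta$ cannot dominate a constant-order residual. So neither the telescoping conclusion $\sum_t\|\cdot\|^2<\infty$ for part (b) nor the quasi-monotonicity claim for part (a) goes through as written.

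The paper takes a different route that never combines all three players in one inequality. For the first assertion it shows directly from $\epsilon_{i,t}g_{i,t}\to 0$ and $\sigma_i$-strong convexity of $R_i$ that $|p_{i,t+1}-p_{i,t}|\to 0$ (handling the projection boundary separately), and then uses the contraction $r_{t+1}-r_t = a(r_t-r_{t-1})+(1-a)\sum_i\theta_i(p_{i,t}-p_{i,t-1})$ to propagate this to $|r_{t+1}-r_t|\to 0$. For the second assertion it argues by contradiction: if the limit $(\tilde{\bm p},\tilde r)\neq(\bm p^*,r^*)$, pick the firm $\ell\in\{1,2\}$ with maximal $|\Delta_\ell|=|p_\ell^*-\tilde p_\ell|>0$. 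Since the limit satisfies $\Delta_n=\theta_1\Delta_1+\theta_2\Delta_2$, the row-wise dominance $2\beta_\ell-\gamma_\ell\theta_\ell>\delta_\ell+\gamma_\ell\theta_{-\ell}$ (from $\margin\ge 2$) yields a constant $c_\eta>0$ with $(g_\ell^*-g_{\ell,t})(p_\ell^*-p_{\ell,t})\ge c_\eta$ for all large $t$. Telescoping only firm $\ell$'s \emph{unrescaled} Bregman recursion $\BDl(p_\ell^*,p_{\ell,t+1})\le \BDl(p_\ell^*,p_{\ell,t})-c_\eta\epsilon_{\ell,t}+O(\epsilon_{\ell,t}^2)$ and using $\sum_t\epsilon_{\ell,t}=\infty$, $\sum_t\epsilon_{\ell,t}^2<\infty$ forces $\BDl$ negative, a contradiction. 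The key point is that this per-coordinate argument never requires putting nature's constant step size and the firms' vanishing step sizes on the same footing, which is exactly where your rescaled energy runs into trouble.
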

The first part of Theorem \ref{thm:OMD:convergenceSufDecStep} shows that prices  {stabilize} when the firms' step sizes go to zero eventually. This is an interesting result because in  the induced 3-firm dynamic game presented in Algorithm \ref{algo:firmOMDinduced}, nature adopts a constant step size and learns quickly, while the two other firms are learning slowly through decreasing step sizes.  {However, firms' prices may not necessarily converge, and even if they do,} firms may have the incentive to deviate, leading to an volatile  market.\footnote{ An example is the extreme case where firms 1 and 2 adopt step sizes $\epsilon_{i,t} = 0$. This obviously guaranties convergence because  prices are fixed at the initial prices, which are likely not the SNE, encouraging the firms to unilaterally deviate.} The second part of the theorem addresses this concern and shows that when $\lim_{T \to \infty}\sum_{t=1}^{T}\epsilon_{i,t} = \infty$ and $\lim_{T \to \infty}\sum_{t=1}^{T} \epsilon_{i,t}^{2} < \infty $, the market becomes stable as the prices converge to the SNE. {In fact, these conditions admit a large range of step sizes, e.g. $\epsilon_{i,t}= \Theta(1/t^{\eta})$ for $\eta \in (\frac{1}{2},1]$.} The proof is provided in Appendix \ref{app:firstOrder}. 
% Intuitively, when $\epsilon_{i,t}$ is very small for all $i,t$ and when $p_{i,t} \in (\underline{p}, \Bar{p})$, then the proxy variable update step in Algorithm \ref{algo:firmOMD} is negligible, so that $p_{i,t+1} = y_{i,t+1} \in (\underline{p}, \Bar{p})$  is very close to $p_{i,t}$, and hence in the long-run prices converge. 
Here, we provide some examples to solidify  the aforementioned ideas. 
\begin{example}[Decreasing Step Sizes]
\label{ex:setting}
Consider the following demand and reference update model parameters: $\bm{\alpha} = (5,6)$, $\bm{\beta} = (2,3)$, $\bm{\delta} = (0.4,0.7)$, $\bm{\gamma} = (0.1,0.5)$, $\theta_{1} = 0.8$, $a = 0.4$, $\mathcal{P} = [1,2]$, and initial prices $(\bm{p}_{1},r_{1}) = (1,1,1.5)$. These parameters admit the unique SNE given by $(\bm{p}^{*},r^{*}) = (1.41, 1.28, 1.39)$. We consider two different decreasing step size sequences when both firms use the quadratic regularizer, i.e. $R_{1}(p) = R_{2}(p) = p^{2}/2$:
\begin{itemize}[leftmargin=*]
\vspace{-0.1cm}
\item %\textbf{Slow Learning Rate.} 
With $\epsilon_{i,t} =0.1/t^{2}$, the price profile eventually converges to the point $(\widetilde{\bm{p}}, \widetilde{r}) = (1.21, 1.18, 1.20)$  which is not the SNE (see Figure \ref{fig:rates}a) and firms are incentivized to deviate, e.g., the best response for firm 1 w.r.t. $\widetilde{p}_{2}= 1.18$ and $\widetilde{r} = 1.20$ is $1.40 \neq \widetilde{p}_{1}$.\footnote{Here, we choose $\epsilon_{i,t} =0.1/t^{2}$ because the gap between the convergence point and the SNE is more visible. For the more natural choice $\epsilon_{i,t} =1/t^{2}$, we obtain similar results.% we still observe convergence for firms' prices and reference price that occurs at some point other than the SNE.
} Hence, under this step size sequence, firms may go through different epochs in the long run, in which firms converge in an epoch, and may decide to deviate and start over.
\vspace{-0.2cm}
\item  %\textbf{Suitable Learning Rate.}
With $\epsilon_{i,t} = 1/t$, we have  $\lim_{T \to \infty}\sum_{t=1}^{T}\epsilon_{i,t} = \infty$ and $\lim_{T \to \infty}\sum_{t=1}^{T} \epsilon_{i,t}^{2} < \infty $. Thus, per Theorem \ref{thm:OMD:convergenceSufDecStep},  prices and reference prices converge to the unique SNE; see Figure \ref{fig:rates}b. Moreover, we observe that (i) convergence occurs very quickly (for $t\ge 20$), and (ii) prices do not converge monotonically. The latter is in contrast with the  pricing policy presented in Theorem \ref{thm:NEconvergence}. %\negin{Can you change the figure for $\epsilon=1/t$ to be consistent with the next theorem?}   %\negin{Highlight the convergence result is very fast; it happens at $t\ge 20$.}
% \item  \textbf{Fast Learning Rate.} With $\epsilon_{i,t} = 1-a$, Figure \ref{fig:constNoConver} shows price profiles do not converge and oscillates in the long-run.
\end{itemize}
\end{example}

In Example \ref{ex:setting}, we observe fast convergence to the SNE when firms choose decreasing step sizes. Inspired by this, we also characterize convergence rates for such step sizes:
\begin{theorem} [Convergence Rate under Decreasing Step Sizes]
\label{thm:OMD:rateDecStep}
Assume Assumption \ref{assum:FOCinterior} holds. For any sensitivity margin $\margin \geq 2$, if both firms adopt regularizer $R_{i}(\var) = \var^{2}$, there exists step sizes  $\epsilon_{i,t} = \Theta(1/t)$ and an absolute constant $\const$, {which depends on $a$ and $\max\{\theta_{1}, \theta_{2}\}$}, such that $\norm{\bm{p}^{*} - \bm{p}_{t}}^{2} \leq \const/t$ for any $t\in \N^{+}$. %\negin{What does the constant depend on. We should at least have a footnote for that. Does it depend on $a$ for instance?}
\end{theorem}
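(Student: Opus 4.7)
My plan is to establish the $\Theta(1/t)$ rate by a Lyapunov analysis of the induced three-firm OMD dynamic of Algorithm \ref{algo:firmOMDinduced}. With the quadratic regularizer $R_i(z)=z^2$, each firm's OMD update collapses to projected gradient descent, while Proposition \ref{lem:nature} says nature's update is the deterministic contraction $r_{t+1}=ar_t+(1-a)(\theta_1 p_{1,t}+\theta_2 p_{2,t})$. Writing $\Delta_{i,t}=p_{i,t}-p_i^*$ for $i=1,2$ and $\Delta_{n,t}=r_t-r^*$, and using $g_i(\bm{p}^*,r^*)=0$ at the unique interior SNE together with $r^*=\theta_1 p_1^*+\theta_2 p_2^*$ (Proposition \ref{lem:OMD:PSNE}), I would first derive the one-step bounds
\begin{equation*}
\Delta_{i,t+1}^2\le\Delta_{i,t}^2-2\epsilon_{i,t}(g_{i,t}-g_i^*)\Delta_{i,t}+\epsilon_{i,t}^2(g_{i,t}-g_i^*)^2
\end{equation*}
from the non-expansiveness of the projection $\Pi_{\mathcal{P}}$, and $\Delta_{n,t+1}^2\le a\Delta_{n,t}^2+(1-a)(\theta_1\Delta_{1,t}^2+\theta_2\Delta_{2,t}^2)$ from Jensen's inequality applied to the convex combination $a+(1-a)\theta_1+(1-a)\theta_2=1$ defining $r_{t+1}$.

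Next, taking $\epsilon_{1,t}=\epsilon_{2,t}=\kappa/t$, I would introduce a weighted Lyapunov function with a \emph{decaying} nature weight,
\begin{equation*}
V_t=w_1\Delta_{1,t}^2+w_2\Delta_{2,t}^2+\lambda_t\Delta_{n,t}^2,\qquad \lambda_t=L\kappa/t,
\end{equation*}
substitute $g_{i,t}-g_i^*=2\beta_i\Delta_{i,t}-\delta_i\Delta_{-i,t}-\gamma_i\Delta_{n,t}$, and bound the cross terms $\delta_i\Delta_{-i,t}\Delta_{i,t}$ and $\gamma_i\Delta_{n,t}\Delta_{i,t}$ by Young's inequality. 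The sensitivity margin assumption $\margin\ge 2$ gives $4\beta_i\ge 8(\delta_i+\gamma_i)$, which supplies the diagonal-dominance slack needed to pick $(w_1,w_2,L,\kappa)$ and a constant $c_1>1$ so that the three per-coordinate coefficients in the bound for $V_{t+1}$ satisfy, for $i=1,2$,
\begin{equation*}
\lambda_{t+1}(1-a)\theta_i+\kappa w_{-i}\delta_{-i}\le\kappa w_i(4\beta_i-\delta_i-\gamma_i-c_1)\quad\text{and}\quad L(1-a)\ge\kappa(w_1\gamma_1+w_2\gamma_2).
\end{equation*}
Together these produce the one-step recursion $V_{t+1}\le\bigl(1-c_1/t\bigr)V_t+c_2/t^2$, where $c_2$ absorbs the $\epsilon_{i,t}^2(g_{i,t}-g_i^*)^2$ remainder, bounded by compactness of $\mathcal{P}$.

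From this recursion, a standard Chung-type induction yields $V_t\le C/t$: if $V_t\le M/t$ then $V_{t+1}\le M/t+(c_2-c_1M)/t^2\le M/(t+1)$ whenever $M\ge c_2/(c_1-1)$, and $c_1>1$ is ensured by taking $\kappa$ large enough. Since $\|\bm{p}_t-\bm{p}^*\|^2\le V_t/\min(w_1,w_2)$, this gives the claimed $\|\bm{p}_t-\bm{p}^*\|^2\le\const/t$, with $\const$ inheriting its announced dependence on $a$ and $\theta_{\max}=\max\{\theta_1,\theta_2\}$ through the admissible values of $L$ and $\kappa$. The main obstacle is precisely the mismatch between nature's constant step size $1-a$ and the firms' vanishing step sizes: standard multi-agent OMD rate proofs (e.g., \cite{mertikopoulos2019learning,bravo2018bandit}) rely on homogeneous step sizes so that a fixed quadratic Lyapunov function works, but here any fixed $\lambda$ would inject an $O(1)$ feedback from $\Delta_{n,t}^2$ into the firm coordinates that the $O(1/t)$ diagonal contraction of the firms cannot absorb. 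Choosing $\lambda_t=\Theta(1/t)$ is the key technical novelty—small enough to downgrade this feedback to $O(1/t)$, yet large enough that nature's geometric contraction factor $a$ still dominates the $O(1/t)$ pressure $\lambda_t$ receives back from the firm terms.
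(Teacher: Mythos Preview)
Your approach is correct and takes a genuinely different route from the paper. The paper keeps the firm and nature dynamics separate: with firm-specific step sizes in the range $\tfrac{10}{(t+1)(4\beta_i-\delta_i)}\le\epsilon_{i,t}\le\tfrac{2}{(t+1)\max\{\delta_i,\gamma_i\}}$ it first derives the recursion $x_{t+1}\le(1-\tfrac{2}{t+1})x_t+\tfrac{1}{t+1}x_{n,t}+\tfrac{c_2}{(t+1)^2}$ for $x_t=\|\bm{p}_t-\bm{p}^*\|^2$, then multiplies by $t(t+1)$ and telescopes. The crux there is bounding $\sum_{\tau\le t}\tau x_{n,\tau}$, which is done by unrolling the nature contraction, inserting the induction hypothesis $x_\tau\le c/\tau$, and invoking a dedicated summation lemma $\sum_{\tau}a^{-\tau}/\tau\le\tfrac{a^{-t}}{(1-a)(t-\rho_a)}$. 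Your device of attaching a decaying weight $\lambda_t=\Theta(1/t)$ to the nature coordinate short-circuits all of this: because nature contracts by the fixed factor $a<1$, the ratio $a\lambda_{t+1}/\lambda_t\to a$, so the per-coordinate contraction on $\Delta_{n,t}^2$ is eventually \emph{much stronger} than $1-c_1/t$ for any $c_1$, and no summation lemma is needed. The paper's route, by contrast, normalizes the coefficients to universal constants via firm-specific step sizes, which makes the dependence on $a,\theta_{\max}$ somewhat more explicit.

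A few points to tighten. First, your two displayed conditions are inconsistent with the stated $\lambda_t=L\kappa/t$: the second one, $L(1-a)\ge\kappa(w_1\gamma_1+w_2\gamma_2)$, matches $\lambda_t=L/t$, and in the first one $\lambda_{t+1}(1-a)\theta_i$ should be its $\Theta(1/t)$-order value, i.e.\ essentially $L(1-a)\theta_i/t$, after which the $\kappa/t$ factors cancel. Second, since you force $\epsilon_{1,t}=\epsilon_{2,t}=\kappa/t$, the weights $w_i$ are not optional: with $w_1=w_2=1$ the bracket $w_i(4\beta_i-\delta_i-\gamma_i)-w_{-i}\delta_{-i}-(w_1\gamma_1+w_2\gamma_2)\theta_i$ can be negative (take $\delta_{-i}\gg\beta_i$, which the model allows), so large $\kappa$ would not give $c_1>1$. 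Choosing $w_i=1/\beta_i$ fixes this under $\margin\ge 2$, since then the bracket is at least $4-4/\margin\ge 2$. Third, the nature-coordinate inequality only holds with $(1-c_1/t)$ for $t\ge t_0$; for $t<t_0$ the violation is at most $Lc_1/t^2$ and, using $\Delta_{n,t}^2\le(\bar p-\underline p)^2$, can be absorbed into $c_2/t^2$, after which the Chung induction runs from $t=1$.
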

\vspace{-0.2cm}
The proof of this theorem constructs a sufficiently large absolute constant $\const$ and shows  $\norm{\bm{p}^{*} - \bm{p}_{t}}^{2} \leq \const/t$ via induction. The main procedure involves bounding $\norm{\bm{p}^{*} - \bm{p}_{t+1}}^{2}$ with $\norm{\bm{p}^{*} - \bm{p}_{t}}^{2}$ and $|r_{t}- r^{*}|$, and developing a tight bound for $\sum_{\tau = 1}^ {t-1} \norm{\bm{p}^{*} - \bm{p}_{\tau}}^{2}$. Bounding $\sum_{\tau = 1}^ {t-1} \norm{\bm{p}^{*} - \bm{p}_{\tau}}^{2}$ helps us bound $|r_t-r^*|$ because 
 the deviations of prices w.r.t. the interior SNE will cumulatively propagate into $|r_{t}- r^{*}|$ due to reference price update dynamics. The detailed proof is provided in Appendix \ref{app:firstOrder}. We also remark that the condition $\margin \geq 2$ is a rather practical regime because this condition, as discussed in Section \ref{Sec:model},  implies a firm's demand is more sensitive to its own prices compared to competitor's prices and surcharge (or discounts) relative to reference prices. {Finally, we remark that the constant $\const$ scales  reasonably  w.r.t. $a$ and $\max\{\theta_{1},\theta_{2}\}$ as long as they are bounded away from $1$; see Figure \ref{fig:rate_constant} in Appendix \ref{app:sec:addFigures} for an illustration for $a\in [0.1, 0.9]$ and $\max\{\theta_1, \theta_2\}\in \{0.5, 0.6, \ldots, 0.9\}$.}
 \begin{comment}
\begin{figure}[H]
	\centering
	\begin{subfigure}{0.25\linewidth}
		\centering
		\includegraphics[width=1\linewidth]{}
		\vspace{-0.5cm}
		\caption{\small $\epsilon_{i,t}= 0.1/t^2$}
		%\label{fig:decNoConver}
	\end{subfigure}
		\begin{subfigure}{0.25\linewidth}
		\centering
		\includegraphics[width=1.0\linewidth]{}
		\vspace{-0.5cm}
		\caption{\small $\epsilon_{i,t}= 1/t$}
		%\label{fig:decConver}
	\end{subfigure}
		\begin{subfigure}{0.24\linewidth}
		\centering
		\includegraphics[width=1.0\linewidth]{}
		\vspace{-0.5cm}
		\caption{\small $\epsilon_{i,t}= 1-a$}
		%\label{fig:constNoConver}
	\end{subfigure}
		\begin{subfigure}{0.24\linewidth}
		\centering
		\includegraphics[width=1.0\linewidth]{}
		\vspace{-0.5cm}
		\caption{\small $\epsilon_{i,t}= (1-a)/\beta_{i}$}
		%\label{fig:constConver}
	\end{subfigure}%
	\caption{Illustration of price and reference price trajectories for Example \ref{ex:setting} under different step size schedules. The y axis represents price levels, as the x axis denotes time.}
	\label{fig:tr10000}
	\vspace{-0.6cm}
\end{figure}
\end{comment}
\vspace{-0.4cm}
\begin{figure}[H]
		\centering
		\includegraphics[width=1.0\linewidth]{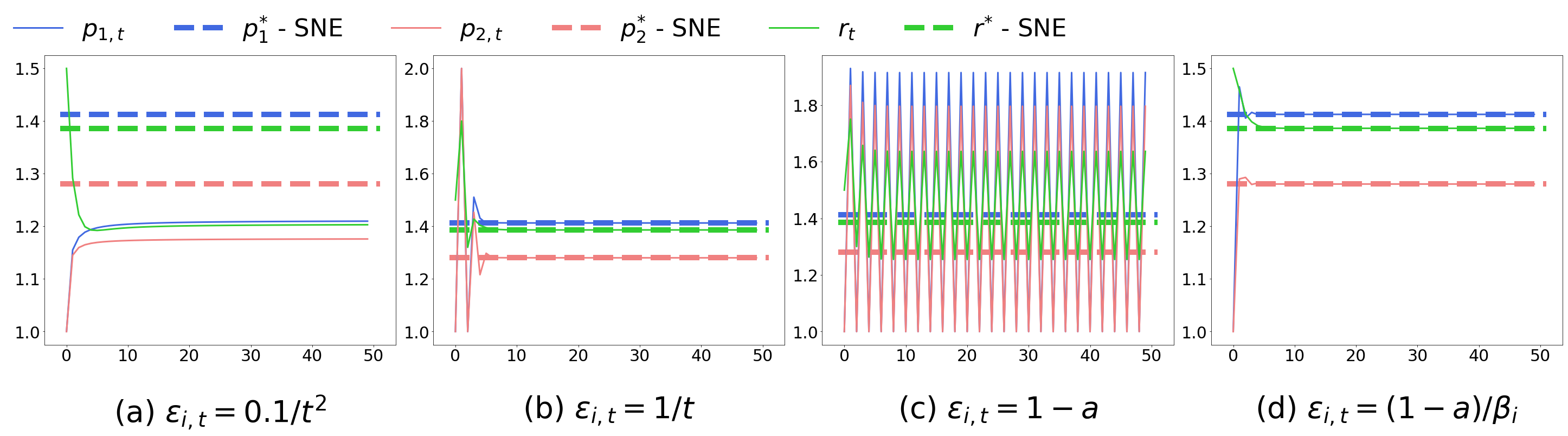}
	\caption{Illustration of price and reference price trajectories in Examples \ref{ex:setting} and \ref{ex:conststeps} under different step size sequences. The y-axis represents price levels, as the x-axis denotes time.}
		\label{fig:rates}
\end{figure}
\vspace{-0.7cm}
\subsection{Constant Step Sizes}
\label{sec:constantsteps}
\vspace{-0.1cm}
%Using the same demand and reference update model parameters in Example \ref{ex:setting}, we further examine whether convergence to SNE still occurs as the two firms adopt constant step sizes. The following examples show that there indeed exist constant step sizes under which prices converge to the SNE.%, while some other constant step sizes make prices oscillate perpetually
We start by revisiting Example \ref{ex:setting} and adopt constant step sizes. 
\begin{example}[Constant Step Sizes]
\label{ex:conststeps}
 Consider the same demand and reference update model parameters in Example \ref{ex:setting}. 
\begin{itemize}[leftmargin=*]
    \item  %\textbf{Fast Learning Rate.} 
  % \jasoncom{This would be very tedious. I'm not sure if showing this would add much value} \negin{Can you show $1-a$ does not satify your conditions?}
    With $\epsilon_{i,t} = 1-a$, Figure \ref{fig:rates}c shows price profiles do not converge and oscillate in the long-run.
    \item %\textbf{Suitable Fast Learning Rate.}
    With $\epsilon_{i,t} = (1-a)/\beta_{i}$, Figure \ref{fig:rates}d shows price profiles converge to the SNE at a faster rate compared to decreasing step sizes in Figure \ref{fig:rates}b. 
\end{itemize}
\end{example}
\vspace{-0.2cm}
Given this example, we present the first main result for this section in the following theorem  (see proof in Appendix \ref{app:firstOrder}) which shows that under some conditions, 
there exists constant step size proportional to $\frac{1-a}{\beta_i}$ under which pricing profiles and reference price convergence to the unique interior SNE.
% when both firms know the memory parameter $a$ and take suitable constant step size sequences that depend on $a$, we can still guarantee pricing profile and reference price convergence to the unique SNE.
% In fact,  we will later show in Corollary \ref{thm:OMD:constExpConvergence}  that under particular scenarios this allows us to achieve much faster convergence compared to step sizes in the order of $\Theta(1/t)$.
%\negin{I would not talk about knowing $a$. I would say there exists constant step sizes under which we converge.}

\begin{theorem}[Sufficient Conditions for Convergence under Constant Step Sizes]\label{thm:OMD:convergenceSuf}
Suppose that  firm $i$ adopts regularizer $R_{i}$ that is $\sigma_{i}$-strongly convex and continuously differentiable. For strong-convexity parameters $\sigma_{1},\sigma_{2}$ and sensitivity margin $\margin$, define the set $\mathcal{S}_{i,\margin}= \left\{\var > 0: f_{i,\margin}(\var)< 0 \right\}$, where 
\begin{align}
\begin{aligned}
\label{eq:OMD:constConvRegion}
    f_{i,\margin}(\var) = \begin{cases}
    \left(4\sigma_{i} +\frac{2\sigma_{-i}
    }{\margin^{2}}\right)\var^{2} - \left( \left(2- \frac{1}{2\margin}\right)\sigma_{i} - \frac{\sigma_{-i}}{2\margin}\right)\var + \frac{3}{4} & i=1,2\\
    \frac{2}{\margin^{2}}\left(\sigma_{1} + \sigma_{2}\right)\var^{2}+ \frac{1}{2\margin}  \left(\sigma_{1} + \sigma_{2}\right)\var -\frac{1}{4} & i = n
    \end{cases}\,.
\end{aligned}
\end{align}
Then, under Assumption \ref{assum:FOCinterior}, if $\cap_{i=1,2,n}\mathcal{S}_{i,\margin}\neq \emptyset$, the step size sequence  $\epsilon_{i,t} = \cone \sigma_{i} \frac{(1-a)}{\beta_{i}}$ ($i=1,2)$ for any $\cone\in \cap_{i=1,2,n}\mathcal{S}_{i,\margin}$ guarantees $\left\{\bm{p}_{t}, r_{t}\right\}_{t}$ converges to the unique interior SNE $\left(\bm{p}^{*}, r^{*}\right)$.
% $\left(\bm{p}_{t}, r_{t}\right) \overset{t\to \infty}{\longrightarrow} \left(\bm{p}^{*}, r^{*}\right)$.  \negin{For convergence results, please use the format you used before. Also, have a title for this theorem }
\end{theorem}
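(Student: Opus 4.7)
My approach is a Lyapunov analysis based on a weighted sum of Bregman divergences across the three (real and virtual) firms. Define
\[
V_t = \sum_{i \in \{1,2,n\}} w_i\, D_{R_i}(p_i^*, p_{i,t}),
\]
where $D_{R_i}(x,y) := R_i(x) - R_i(y) - R_i'(y)(x-y)$ and the weights $w_i > 0$ are calibrated to the step sizes (the natural choice is $w_i \propto 1/\epsilon_{i,t}$, so that $w_1 = \beta_1/(\cone \sigma_1(1-a))$, $w_2 = \beta_2/(\cone\sigma_2(1-a))$, $w_n = 1/(1-a)$). The goal is to exhibit a constant $\rho(\cone) > 0$ such that, whenever $\cone \in \cap_{i=1,2,n}\mathcal{S}_{i,\margin}$, one has $V_{t+1} \leq V_t - \rho(\cone)\norm{v_t}^2$, where $v_t := (p_{1,t}-p_1^*,\, p_{2,t}-p_2^*,\, r_t-r^*)$. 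Strict descent of $V_t$ together with $\sum_t \norm{v_t}^2 < \infty$ then forces $(\bm{p}_t, r_t)\to (\bm{p}^*, r^*)$.

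The starting point is the standard OMD descent lemma, obtained from the three-point identity for $D_{R_i}$ and the Pythagorean property of Bregman projections onto $\mathcal{P}$:
\[
D_{R_i}(p_i^*, p_{i,t+1}) \leq D_{R_i}(p_i^*, p_{i,t}) - \epsilon_{i,t}(p_{i,t}-p_i^*)g_{i,t} + \frac{\epsilon_{i,t}^2}{2\sigma_i} g_{i,t}^2,
\]
valid for each $i \in \{1,2,n\}$ (with $\sigma_n=1$ for $R_n(r)=r^2/2$). Since $(\bm{p}^*, r^*)$ is interior (Assumption \ref{assum:FOCinterior}), Proposition \ref{lem:OMD:PSNE} together with the SNE stability condition yields $g_i(\bm{p}^*, r^*)=0$ for every $i\in\{1,2,n\}$, so that linearity of demand gives the explicit expansion $g_{i,t} = 2\beta_i(p_{i,t}-p_i^*) - \delta_i(p_{-i,t}-p_{-i}^*) - \gamma_i(r_t-r^*)$ for $i=1,2$ and the analogous $g_{n,t} = (r_t-r^*) - \theta_1(p_{1,t}-p_1^*) - \theta_2(p_{2,t}-p_2^*)$ for nature.

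Substituting $\epsilon_{i,t} = \cone\sigma_i(1-a)/\beta_i$ for $i=1,2$ and $\epsilon_{n,t}=1-a$, weighting each descent inequality by $w_i$, and summing produces a quadratic form in $v_t$. The off-diagonal cross terms $(p_{i,t}-p_i^*)(p_{-i,t}-p_{-i}^*)$, $(p_{i,t}-p_i^*)(r_t-r^*)$, and $(p_{-i,t}-p_{-i}^*)(r_t-r^*)$ couple the players through the ratios $\delta_i/\beta_i,\, \gamma_i/\beta_i \leq 1/\margin$ (by $\beta_i \geq \margin(\delta_i+\gamma_i)$) and through $\theta_i \leq 1$. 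I would apply Young's inequality $2ab \leq \margin\, a^2 + b^2/\margin$, with the margin $\margin$ itself as the Young parameter, to rebate each cross term into diagonal squared contributions, and use Cauchy--Schwarz $(x+y+z)^2 \leq 3(x^2+y^2+z^2)$ to control the quadratic $g_{i,t}^2$ residuals. After aggregation, the coefficients of $(p_{1,t}-p_1^*)^2$, $(p_{2,t}-p_2^*)^2$, and $(r_t-r^*)^2$ come out proportional to $f_{1,\margin}(\cone)$, $f_{2,\margin}(\cone)$, and $f_{n,\margin}(\cone)$ from \eqref{eq:OMD:constConvRegion}, respectively. Under $\cone \in \cap_i \mathcal{S}_{i,\margin}$, all three are strictly negative, delivering the required $\rho(\cone) > 0$.

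The main obstacle is the heterogeneity of the step sizes: nature is frozen at $\epsilon_{n,t}=1-a$ by Proposition \ref{lem:nature}, while the firms tune their rates through $\cone$. This is why the $\cone$-independent addend $(1-a)^2 g_{n,t}^2/2$ manifests as the constants $3/4$ and $-1/4$ in \eqref{eq:OMD:constConvRegion}, giving a \emph{lower} bound on how small $\cone$ can be, while the firms' own $\cone^2 \sigma_i$ contributions give an \emph{upper} bound. The delicate bookkeeping that routes each off-diagonal coupling to the correct diagonal term -- so that the polynomial coefficients split cleanly as $4\sigma_i$, $2\sigma_{-i}/\margin^2$, and so on -- depends on a symmetric application of Young's inequality across the pairs (firm $1$, firm $2$) and (firm $i$, nature), together with the weight choice $w_i \propto 1/\epsilon_{i,t}$. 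Verifying that the three coefficients aggregate \emph{precisely} to the polynomials in \eqref{eq:OMD:constConvRegion} is the bulk of the computation; once this is done, the conclusion $(\bm{p}_t, r_t)\to (\bm{p}^*, r^*)$ follows from $V_t \downarrow 0$.
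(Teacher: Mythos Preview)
Your overall strategy---a Bregman Lyapunov function, the OMD one-step inequality, and Young-type bounds on cross terms to obtain a sign-definite quadratic form in $v_t$---is exactly the paper's. However, two of your concrete choices diverge from the paper's proof and would not reproduce the polynomials $f_{i,\margin}$ in \eqref{eq:OMD:constConvRegion}.

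The substantive issue is the weighting. The paper takes \emph{unit} weights, $V_t = \sum_{i\in\{1,2,n\}} D_{R_i}(p_i^*, p_{i,t})$, not $w_i \propto 1/\epsilon_{i,t}$. With unit weights, the first- and second-order contributions from firm $i\in\{1,2\}$ scale as $\epsilon_{i,t}$ and $\epsilon_{i,t}^2$ respectively; after substituting $\epsilon_{i,t} = \cone\sigma_i(1-a)/\beta_i$ these become linear and quadratic in $\cone$, which is precisely the structure of $f_{i,\margin}$. Your weights $w_i = 1/\epsilon_{i,t}$ strip one factor of $\epsilon_{i,t}$ from each term, so the firm contributions become at most \emph{affine} in $\cone$ and cannot be proportional to the quadratics in \eqref{eq:OMD:constConvRegion}. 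Unit weights are also what allow the $\beta_i$ in the step size to cancel against the $2\beta_i$ in the gradient, leaving coefficients that depend only on $\sigma_i$ and $\margin$; your weighting reintroduces raw $\beta_i,\delta_i,\gamma_i$ dependence that the sensitivity-margin bound alone cannot absorb.

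Secondarily, the paper bounds each cross term with the \emph{symmetric} Young inequality $AB \le (A^2+B^2)/2$ and bounds $g_{i,t}^2$ via the asymmetric splitting $(A+B+C)^2 \le 2A^2+4B^2+4C^2$ (with $A$ the own-price term), rather than Young with parameter $\margin$ or Cauchy--Schwarz with constant $3$. Those specific constants are what produce the exact coefficients $4\sigma_i$, $2\sigma_{-i}/\margin^2$, $3/4$, and $-1/4$ appearing in \eqref{eq:OMD:constConvRegion}. So while your skeleton is right, to prove the theorem as stated you should switch to $w_i\equiv 1$ and the paper's particular splittings; otherwise you land on a different (and not obviously comparable) sufficient region.
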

\vspace{-0.1cm}
This theorem indicates that under some conditions on $\margin,\sigma_{1}$, and $\sigma_{2}$, there exist constant step sizes %, which is a scaler of $\frac{1-a}{\beta_i}$,
with which  convergence %the profile $\left(\bm{p}_{t}, r_{t}\right)$ 
to the unique interior SNE is guaranteed. The desired step size is proportional to $\frac{1-a}{\beta_i}$. This, roughly speaking, implies that prices converge to the SNE if firms adjust prices at a pace similar to that of nature. Recall that $1-a$ can be considered as the  step size of nature, and by demand model in Equation (\ref{def:RAdemand}), $\beta_i$ is firm $i$'s price sensitivity parameter. %which intuitively implies that when firms adjust prices at a pace similar to that of nature,  markets will eventually stabilize.
The conditions on $\margin,\sigma_{1}$, and $\sigma_{2}$ in Theorem \ref{thm:OMD:convergenceSuf} are, in fact, quite mild: the following Corollary \ref{cor:OMD:convergenceSuf} provides an example where for any $\margin >2$, we can find sufficiently large $\sigma_{1} = \sigma_{2}$ that guaranties convergence to an SNE.
% \begin{align}
% \label{eq:sigma_0}\sigma_0= \max\left\{\frac{6(2\margin^{2} + 1)}{(2\margin-1)^{2}}, \frac{\left(2\margin^{2} + 7\right)^{2}}{8\margin^{3} - 36\margin + 8} \right\}\,.
% \end{align}
\begin{corollary}[Convergence under Constant Step Sizes]
\label{cor:OMD:convergenceSuf}
 For any sensitivity margin $\margin > 2$, assume both firms adopt continuously differentiable regularizer $R_{i}$ that is $\sigma$-strongly convex where $\sigma > \sigma_0$ and $\sigma_0 \defeq \max\left\{\frac{6(2\margin^{2} + 1)}{(2\margin-1)^{2}}, \frac{\left(2\margin^{2} + 7\right)^{2}}{8\margin^{3} - 36\margin + 8} \right\}$. Then there exists constant $\cone$ dependent on $\margin$ and $\sigma$ so if firm $i\in \{1,2\}$ adopts step size  $\epsilon_{i,t} =\cone \sigma \frac{(1-a)}{\beta_{i}}$,  $\left\{\bm{p}_{t}, r_{t}\right\}_{t}$ converges to the unique interior SNE.
 \end{corollary}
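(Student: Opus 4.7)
The plan is to derive Corollary~\ref{cor:OMD:convergenceSuf} from Theorem~\ref{thm:OMD:convergenceSuf} by showing that under the stated lower bound on $\sigma$, the intersection $\bigcap_{i=1,2,n}\mathcal{S}_{i,\margin}$ is nonempty. Since $\sigma_1=\sigma_2=\sigma$, the two functions $f_{1,\margin}$ and $f_{2,\margin}$ coincide, and the task reduces to exhibiting a single $\var^{*}>0$ at which both
\[
f_{1,\margin}(\var^{*}) \;=\; \frac{2\sigma(2\margin^{2}+1)}{\margin^{2}}(\var^{*})^{2} - \frac{\sigma(2\margin-1)}{\margin}\var^{*} + \frac{3}{4} \;<\; 0
\]
and
\[
f_{n,\margin}(\var^{*}) \;=\; \frac{4\sigma}{\margin^{2}}(\var^{*})^{2} + \frac{\sigma}{\margin}\var^{*} - \frac{1}{4} \;<\; 0
\]
hold simultaneously. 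Theorem~\ref{thm:OMD:convergenceSuf} then yields the desired convergence with step-size constant $\cone = \var^{*}$.

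To construct such a $\var^{*}$, I would search within the one-parameter family $\var_{c} := \frac{c\,\margin}{\sigma(2\margin-1)}$, $c>0$. A direct substitution reduces the two strict inequalities above to the pair of scalar conditions
\[
\sigma \;>\; \frac{2c^{2}(2\margin^{2}+1)}{(c-3/4)(2\margin-1)^{2}} \qquad\text{and}\qquad \sigma \;>\; \frac{16c^{2}}{(2\margin-1-4c)(2\margin-1)}\,,
\]
which together restrict $c$ to the open interval $\bigl(3/4,\,(2\margin-1)/4\bigr)$; this interval is nonempty precisely because $\margin>2$. To minimize the resulting lower bound on $\sigma$, I will equalize the two right-hand sides and solve the resulting rational equation for $c$; after cross-multiplying and collecting terms, the unique positive root is
\[
c^{*} \;=\; \frac{(2\margin-1)(2\margin^{2}+7)}{4(2\margin^{2}+4\margin-1)}\,.
\]
Plugging $c^{*}$ back into either right-hand side and using the factorizations $(2\margin-1)(2\margin^{2}+7)-3(2\margin^{2}+4\margin-1) = 2(\margin-2)(2\margin^{2}+1)$ and $4(2\margin^{2}+4\margin-1)(\margin-2) = 8\margin^{3}-36\margin+8$, both bounds collapse to the single requirement $\sigma > \frac{(2\margin^{2}+7)^{2}}{8\margin^{3}-36\margin+8}$, which is exactly the second term of $\sigma_{0}$. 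The first term $\frac{6(2\margin^{2}+1)}{(2\margin-1)^{2}}$ is the elementary discriminant requirement ensuring $\mathcal{S}_{1,\margin}$ is nonempty on its own (equivalently, that $f_{1,\margin}$ attains strictly negative values). Taking $\sigma>\sigma_{0}$ therefore enforces both, so $\var^{*}:=\var_{c^{*}}$ lies in $\bigcap_{i=1,2,n}\mathcal{S}_{i,\margin}$ and $\cone = \var^{*}$ meets the hypothesis of Theorem~\ref{thm:OMD:convergenceSuf}. Along the way, I would also verify that $c^{*}\in\bigl(3/4,\,(2\margin-1)/4\bigr)$, which reduces to the two elementary inequalities $(\margin-2)(2\margin^{2}+1)>0$ and $2\margin^{2}+7 < 2\margin^{2}+4\margin-1$, both of which hold for any $\margin>2$.

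The main obstacle I anticipate is purely algebraic: recognizing that after substituting $c^{*}$ the numerator and denominator factor neatly as $2(\margin-2)(2\margin^{2}+1)$ and $8\margin^{3}-36\margin+8$, respectively, is what matches the precise statement of $\sigma_{0}$, and these factorizations are not obvious at first sight. Everything else---sign-checking $c^{*}-3/4$ and $(2\margin-1)/4-c^{*}$, expanding $4(2\margin^{2}+4\margin-1)(\margin-2) = 8\margin^{3}-36\margin+8$, and confirming strict (rather than weak) inequalities so that $\var^{*}$ actually lies in the open intersection---is routine bookkeeping.
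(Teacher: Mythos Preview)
Your proposal is correct and complete. The route is genuinely different from the paper's, though both hinge on Theorem~\ref{thm:OMD:convergenceSuf}.

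The paper argues geometrically with roots: it writes the smaller positive root $\var_{1}$ of $f_{1,\margin}$ as $\frac{A}{\sigma}$ and the positive root $\var_{2}$ of $f_{n,\margin}$ as $\frac{B}{\sigma}$, observes that $B-A$ is increasing in $\sigma$, and then verifies (essentially by the same algebra you carry out) that $B=A$ exactly at $\sigma=\frac{(2\margin^{2}+7)^{2}}{8\margin^{3}-36\margin+8}$; hence the open interval $(\var_{1},\var_{2})$ is nonempty for $\sigma$ above that threshold, and any point in it works as $\cone$. Your approach instead fixes a candidate point $\var_{c^{*}}$ from the outset (chosen by equalizing the two induced lower bounds on $\sigma$) and checks directly that both quadratics are negative there once $\sigma$ exceeds the threshold. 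The paper's argument is more qualitative (monotonicity of $B-A$) but leaves the actual identification of the threshold to a ``direct calculation'' black box; yours is fully constructive and produces an explicit $\cone = \frac{\margin(2\margin^{2}+7)}{4\sigma(2\margin^{2}+4\margin-1)}$, at the cost of heavier algebra. A minor bonus of your route: since you exhibit a point where $f_{1,\margin}<0$ directly, the discriminant term $\frac{6(2\margin^{2}+1)}{(2\margin-1)^{2}}$ in $\sigma_{0}$ is actually redundant in your argument (it is implied by the second term for all $\margin>2$), though including it does no harm.
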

% \begin{corollary}[Convergence under Constant Step Sizes]
% \label{cor:OMD:convergenceSuf}
%  For any $\margin\geq 1$, assume both firms adopt regularizer $R_{i}$ that is $\sigma$-strongly convex and continuously differentiable for any $\sigma \geq \sigma_{0} \defeq \frac{9}{1-\left(\frac{5-3\sqrt{2}}{2} \right)^{2}} \approx 24$.
%  Then there exists an absolute constant such that if both firm $i$ adopts step size  $\epsilon_{i,t} = \cone(1-a)\frac{\sigma}{\beta_{i}}$,  $\left\{\bm{p}_{t}, r_{t}\right\}_{t}$ converges to the unique interior SNE $\left(\bm{p}^{*}, r^{*}\right)$
% \end{corollary}
This corollary provides sufficient conditions for the existence of constant step sizes that guarantee convergence to the SNE for any sensitivity margin $\margin > 2$. In fact, for suitable $\margin$, we can possibly find relatively small values of $\sigma_{1}$ and $\sigma_{2}$ such that the conditions are satisfied (e.g., $\sigma_{1} = \sigma_{2} = 4$ for $\margin = 5$). Note that $\sigma_{0} = \Theta(\margin)$ for large $\margin$, which means firms generally need to take larger strong-convexity parameters as $\margin$ increases. (See Figure \ref{fig:sig0K} in Appendix \ref{app:sec:addFigures} for illustration of $\sigma_{0}$ as a function of $\margin$.) Having everything else fixed, the larger $\sigma$, the slower price movements happens. \footnote{{For example, taking $R_{i}(\var) = \sigma \var^{2}$ in step 5 of Algorithm \ref{algo:firmOMD}, we get $y_{i,t+1} = p_{i,t} - \frac{\epsilon_{i,t}g_{i,t}}{\sigma}$, which implies the gap between $y_{i,t+1}$ and $p_{i,t}$ is small with large $\sigma$.}}
% (see the proxy update step (i.e. step 5) of Algorithm \ref{algo:firmOMD} for $R_{i} = \sigma x^{2}$, we get ). 
% \negin{Could you write this step in a footnote?}
This is so because for large $\margin$, a firm's demand is very sensitive to its own prices, encouraging the firm to adjust prices slowly via large $\sigma$.

Moreover, we also characterize the convergence rate when firms adopt suitable constant step sizes via the following Theorem \ref{thm:OMD:constExpConvergence}, and highlight that such fast learning rates give us much faster convergence to the SNE,  compared to slow learning rates from decreasing step sizes.

\begin{theorem}[Convergence Rate for Constant Step Sizes]
\label{thm:OMD:constExpConvergence}
For any sensitivity margin $ \margin> 2$, assume that  both firms use quadratic regularizer $R_{i}(z) = \frac{\sigma z^{2}}{2}$ for any $\sigma > \sigma_0$, where $\sigma_0$ is defined in Corollary \ref{cor:OMD:convergenceSuf}. Then, under Assumption \ref{assum:FOCinterior} there exists constant $\cone > 0$, dependent on $\margin$ and $\sigma$, such that if firm $i=1,2$ adopts step size $\epsilon_{i,t} =\cone\sigma \frac{(1-a)}{\beta_{i}}$ for $t\in \N^{+}$, we have $\norm{\bm{p}^{*} - \bm{p}_{t}}^{2} \leq  \frac{1+2\sigma}{\sigma} \left(\Bar{p} - \underline{p}\right)^{2} \left(\frac{1+a}{2}\right)^{t}$. 
\end{theorem}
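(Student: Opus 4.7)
The plan is to exploit the closed form of OMD under the quadratic regularizer and establish a Lyapunov-type contraction at rate $(1+a)/2$ per step. With $R_i(z)=\sigma z^2/2$, step 5 of Algorithm~\ref{algo:firmOMDinduced} simplifies to $y_{i,t+1}=p_{i,t}-\tfrac{\epsilon_{i,t}}{\sigma}g_{i,t}$, so the prescribed step size $\epsilon_{i,t}=\cone\sigma(1-a)/\beta_{i}$ gives the effective pricing update $p_{i,t+1}=\Pi_{\mathcal{P}}\!\bigl(p_{i,t}-\tfrac{\cone(1-a)}{\beta_{i}}g_{i,t}\bigr)$, while nature reproduces $r_{t+1}=ar_{t}+(1-a)(\theta_1p_{1,t}+\theta_2p_{2,t})$ by Proposition~\ref{lem:nature}. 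I would then linearize these recurrences around the interior SNE and run a potential-function induction.

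First I would invoke Proposition~\ref{lem:OMD:PSNE} to fix the first-order conditions $g_i(\bm{p}^*,r^*)=0$ for $i=1,2$ together with $r^*=\theta_1 p_1^*+\theta_2 p_2^*$. Substituting the affine form \eqref{OMD:gradient} of $g_i$ into the update, subtracting the fixed-point identity, and using non-expansiveness of $\Pi_{\mathcal{P}}$ (valid because $p_i^*\in(\underline p,\bar p)$ by Assumption~\ref{assum:FOCinterior}), the deviations $\Delta p_{i,t}\defeq p_{i,t}-p_i^*$ and $\Delta r_t\defeq r_t-r^*$ satisfy
\begin{align*}
|\Delta p_{i,t+1}| &\leq \bigl|(1-2\cone(1-a))\Delta p_{i,t}+\tfrac{\cone(1-a)\delta_i}{\beta_i}\Delta p_{-i,t}+\tfrac{\cone(1-a)\gamma_i}{\beta_i}\Delta r_t\bigr|,\\
\Delta r_{t+1} &= a\,\Delta r_t+(1-a)(\theta_1\Delta p_{1,t}+\theta_2\Delta p_{2,t}).
\end{align*}

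Next I would introduce a Lyapunov function coupling pricing and reference-price errors, namely $\Phi_t=\sigma\|\bm{p}_t-\bm{p}^*\|^2+(r_t-r^*)^2$, whose initial value obeys $\Phi_1\leq(1+2\sigma)(\bar p-\underline p)^2$ since all initial quantities lie in $[\underline p,\bar p]$; this matches exactly the prefactor appearing in the claim after dividing by $\sigma$. The core of the proof is the one-step contraction
\begin{equation*}
\Phi_{t+1}\leq \tfrac{1+a}{2}\Phi_t,
\end{equation*}
which I would establish by squaring both recursions, applying Young's inequality to absorb the cross terms $\Delta p_{i,t}\Delta p_{-i,t}$ and $\Delta p_{i,t}\Delta r_t$, and then invoking the sign of the quadratic forms $f_{i,\margin}(\cone\sigma)$ from \eqref{eq:OMD:constConvRegion} at the chosen $\cone\in\cap_{i=1,2,n}\mathcal{S}_{i,\margin}$, with $\margin>2$ and $\sigma>\sigma_0$ (Corollary~\ref{cor:OMD:convergenceSuf}) ensuring that the residual quadratic form is nonpositive. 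Iterating the contraction yields $\Phi_t\leq(\tfrac{1+a}{2})^{t}\Phi_1\cdot\tfrac{2}{1+a}$, and extracting the price component via $\|\bm{p}_t-\bm{p}^*\|^2\leq\Phi_t/\sigma$ delivers the stated geometric bound.

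The hard part will be calibrating the Lyapunov weights and the Young's-inequality parameters so that two simultaneous obstructions are overcome: nature's rigid factor $a$ in $\Delta r_{t+1}$ cannot be amplified beyond $(1+a)/2$ by the $(1-a)(\theta_1\Delta p_{1,t}+\theta_2\Delta p_{2,t})$ contribution, and the feedback of $\Delta r_t$ into $\Delta p_{i,t+1}$ through $\gamma_i/\beta_i$ must be absorbed into the self-contraction factor $|1-2\cone(1-a)|$ of each firm. Corollary~\ref{cor:OMD:convergenceSuf} already furnishes strict negativity of the $f_{i,\margin}(\cone\sigma)$, but I expect the exact rate $(1+a)/2$ requires a sharper use of these inequalities; concretely, I would revisit that corollary's proof with an extra nonnegative budget of $\tfrac{1-a}{2}\Phi_t$ subtracted from both sides of the contraction inequality and verify, using $\sigma>\sigma_0$ and $\margin>2$, that the residual inequalities still hold---thereby upgrading the qualitative contraction to the explicit geometric rate claimed.
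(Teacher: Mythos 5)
Your setup coincides with the paper's own proof more than you might expect: for quadratic regularizers the Bregman sum used in the paper is exactly $\tfrac12\bigl(\sigma\|\bm{p}_t-\bm{p}^*\|^2+(r_t-r^*)^2\bigr)$, so your Lyapunov function $\Phi_t$, the linearized recursions with non-expansive projection, the Young's-inequality absorption of cross terms, and the initial bound $\Phi_1\le(1+2\sigma)(\Bar{p}-\underline{p})^2$ all reproduce the recursion the paper inherits from the proof of Theorem \ref{thm:OMD:convergenceSuf} (minor slip aside: the quadratics are evaluated at $\cone$, not $\cone\sigma$). The genuine gap is precisely the step you defer to the last paragraph. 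Picking an arbitrary $\cone\in\cap_{i=1,2,n}\mathcal{S}_{i,\margin}$ only makes each coefficient negative; it does not produce a single factor multiplying $\Phi_t$, let alone the factor $\tfrac{1+a}{2}$. Your concrete plan---subtract a budget of $\tfrac{1-a}{2}\Phi_t$ and check that the residual inequalities survive---amounts to requiring $h_{i,\margin}(\cone)\le-\tfrac14$ for $i=1,2,n$, where $h_{i,\margin}=f_{i,\margin}/(2\sigma)$ for the firms and $h_{n,\margin}=f_{n,\margin}$. For nature this is impossible: $h_{n,\margin}(\var)=\tfrac{4\sigma}{\margin^2}\var^2+\tfrac{\sigma}{\margin}\var-\tfrac14$ equals $-\tfrac14$ at $\var=0$ and is strictly increasing on $\var>0$, so $h_{n,\margin}(\cone)>-\tfrac14$ for every admissible $\cone$, and no amount of slack from $\sigma>\sigma_0$ or $\margin>2$ changes that. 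So the verification you ``expect'' to go through is exactly the point at which the argument, as you describe it, breaks.

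What the paper does instead, and what your proposal is missing, is a balancing choice of the step-size constant: $\cone$ is taken to be the unique intersection point $\widetilde{\cone}$ of the $\sigma$-normalized firm quadratic $h_{1,\margin}=h_{2,\margin}$ with $h_{n,\margin}$ inside the interval where both are negative. With this choice all three deviation terms $\sigma x_{1,t},\sigma x_{2,t},x_{n,t}$ enter the recursion through one common constant $H=h_{i,\margin}(\widetilde{\cone})\in[-\tfrac14,0)$, so the entire Lyapunov function $\sigma x_t+x_{n,t}$ contracts by the single factor $1+2(1-a)H$, which the paper then relates to $\tfrac{1+a}{2}$ via the anchoring $\min_{\var\ge0}h_{n,\margin}(\var)=-\tfrac14$ (note this comparison is exactly the knife-edge quantity your plan would need to push strictly below $-\tfrac14$), and finishes by telescoping and $\sigma x_1+x_{n,1}\le(1+2\sigma)(\Bar{p}-\underline{p})^2$. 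Without this equalization (or some alternative reweighting of $\Phi_t$), your argument only yields contraction at rate $1+2(1-a)\max_{i}h_{i,\margin}(\cone)<1$, i.e.\ geometric convergence with an unspecified rate, not the specific bound claimed in the theorem; hence the proposal as written does not close.
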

% \begin{theorem}[Convergence Rate for Constant Step Sizes]
% \label{thm:OMD:constExpConvergence}
% For any $\margin \geq 1$, there exists $\cone > 0$ such that if both firms adopt the quadratic regularizer $R_{i}(x) = \sigma x^{2}$ for any $\sigma \geq \sigma_{0}$ (defined in Corollary \ref{cor:OMD:convergenceSuf}) and step size $\epsilon_{i,t} = \frac{\cone(1-a)}{\beta_{i}}$, we have $\norm{\bm{p}^{*} - \bm{p}_{t}}^{2} \leq C^{t}$, where $C \in (0,1)$ is some absolute constant that depends on parameters $a, \cone$, and $\sigma$.
% \end{theorem}

\subsection{Comparison with Multi-agent Online Learning}
\label{sec:comparison}

In light of Proposition \ref{lem:OMD:PSNE}, we can characterize the 3-player game consisting of firms and nature with
the mapping  $\bm{g}:\R_{+}^{3} \to \R_{+}^{3}$ s.t. $\bm{g}(\bm{p}) = (\partial \widetilde{\pi}_{i}/\partial p_{i})_{i=1,2,n}$, where we slightly abuse the notation and write $\bm{p} = (p_{1},p_{2},p_{n})$, and $\bm{g}(\bm{p}) = (g_{1}(\bm{p}),g_{2}(\bm{p}),g_{n}(\bm{p}))$. Note that the corresponding Jacobian of $\bm{g}$ is \[J = \begin{pmatrix}2\beta_{1},-\delta_{1},-\gamma_{1}\\ -\delta_{2}, 2\beta_{2},-\gamma_{2}\\-\theta_{1},-\theta_{2}, 1\end{pmatrix}\,,\] which is not necessarily positive definite,\footnote{Square matrix $A \in \R^{d\times d}$ is positive definite if for any $z\in \R^{d}$,  $z^{\top}Az > 0$. Note that $A$ does not need to be symmetric.}  despite being diagonally dominant\footnote{A square matrix $M = \{M_{ij}\}$ is diagonally dominant if $|M_{ii}| \geq \sum_{j\neq i} |M_{ij}|$.} due to our assumptions on model parameters as illustrated in Section \ref{Sec:model}. Also note that $\bm{g}(\bm{p}) = J\bm{p}$ is linear in $\bm{p}$, and Corollary 1.4 in \cite{nagurney2013network}  implies $\bm{g}$ is monotone if and only if $J$ is positive definite. Hence, in our setting, the mapping $\bm{g}$ may not be monotone, which prohibits us from naively applying arguments 
in the variational inequality (VI) framework to conclude convergence of the system as agents run OMD (see \cite{scutari2010convex,nagurney2012projected} for a detailed introduction on convergence to Nash Equilibrium under the VI framework).

Consequently, our proof techniques for Theorems 
\ref{thm:OMD:convergenceSufDecStep},\ref{thm:OMD:rateDecStep}, \ref{thm:OMD:convergenceSuf}, and \ref{thm:OMD:constExpConvergence} are not standard since the aforementioned mapping $\bm{g}$ does not necessarily satisfy monotonicity or other favorable properties that allow direct applications of the VI methodology. Even if we assume $\bm{g}$ is monotonic,  we still face technical issues that arise from heterogeneous step sizes, which provides another motivation to develop new techniques to show convergence as firms run general OMD algorithms.
To briefly illustrate such challenges, assume $\bm{g}$ is monotonic, meaning $\langle \bm{g}(\bm{p}), \bm{p}^{*} - \bm{p} \rangle \leq \langle \bm{g}(\bm{p}^{\star}), \bm{p}^{*} - \bm{p} \rangle = 0 $ for $\forall \bm{p}\in \mathcal{P}^{3}$, where the equality follows from Assumption \ref{assum:FOCinterior} and first order conditions. If one can enforce $\epsilon_{i,t} = \epsilon_{t}$ for $i=1, 2, n$, 
showing the convergence results in  this line of work boils down to verifying the following inequalities (e.g. see \cite{bravo2018bandit,tampubolon2019pricing, mertikopoulos2019learning}):
$$\sum_{i=1,2,n}\BDi(p_i^*, p_{i,t+1})~\overset{(a)}{\leq}~ \sum_{i=1,2,n} \BDi(p_i^*, p_{i,t}) +\epsilon_{t}\langle \bm{g}(\bm{p}_{t}), \bm{p}^{*} - \bm{p}_{t} \rangle +  \epsilon_{t}^{2}\V \overset{(b)}{<} \sum_{i=1,2,n} \BDi(p_i^*, p_{i,t}).$$
where $\V$ can be viewed as some absolute constant, and  $\BDi$ is Bregman divergence w.r.t. strongly convex regularizer $R_{i}$. At a high level, the above equations show that the distance between $p_i^*$ and $p_{i,t}$ becomes smaller over time  and hence implies convergence to the SNE. The inequality (a) follows from classical mirror descent proofs; and inequality (b) utilizes the variational stability condition by choosing suitable $\epsilon_{t}$ (for example $\epsilon_{t} = \Theta(1/t)$).  However, this procedure will not be applicable in our setting as nature is inflexible in the sense that it always takes the constant step size sequence $1-a$, while the two firms are unaware of how nature updates, and may independently use different step sizes (e.g. decreasing step sizes).

\bibliographystyle{informs2014}
\bibliography{reference}
\begin{center}
% \noindent\makebox[\linewidth]{\rule{0.8\paperwidth}{0.9pt}}
\vspace{0.8cm}
    \Large Appendices for\\
    \vspace{0.2cm}
    \Large \textbf{No-regret Learning in Price Competitions under Consumer Reference Effects}
    \noindent\makebox[\linewidth]{\rule{1\linewidth}{0.9pt}}
\end{center}

\section{Expanded Literature Review}
\label{app:litreview}
\textbf{Reference Price Effects and Monopolist Pricing.}  Consumer reference effects have been validated empirically in many works including \cite{tversky1979prospect,tversky1992advances,kalyanaram1995empirical,baron2019data}. This motivated a wide range of research including \cite{kopalle1996asymmetric,fibich2003explicit, popescu2007dynamic,ahn2007pricing, nasiry2011dynamic} that studies optimal dynamic monopolistic pricing under different demand and reference price update models, where the single firm has complete information on consumer demand as well as how reference prices update. 
%Nevertheless, such monopolistic pricing strategies under complete information may be restrictive because 
%from two practical perspectives, namely uncertainty in demand and interaction between multiple firms in competitive markets.
%To resolve the issue regarding uncertainty in demand under reference effects,
 There are also very recent works that address the dynamic pricing problem with consumer reference effects under uncertain demand. \cite{baron2019data} utilizes real retail data and concludes the inclusion of exposure effects to sales or number of consumers\footnote{Exposure effects in reference price formation refer to considering reference prices as a weighted average of all historical prices, where weights depend on factors such as sales or number of consumers.} when considering reference price formations  leads to more accurate forecasts in demand, and proposes a pricing policy using dynamic programming. \cite{den2019dynamic} couples the problem of monopolistic dynamic pricing with reference effects and online demand learning. In our work, similar to  \cite{baron2019data,den2019dynamic}, firms do not know the demand functions and how the reference prices are formed. But, while in \cite{baron2019data,den2019dynamic},  the form of demand model is known to the firm (monopolist) that  aims to estimate model parameters, our work assumes competition between firms that do not know the form of demand and hence run OMD algorithms to increase revenue. Additionally, the algorithms proposed in \cite{baron2019data} and \cite{den2019dynamic} aim to increase revenue from the firm's perspective, while our work focuses on analyzing market stability for long-run competitions under reference effects.

\textbf{Pricing in Competitive Markets without Reference Effects.} A large stream of work studies static price competitions and characterizes structural properties of corresponding equilibria (for example, see \cite{bernstein2004general,gallego2006price,aksoy2013price}). Other works such as \cite{adida2010dynamic,levin2009dynamic,gallego2014dynamic} study oligopolistic dynamic pricing  under various inventory, market, or product characteristics. Nevertheless, these two lines of works are oblivious to consumer reference effects. In this work, we jointly tackle the dynamic pricing problems in competitive markets with reference price effects when the firms lack the knowledge of demand functions and reference price dynamics.

\textbf{Pricing in Competitive Markets with Reference Effects.} Similar to our work, the works of \cite{coulter2014pricing} and  \cite{federgruen2016price} also consider price competitions under reference effects.   \cite{coulter2014pricing} considers a similar linear demand model and an identical reference price update dynamic, but the work only provides theoretical analysis on the two-firm, two-period price competition setting, for which they characterize the unique sub-game perfect Nash Equilibrium. On the other hand, \cite{federgruen2016price} studies multiple-firm single-period price competition equipped with different reference price effects in consumers' demand (e.g. the reference price is specified by the lowest posted price). Additionally, both of these works study the complete information setting. In contrast to these two papers, our work studies price competitions over an infinite time horizon where reference prices adjust over time, and provides theoretical guarantees for the convergence of pricing strategies under the partial information setting. Finally, our work is the first study that provides theoretical analyses on long-term market stability of repeated price competitions in the presence of consumer reference effects.

\textbf{Convergence in Games with Descent Methods.} 
In addition to \cite{mertikopoulos2017convergence,bravo2018bandit,mertikopoulos2019learning} that we discussed in Section \ref{sec:intro}, here we also review related literature that study convergence in games where multiple agents adopt descent methods. \cite{rosen1965existence} studies finding a Nash Equilibrium of concave games via having each agent
run projected gradient descent under complete information, i.e., agents know each others' payoff functions and decision constraints. \cite{nedic2009distributed} studies a distributed network optimization problem to optimize a sum of convex objective functions corresponding to multiple agents. %They show that when all agents run subgradient descent in a cooperative manner by exchanging information locally with other agents in the network, the decision profiles eventually converge to the optimal solution. 
Our paper distinguishes itself from this line of work from two aspects: unlike the two aforementioned works, (i)   our model involves a varying underlying state (i.e., reference prices) dependent on all agents' historical decisions, and can be modeled as a sequence of decisions made by an inflexible virtual agent that adopts descent methods with a constant step size; %Unlike the works we just discussed where agents can either cooperatively make decisions or choose decreasing step sizes, agents in our setting act independently while the inflexible agent in our setting prohibits us from having all agents adopt particular decreasing step sizes;
(ii) the agents (i.e., firms) in our model do not have any information on one another's revenue function or how reference prices update. 
Finally, \cite{balseiro2019learning} considers multiple budget-constrained bidders participating in repeated second price auctions by adopting so-called \textit{adaptive pacing strategies}, which is equivalent to the subgradient descent method. %\footnote{The authors in \cite{balseiro2019learning}  demonstrate that for adaptive pacing strategies, expenditure (i.e., payment) of each agent plays the role of subgradient in subgradient descent.} %The work shows that such bidding strategies eventually constitute an approximate Nash equilibrium in dynamic strategies. 
In their setting, the subgradient for each bidder's objective is a function of all bidders' decisions as well as its budget rate (i.e. total fixed budget divided by a given time horizon), which can be thought of as an underlying model state that remains constant over time.\footnote{Note to run OMD algorithms in  \cite{balseiro2019learning}, agents need to know the length of the time horizon. Such knowledge is not required in our setting. } In contrast, in our setting, the  gradient oracle each firm receives is not only a function of all firms' decisions, but also of the reference price which varies over time according to firms' past decisions, making our analysis more challenging.%   %This makes our convergence analysis much more challenging because if we consider the reference price (i.e. model state) as a parameter in each firm's objective, the gradient oracles received by each firm over time corresponds to different objective functions parameterized by varying reference prices..

%\negin{Santiago's paper has a state too (remaining budget). So, we should be more precise here.}
%\negin{This is also the case for Santiago's paper. My suggestion, compare our work with the first two works and then compare it with Santiago's as some of our claims are not correct/precise. }

\section{Appendix for Section \ref{sec:NE}}
\label{app:NE}
\paragraph{Additional Definitions. } We define the best-response mapping as $\bm{\psi}:\mathcal{P}^{3} \to \mathcal{P}^{2}$ such that $\bm{\psi}(\bm{p},r) = \left(\psi_{1}(p_{2},r), \psi_{2}(p_{1},r) \right)$. Then, we can rewrite the set of best-response profiles w.r.t. reference price $r$, defined in Equation (\ref{eq:NE:bestresponseprfile}), as $\mathcal{B}(r) = \left\{\bm{p} \in \mathcal{P}^{2}: \bm{p} = \bm{\psi}(\bm{p},r)\right\}$. Note that for any SNE $(\bm{p}^{*},r^{*})$, we must have $\bm{p}^{*}\in \mathcal{B}(r^{*})$, and $\bm{p}^{*}$ is a fixed point of the mapping $\bm{\psi}(\cdot,r^{*})$.
\subsection{Proof of Theorem \ref{thm:NEconvergence}}

(i) By first order conditions, we know that \[\arg\max_{p\in \R} \pi_{i}(p,p_{-i},r) =\frac{\alpha_{i} + \delta_{i} p_{-i} + \gamma_{i} r }{2\beta_{i}}\,.\] 
Hence, due to boundary constraints on the decision set $\mathcal{P}$ and the revenue function being quadratic, we have \[\psi_{i}(p_{-i},r) =\arg\max_{p\in \mathcal{P}} \pi_{i}(p,p_{-i},r)= \Pi_{\mathcal{P}}\left(\frac{\alpha_{i} + \delta_{i} p_{-i} + \gamma_{i} r }{2\beta_{i}}\right)\,,\] where $\Pi_{\mathcal{P}}:\R \to \mathcal{P}$ is the projection operator such that $\Pi_{\mathcal{P}}(z) = z \I\{z \in \mathcal{P}\} + \underline{p} \I\{z < \underline{p}\}+\Bar{p} \I\{z > \Bar{p}\}$. Hence, $\psi_{i}(p_{-i},r)$ is a nondecreasing function in $p_{-i}$ and $r$, which further implies $\bm{\psi}(\bm{p},r)$ is nondecreasing in $\bm{p}$ and $r$. Again, recall for any $\bm{x}, \bm{y}$, the relationships $\bm{x} \leq \bm{y}$ and $\bm{y} \leq \bm{x}$ are component-wise comparisons.

We now follow a similar proof to that of Tarski's fixed point theorem: consider the set $\mathcal{B}_{+}(r) = \left\{\bm{p} \in \mathcal{P}^{2}: \bm{p}\leq \bm{\psi}(\bm{p},r) \right\}$. It is apparent that this set is nonempty because $(\underline{p},\underline{p}) \in \mathcal{B}_{+}(r)$. Fix any $\bm{p} \in \mathcal{B}_{+}(r)$. Then, we have $\bm{p}\leq \bm{\psi}(\bm{p},r)$ which further implies $\bm{\psi}(\bm{p},r)\leq \bm{\psi}\left(\bm{\psi}(\bm{p},r), r\right)$ since 
$\bm{\psi}(\bm{p},r)$ is nondecreasing in $\bm{p}$. Hence $\bm{\psi}(\bm{p},r) \in \mathcal{B}_{+}(r)$. By taking $\bm{U}(r) = \sup \mathcal{B}_{+}(r)$ (this is possible since all $\bm{p} \in \mathcal{B}_{+}(r)$ are bounded), we have $\bm{p} \leq \bm{U}(r)$ so  $\bm{p} \leq \bm{\psi}(\bm{p},r)\leq \bm{\psi}(\bm{U}(r),r)$. This further implies $\bm{U}(r)\leq \bm{\psi}(\bm{U}(r),r)$ because $\bm{U}(r)$ is the least upper bound of $\mathcal{B}_{+}(r)$, and thus $\bm{U}(r) \in\mathcal{B}_{+}(r) $. This allows us to conclude $ \bm{\psi}(\bm{U}(r),r) \leq \bm{U}(r)$ and hence $\bm{U}(r) = \bm{\psi}(\bm{U}(r),r)$, which means $\bm{U}(r) = \sup \mathcal{B}_{+}(r)$ is a fixed point of the mapping $\bm{\psi}(\cdot, r)$. Thus,  $\bm{U}(r)$ belongs in the set of best-response profiles $\mathcal{B}(r)$, confirming $\mathcal{B}(r)$ is not empty. 

% q_{i} \I\left\{  q_{i} \in \mathcal{P} \right\} + \underline{p} \I\left\{  q_{i}< \underline{p} \right\} + \Bar{p} \I\left\{  q_{i} > \Bar{p} \right\} $
% where $q_{i} = \frac{\alpha_{i} + \delta_{i} p_{-i} + \gamma_{i} r }{2\beta_{i}}$

Next, we  show that $ \mathcal{B}(r)$ is an ordered set with total ordering if it is not a singleton. To do so,  consider any $\bm{p}, \bm{q} \in \mathcal{B}(r)$ and without loss of generality assume $p_{1} > q_{1}$. Since $p_{1}  = \psi_{1}(p_{2},r)$ and $q_{1} =\psi_{1}(q_{2},r) $, by monotonicity of $\psi_{1}(\cdot,r)$ we have $p_{2} > q_{2}$. Thus, $\bm{p} >  \bm{q}$ and $ \mathcal{B}(r)$ is an ordered set with total ordering. 

(ii) In the proof of (i), we showed that $\bm{U}(r)= \sup \left\{\bm{p} \in \mathcal{P}^{2}: \bm{p}\leq \bm{\psi}(\bm{p},r) \right\}$ is a fixed point of the best-response mapping $\bm{\psi}(\cdot,r)$ for any $r$
which allows us to conclude $\bm{U}(r)$ is the largest best-response profile, i.e., $\bm{U}(r) = \max \mathcal{B}(r)$, and hence $\bm{p}_{t} = \bm{U}(r_{t})$. Furthermore, since $\bm{\psi}(\bm{p},r)$ is increasing in $r$, we know that $\bm{U}(\cdot)= \sup \left\{\bm{p} \in \mathcal{P}^{2}: \bm{p}\leq \bm{\psi}(\bm{p},\cdot) \right\}$ is also an increasing function. In the following, we will argue that the reference prices $r_{t}$ is monotonically increasing or decreasing, which implies  $\bm{p}_{t} = \bm{U}(r_{t})$ is also monotonic, and hence converges since prices and reference prices are bounded.

We write $\bm{U}(r) = \left(U_{1}(r), U_{2}(r)\right)$. At $t=1$, if $\theta_1 p_{1,1}  + \theta_2 p_{2,1} = \theta_1 U_{1}(r_{1})  + \theta_2 U_{2}(r_{1}) \geq r_{1}$, then the reference price at $t=2$ satisfies the following equation  \[r_{2} = ar_{1} + (1-a)\left( \theta_1 p_{1,1}  + \theta_2 p_{2,1}\right) \geq r_{1}\,.\] By the  monotonicity of $\bm{U}(\cdot)$, we have $p_{i,2} = U_{i}(r_{2}) \geq U_{i}(r_{1}) = p_{i,1}$ for $i = 1,2$. Thus, 
\begin{align*}
    r_{3} ~=~ &ar_{2} + (1-a)\left( \theta_1 p_{1,2}  + \theta_2 p_{2,2}\right) \\
~\geq~ & ar_{1} + (1-a)\left( \theta_1 p_{1,1}  + \theta_2 p_{2,1}\right) \\
~=~& r_{2}\,.
\end{align*}
A simple induction argument thus shows $\{r_{t}\}_t$ is a nondecreasing sequence. Since $r_{t} \leq \Bar{p}$ for any $t \in \N$, we know that $\{r_{t}\}_t$ converges to some number $r_{+} \in [\underline{p},\Bar{p}]$ when $\theta_1 p_{1,1}  + \theta_2 p_{2,1} \geq r_{1}$. Furthermore, we observe that $\lim_{t\to \infty}\bm{\psi}(\bm{U}(r_{t}), r_{t}) = \bm{\psi}(\bm{U}(r_{+}), r_{+})$  by the definition of $\bm{\psi}$. Also, from (i) we have $\bm{\psi}(\bm{U}(r_{t}), r_{t}) = \bm{U}(r_{t})$ and $\bm{\psi}(\bm{U}(r_{+}), r_{+}) = \bm{U}(r_{+})$ because $\bm{U}(r)$ is a fixed point of $\bm{\psi}(\cdot, r)$ for any $r$. Hence, $\lim_{t\to \infty}\bm{U}(r_{t}) =\bm{U}(r_{+}) $, which implies $\{\bm{p}_{t} = \bm{U}(r_{t})\}_t$ converges to $\bm{U}(r_{+})$. Note that convergence is monotonic because $\bm{U}(\cdot)$ is nondecreasing. Therefore, 
$$\theta_{1}U_{1}(r_{+}) + \theta_{2}U_{2}(r_{+}) = \lim_{t\to\infty} \theta_{1}U_{1}(r_{t}) + \theta_{2}U_{2}(r_{t}) =  \lim_{t\to\infty} r_{t+1} =  r_{+}\,,$$
which implies $\left(\bm{U}(r_{+}), r_{+} \right)$ is an SNE. We can thus conclude that if $\theta_1 p_{1,1}  + \theta_2 p_{2,1} = \theta_1 U_{1}(r_{1})  + \theta_2 U_{2}(r_{1}) \geq r_{1}$,
firms' prices and reference prices converge monotonically to an SNE $\left(\bm{U}(r_{+}), r_{+} \right)$.

Following a symmetric argument, if $\theta_1 p_{1,1}  + \theta_2 p_{2,1} < r_{1}$, we can show that 
$\{r_{t}\}_t$ is a nonincreasing sequence.  Since  
$r_{t} \geq \underline{p}$ for any $t \in \N$, we know that $\{r_{t}\}_t$ converges to some number $r_{-} \in [ \underline{p},\Bar{p}]$. Similar to the previous arguments, we can conclude  that prices and reference prices converge monotonically to an SNE $\left(\bm{U}(r_{-}), r_{-} \right)$.

\subsection{Proof of Lemma \ref{lem:uniqueNE}}
Let $(\bm{p}^{*}, r^{*}) \in (\underline{p},\Bar{p})^{3}$ be an interior SNE, whose existence is guarantied by Assumption \ref{assum:FOCinterior}. Since revenue functions are quadratic, first order conditions at the interior best-response profiles should hold, which means the derivative of revenue functions at the interior best-responses $p_{1}^{*} = \psi_{1}(p_{2}^{*}, r^{*})$ and $p_{2}^{*} = \psi_{2}(p_{1}^{*}, r^{*})$ should be 0:
$$\frac{\partial \pi_{1}(\bm{p}^{*}, r^{*})}{\partial p_1}= \frac{\partial \pi_{2}(\bm{p}^{*}, r^{*})}{\partial p_2} = 0\,,$$ 
which leads to the relationship $\alpha_{1} - 2\beta_{1} \psi_{1}(p_{2}^{*},r^{*}) + \delta_{1} p_{2}^{*} + \gamma_{1} r^{*} = \alpha_{2} - 2\beta_{2} \psi_{2}(p_{1}^{*},r^{*}) + \delta_{2} p_{1}^{*}+ \gamma_{2} r^{*} = 0$. Solving for the best-response equations, we get
\begin{align}
\label{eq:bestresponse}
 p_{1}^{*} = \psi_{1}(p_{2}^{*},r^{*}) = \frac{\alpha_{1} + \delta_{1} p_{2}^{*} + \gamma_{1} r^{*} }{2\beta_{1}} ~~~~,~~~~
p_{2}^{*} = \psi_{2}(p_{1}^{*},r^{*}) = \frac{\alpha_{2} + \delta_{2} p_{1}^{*} + \gamma_{2} r^{*} }{2\beta_{2}} \,.
\end{align}
Finally, the definition of an SNE guaranties  $r^{*} = \theta_{1} p_{1}^{*} + \theta_{2} p_{2}^{*} $. Thus, solving for $(\bm{p}^{*}, r^{*})$, we obtain the unique solution
 \begin{align}
 \begin{aligned}
 \label{eq:SNE}
     & p_{i}^{*} = \frac{ 2\alpha_{i} \beta_{-i} - \alpha_{i}\theta_{-i} \gamma_{-i} + \alpha_{-i}\left(\delta_{i} +\theta_{-i}\gamma_{i}\right)}{\left(2\beta_1 - \theta_1 \gamma_1\right)\left(2\beta_2 - \theta_2 \gamma_2\right) - \left(\theta_2\gamma_1 +\delta_1\right)\left(\theta_1\gamma_2 +\delta_2\right)} \quad i=1,2 \\
  & r^{*}= \frac{\theta_1\left(2\alpha_1\beta_2 + \alpha_2\delta_1\right)+\theta_2\left(2\alpha_2\beta_1 + \alpha_1\delta_2\right)}{\left(2\beta_1 - \theta_1 \gamma_1\right)\left(2\beta_2 - \theta_2 \gamma_2\right) - \left(\theta_2\gamma_1 +\delta_1\right)\left(\theta_1\gamma_2 +\delta_2\right)}\,.
  \end{aligned}
 \end{align}
 This implies that under Assumption \ref{assum:FOCinterior}, the interior SNE is unique. We remark that for any $i=1,2$, because $\beta_{i} \geq \margin(\delta_{i} + \gamma_{i})>0$ and $\margin \geq 2 > 1$ we have 
 $2\beta_{i} - \theta_{i} \gamma_{i} >\beta_{i} - \theta_{i} \gamma_{i}  > \delta_{i} + \gamma_{i}  - \theta_{i} \gamma_{i} = \theta_{-i} \gamma_{i} + \delta_{i}$. Hence, $p_{i}^{*} , r^{*} > 0$.

\section{Appendix for Section \ref{sec:OMD}}
\label{app:OMD}
\subsection{Proof of Proposition  \ref{lem:nature}}
First of all, it is easy to see prices at the first period are identical between Algorithm \ref{algo:firmOMD} and \ref{algo:firmOMDinduced}: $p_{i,1} = \arg\max_{p\in \mathcal{P}}R_{i}$ for $i=1,2$ and $p_{n,1} = r_{1}$. We now use induction to show price trajectories of the two algorithms are identical via considering the induction hypothesis that prices and reference prices are the same up to period $t \in \N^+$. 

 Note that $R_{n}(\var) = \frac{1}{2} z^{2}$ implies $R_{n}'(\var) = \var$. Then, the proxy variable update step for nature is 
\begin{align*}
    y_{n,t+1} ~=~ & p_{n,t} - (1-a) \frac{\partial \widetilde{\pi}_{n}(\bm{p})}{\partial p_{n}}\Big|_{\bm{p} = p_{1,t},p_{2,t},p_{n,t}} \\
    ~=~ & p_{n,t} - (1-a)\left(p_{n,t}  - \theta_1 p_{1,t} - \theta_2 p_{2,t}\right) \\
    ~=~ & ar_{t} + (1-a)\left(\theta_1 p_{1,t} + \theta_2 p_{2,t}\right)\\
    ~=~ &  r_{t+1}.
\end{align*}
Since $y_{n,t+1} = r_{t+1} \in \mathcal{P} $ the projection step for nature is trivial, which means $p_{n,t+1} = y_{n,t+1} = r_{t+1}$. Furthermore, it is not difficult to see that prices $p_{1,t+1} = \Pi_{p\in \mathcal{P}}(y_{1,t+1})$ and $p_{2,t+1} =\Pi_{p\in \mathcal{P}}(y_{2,t+1})$ are identical between the two algorithms under the induction hypothesis.  This implies that Algorithm \ref{algo:firmOMDinduced} indeed recovers the prices and reference prices produced by Algorithm \ref{algo:firmOMD}. 

\subsection{Proof of Proposition  \ref{lem:OMD:PSNE}}
Directly considering first order conditions for the  cost functions $\{ \widetilde{\pi}_{i}\}_{i=1,2,n}$, we have the system of equations
\begin{align*}
    & 0 = \frac{\partial \widetilde{\pi}_{1}(p_{1},p_{2},p_{n})}{\partial p_{1}} = 2\beta_{1}p_{1} - \left(\alpha_{1} + \delta_{1}p_{2} + \gamma_{1}r \right)\\
    & 0 = \frac{\partial \widetilde{\pi}_{2}(p_{1},p_{2},p_{n})}{\partial p_{2}} = 2\beta_{2}p_{2} - \left(\alpha_{2} + \delta_{2}p_{1} + \gamma_{2}r \right)\\
     &0 = \frac{\partial \widetilde{\pi}_{n}(p_{1},p_{2},p_{n})}{\partial p_{n}} =  p_{n} - \left( \theta_{1} p_{1} + \theta_{2} p_{2} \right)\,.
\end{align*}
Solving these equations results in a unique solution that is identical to that in Equation (\ref{eq:SNE}), which is the unique interior SNE according to Lemma \ref{lem:uniqueNE}. Since the SNE is an interior point of $(\underline{p},\Bar{p})^3$, it is the unique PSNE of the induced static 3-firm game.

\section{Appendix for Section \ref{sec:firstOrder}}
\label{app:firstOrder}

\subsection{Additional Definitions}
\begin{definition}[Bregman Divergence]
The Bregman divergence $ \BD: \mathcal{C} \times \mathcal{C} \to \R^+$ associated with convex set $\mathcal{C}\subset \R$, and convex and continuously differentiable function $R: \mathcal{C} \to \R$ is defined as \[  \BD(x,y) \defeq R(x) - R(y) - R'(y) (x-y) \geq 0\,,\] where the inequality follows from convexity of $R$. Furthermore, if $R$ is $\sigma$-strongly convex, then $ \BD(x,y) \geq \frac{\sigma^2}{2}(x-y)^{2}$. 
\end{definition}
Note that $\BDi$ is the Bregman divergence associated with regularizer $R_i$ used by firm $i=1,2$, and $\BDn$ is Bregman divergence associated with regularizer $R_n$ used by nature.

\begin{definition}
Let $g_i^{*}$ be the partial derivative of the cost function $\widetilde{\pi}_{i}$ w.r.t. $p_{i}$ evaluated at the interior SNE $(\bm{p}^{*},r^{*})$, i.e. for $i=1,2,n$
\begin{align*}
     g_i^* = \frac{\partial \widetilde{\pi}_{i}(p_{1},p_{2},p_{n})}{\partial p_i}\Big|_{p_{1} = p_{1}^{*}, p_{2} = p_{2}^{*}, p_{n} = r^{*}}\,.
\end{align*}
\end{definition}

\subsection{Proof for Theorem \ref{thm:OMD:convergenceSufDecStep}}

The proof of this theorem is divided into two parts. In the first part, we show that the price profiles $\left(\bm{p}_{t}, r_{t}\right)$ converge  as $t \to \infty$ under the condition $\lim_{t \to \infty} \epsilon_{i,t} =  0$, $i\in \{1,2\}$. In the second part, under the additional conditions $\lim_{T \to \infty}\sum_{t=1}^{T}\epsilon_{i,t} = \infty$ and $\lim_{T \to \infty}\sum_{t=1}^{T}\epsilon_{i,t}^{2} < \infty $, we show that the price profiles converge to the unique interior SNE.

\paragraph{First part: Convergence of prices and reference prices.} Recall that  $g_{i,t} = g_{i}(p_{i,t},r_{t}) = 2\beta_i p_{i,t} - \left(\alpha_i + \delta_i p_{-i,t} + \gamma_i r_{t} \right)$,  and $p_{i,t}$ and $p_{-i,t}  $ are both bounded. Hence, because for $i = 1,2$,  $\lim_{t \to \infty} \epsilon_{i,t} =  0$ and $\{ \epsilon_{i,t}\}_{t}$ is nonincreasing, we have for any small $\epsilon > 0$ there exist $ \te \in \N$ such that $|\epsilon_{i,t} g_{i,t}|\leq \frac{\sigma_{i}^{2}\epsilon}{6}$ for all $t \geq \te$. Our goal is to show that for $t\ge \te$, $|p_{i, t+1} - p_{i,t}|$ is small.

For $t \geq \te$,
\begin{align}
\label{eq:OMD:convergenceSufDecStep1}
    \left| p_{i,t+1} -p_{i,t} \right| ~\leq~ &  \left| p_{i,t+1} -y_{i,t+1} \right| + \left| y_{i,t+1} -p_{i,t} \right|
    ~\overset{(a)}{\leq}~  \left| p_{i,t+1} -y_{i,t+1} \right| + \frac{\epsilon}{6}\,. %\\
    % ~=~ &  \left| p_{i,t+1} -y_{i,t+1} \right|\left(\I\{y_{i,t+1} < \underline{p}\}+ \I\{y_{i,t+1}\in \mathcal{P}\} + \I\{y_{i,t+1} > \Bar{p}\} \right) + \frac{\epsilon}{12} \nonumber \\
    %  ~\overset{(b)}{=}~ &  \left| p_{i,t+1} -y_{i,t+1} \right|\left(\I\{y_{i,t+1} < \underline{p}\}+ \I\{y_{i,t+1} > \Bar{p}\} \right) + \frac{\epsilon}{12} 
\end{align}
To see why inequality (a) holds recall that  $y_{i,t+1}$ is the proxy variable in Step 5 of Algorithm \ref{algo:firmOMD} such that
% recall that  $y_{i,t+1}$ is the proxy variable in Step 5 of Algorithm \ref{algo:firmOMD} such that \negin{$\frac{dR_{i}(y_{i,t+1})}{dy}$ is confusing as it is not obvious why $y$ is. Can you use prime $'$ instead to show derivative? } 
$R_{i}'(y_{i,t+1}) - R_{i}'(p_{i,t}) = \epsilon_{i,t} g_{i,t}$. Hence,
% $\frac{dR_{i}(y_{i,t+1})}{dy}- \frac{dR_{i}(p_{i,t})}{dy}= \epsilon_{i,t} g_{i,t} $, we have 
\begin{align*}
    \sigma_{i}^{2}\left|y_{i,t+1} -  p_{i,t}\right|{\leq}\left|R_{i}'(y_{i,t+1}) - R_{i}'(p_{i,t}) \right| =  \left|\epsilon_{i,t} g_{i,t}\right| \leq \frac{\sigma_{i}^{2}\epsilon}{6}, \quad t> \te, i = 1,2\,,
\end{align*}
which implies that $\left| y_{i,t+1} -p_{i,t} \right|\le \frac{\epsilon}{6}$, as desired. 
Here,  the first inequality holds because:
\begin{align*}  
\sigma_{i}^{2}\left( y_{i,t+1} -  p_{i,t}\right)^{2}& ~\overset{(a)}{\leq}~ \left(R_{i}'(y_{i,t+1}) - R_{i}'(p_{i,t})\right)\left( y_{i,t+1} -  p_{i,t}\right) \\
&~\leq~ \left|R_{i}'(y_{i,t+1}) - R_{i}'(p_{i,t})\right| \cdot \left|y_{i,t+1} -  p_{i,t} \right|\,,
\end{align*}
where (a) follows from summing up $R_{i}(y_{i,t+1}) - R_{i}(p_{i,t}) \geq R_{i}'(p_{i,t})(y_{i,t+1} - p_{i,t}) + \frac{\sigma_{i}^{2}}{2}\left( y_{i,t+1} -  p_{i,t}\right)^{2}$ and $R_{i}(p_{i,t}) - R_{i}(y_{i,t+1}) \geq R_{i}'(y_{i,t+1})(p_{i,t} - y_{i,t+1}) + \frac{\sigma_{i}^{2}}{2}\left( y_{i,t+1} -  p_{i,t}\right)^{2} $ due to strong convexity.

By Equation \eqref{eq:OMD:convergenceSufDecStep1}, for $t \geq \te$,
\begin{align}
    \left| p_{i,t+1} -p_{i,t} \right| 
     ~\le~ &  \left| p_{i,t+1} -y_{i,t+1} \right|\left(\I\{y_{i,t+1} < \underline{p}\}+ \I\{y_{i,t+1}\in \mathcal{P}\} + \I\{y_{i,t+1} > \Bar{p}\} \right) + \frac{\epsilon}{6} \nonumber \\
      ~{=}~ &  \left| p_{i,t+1} -y_{i,t+1} \right|\left(\I\{y_{i,t+1} < \underline{p}\}+ \I\{y_{i,t+1} > \Bar{p}\} \right) + \frac{\epsilon}{6}\,,
\end{align}
where the equality holds because under the event $y_{i,t+1}\in \mathcal{P}$, no projection occurs and hence, $y_{i,t+1} = p_{i,t+1}$. In the first of the proof, we bound the first two terms in the right hand side, i.e., $\left| p_{i,t+1} -y_{i,t+1} \right|\I\{y_{i,t+1} < \underline{p}\}$ and  $\left| p_{i,t+1} -y_{i,t+1} \right|\I\{y_{i,t+1} > \Bar{p}\}$.

To bound $\left| p_{i,t+1} -y_{i,t+1} \right|\I\{y_{i,t+1} < \underline{p}\}$, similar to Equation (\ref{eq:OMD:convergenceSufDecStep1}) we use $|p_{i,t}  - y_{i,t+1}| \leq \frac{\epsilon}{6}$ for $t \geq \te$ which implies $p_{i,t}  - y_{i,t+1} \leq \frac{\epsilon}{6}$. Thus,
\begin{align}
\label{eq:OMD:convergenceSufDecStep1.1}
 y_{i,t+1} \geq p_{i,t} - \frac{\epsilon}{6} \overset{(a)}{\geq} \underline{p} - \frac{\epsilon}{6}\,.
\end{align}
where (a) holds because $p_{i,t}  \geq \underline{p}$ for any $i,t$. On the other hand,  under the event $y_{i,t+1} < \underline{p}$, projection occurs and therefore we have $p_{i,t+1} = \underline{p}$. 

This yields
\begin{align}
    \left| p_{i,t+1} -y_{i,t+1} \right|\I\{y_{i,t+1} < \underline{p}\} = \left( \underline{p} - y_{i,t+1}\right)\I\{y_{i,t+1} < \underline{p}\} \overset{(a)}{\leq}\left( \underline{p} - \underline{p} + \frac{\epsilon}{6}\right) =  \frac{\epsilon}{6}\,,
\end{align}
where (a) follows from Equation (\ref{eq:OMD:convergenceSufDecStep1.1}).

Using a similar argument as above to bound $\left| p_{i,t+1} -y_{i,t+1} \right|\I\{y_{i,t+1}> \high\}$, we have $ \left| p_{i,t+1} -y_{i,t+1} \right|\I\{y_{i,t+1} >  \Bar{p}\}\leq \frac{\epsilon}{6}$ under the event $y_{i,t+1} > \Bar{p}$. 

Hence, plugging these upper bounds back into Equation (\ref{eq:OMD:convergenceSufDecStep1}), we can show that  for any $\epsilon > 0$ and $ t \geq \te$
\begin{align}
\label{eq:OMD:convergenceSufDecStep2}
    \left| p_{i,t+1} -p_{i,t} \right| ~\leq~ \frac{\epsilon}{6} + \frac{\epsilon}{6}+\frac{\epsilon}{6}~=~ \frac{\epsilon}{2}, \quad  i = 1,2 \,.
\end{align}
Now, for any  $t \geq \te$ we have
\begin{align}
    \left|r_{t+1} - \sum_{i=1,2}\theta_{i}p_{i,t+1}\right| ~=~ & \left|ar_{t} - (1-a)\sum_{i=1,2}\theta_{i}p_{i,t}- \sum_{i=1,2}\theta_{i}p_{i,t+1}\right|\\
    ~\leq~ & a\left|r_{t} -\sum_{i=1,2}\theta_{i}p_{i,t}\right| + \sum_{i=1,2}\theta_{i}\left|p_{i,t}- p_{i,t+1}\right|\\
     ~\leq~ & a\left|r_{t} -\sum_{i=1,2}\theta_{i}p_{i,t}\right| + \frac{\epsilon}{2}\,,
\end{align}
where the final inequality follows from Equation (\ref{eq:OMD:convergenceSufDecStep2}). Telescoping from $t$ down to $\te$, we have 
\begin{align*}
    \left|r_{t+1} - \sum_{i=1,2}\theta_{i}p_{i,t+1}\right| ~\leq~ &  a^{t-\te+1}\left|r_{\te} -\sum_{i=1,2}\theta_{i}p_{i,\te}\right|  + \frac{\epsilon}{2}\sum_{\tau = \te }^{t}a^{\tau-\te}\\
    ~\leq~ &  a^{t-\te+1}\left|r_{\te} -\sum_{i=1,2}\theta_{i}p_{i,\te}\right|  + \frac{\epsilon}{2(1-a)}\\
     ~\leq~ &  a^{t-\te+1}(\Bar{p}-\underline{p})  + \frac{\epsilon}{2(1-a)}
    \,.
\end{align*}
Letting $t \to \infty$ and $\epsilon \to 0$ concludes $\sum_{i=1,2}\theta_{i}p_{i,t} \to r_{t}$ for $t \to \infty$.

\paragraph{Second part: Convergence to the SNE.} The proof of this part is inspired by the proof of Theorem 4.6 in \cite{mertikopoulos2019learning}. However, in that proof, they rely on the Nash Equilibria of the game being \textit{variationally stable}, or more strictly speaking, the gradient of the virtual 3-player game (consisting of firms and nature) $\bm{g}:\R_{+}^{3} \to \R_{+}^{3}$ s.t. $\bm{g}(\bm{p}) = (\partial \widetilde{\pi}_{i}/\partial p_{i})_{i=1,2,n}$ to be a monotone mapping. In this proof, we do not rely on such structural assumption for the virtual 3-player game (see discussion in Section \ref{sec:comparison}).

The proof of this theorem is split into two steps. First, we show that for any $\epsilon > 0$, the price profile $(p_{1,t},p_{2,t})$ must enter an $\epsilon$-neigborhood of the SNE prices $(p_{1}^{*},p_{2}^{*})$ infinitely many times. In the second step, we show that when $(p_{1,t},p_{2,t})$ enters the $\epsilon$-neigborhood with small enough step sizes, it must remain their forever. 

Before we begin, we first introduce a lemma that would be used in both steps:

\begin{lemma}
\label{lem:psdGameMatrix}
Assume for any $i=1,2$, $\beta_{i} \geq \margin\left(\delta_{1} + \delta_{2} + \max\{\gamma_{1},\gamma_{2}\}\right)$ for $\margin\geq1$ as described in Section \ref{Sec:model}. Consider the 2-by-2 matrix $M$
\begin{align}
\begin{aligned}
   &  M = \begin{pmatrix} 2\beta_{1} -\theta_{1}\gamma_{1} & -(\delta_{1} + \theta_{2}\gamma_{1})\\
-(\delta_{2} + \theta_{1}\gamma_{2})& 2\beta_{2} -\theta_{2}\gamma_{2} 
\end{pmatrix} \\
& M + M^{\top} = \begin{pmatrix} 4\beta_{1} -2\theta_{1}\gamma_{1} & -(\delta_{1} + \delta_{2} + \theta_{2}\gamma_{1} + \theta_{1}\gamma_{2} )\\
-(\delta_{1} + \delta_{2} + \theta_{2}\gamma_{1} + \theta_{1}\gamma_{2} )& 4\beta_{2} -2\theta_{2}\gamma_{2} 
\end{pmatrix}\,,
\end{aligned}
\end{align}
Then, for any $\bm{x}\neq \bm{0} \in \R^{2}$, $\bm{x}^{\top}M\bm{x} > 0$. Furthermore, for any $\epsilon > 0$,  define an  $\epsilon$-neighborhood around $(p_{1},p_{2})\in\R^{2}: \mathcal{N}_{\epsilon}(p_{1},p_{2}) = \{\bm{x}\in \R^{2}: \sum_{i=1,2}\BDi(x_{i},p_{i}) < \epsilon\}$ where $\BD_{i}(x,y) =R(x) - R(y) - R'(y) (x-y)$ is the Bregman divergence w.r.t. $R_{i}$. Assuming the regularizers satisfy the reciprocity condition, i.e. whenever $x\to y$ for $x,y \in \R$ we have $\BD_{i}(x,y)\to 0$, there exists some absolute constant $C_{\epsilon}>0$ such that
\begin{align}
\label{eq:supportmonotone}
    \left(\bm{x}-\bm{p}\right)^{\top}M\left(\bm{x}-\bm{p}\right)   \geq C_{\epsilon} \quad, \forall \bm{x} \notin \mathcal{N}_{\epsilon}(p_{1},p_{2}) \,.
    % =\frac{1}{2} \left(\bm{x}-\bm{p}\right)^{\top}\left(M + M^{\top} \right)\left(\bm{x}-\bm{p}\right)
\end{align}
\end{lemma}
\begin{proof}
Since, $\beta_{i} \geq \margin\left(\delta_{1} + \delta_{2} + \max\{\gamma_{1},\gamma_{2}\}\right)$ for $\margin \geq 1$, the sum of the first row of $M + M^{\top} = 4\beta_{1} - 2\theta_{1}\gamma_{1} -(\delta_{1} + \delta_{2} + \theta_{2}\gamma_{1} + \theta_{1}\gamma_{2} )> 0 $. Similarly, the sum of the second row of $M + M^{\top}$ is also strictly greater than 0, and hence 
$M + M^{\top}$ is a symmetric strict diagonally dominant matrix, and hence positive definite.\footnote{A symmetric square matrix has real eigenvalues, and a diagonally dominant square matrix has eigenvalues whose real parts are positive. Hence $M + M^{\top}$ has positive eigenvalues, and is thus positive definite.} Therefore, for any $\bm{x}\in \R^{2}$ and  $\bm{x} \neq \bm{0}$, we have $ \bm{x}^{\top}\left(M + M^{\top} \right)\bm{x} > 0$. Since $\bm{x}^{\top}M\bm{x}=  \bm{x}^{\top}M^{\top}\bm{x}$, this implies our desired result $\bm{x}^{\top}M\bm{x} > 0$

We now show Equation (\ref{eq:supportmonotone}). The reciprocity condition implies there exists some $\eta_{\epsilon}> 0$ such that when $\norm{\bm{x} - \bm{p}} < \eta_{\epsilon}$, it holds that $\sum_{i=1,2}\BDi(x_{i},p_{i}) < \epsilon$.  Hence, for any $\bm{x} \notin \mathcal{N}_{\epsilon}(p_{1},p_{2})$, it must be the case that $\norm{\bm{x} - \bm{p}}\geq \eta_{\epsilon}$. So, for $\bm{x} \notin \mathcal{N}_{\epsilon}(p_{1},p_{2})$, we have
\begin{align*}
    \left(\bm{x}-\bm{p}\right)^{\top}M\left(\bm{x}-\bm{p}\right)  ~=~& \frac{1}{2} \left(\bm{x}-\bm{p}\right)^{\top}\left(M + M^{\top} \right)\left(\bm{x}-\bm{p}\right)\\
    ~\geq~& \frac{1}{2}\lambda_{\min}(M + M^{\top})\norm{\bm{x}-\bm{p}}^{2} \\
    ~\geq~& \frac{1}{2}\lambda_{\min}(M + M^{\top})\eta_{\epsilon}^{2} \defeq C_{\epsilon} ,
\end{align*}
where $\lambda_{\min}(M + M^{\top})$ is the minimum eigenvalue of the matrix $M + M^{\top}$ (which is positive due to the fact that $M + M^{\top}$ is positive definite).
\end{proof}
Now, returning to the proof for Theorem \ref{thm:OMD:convergenceSufDecStep}.

\textbf{Step 1: show $(p_{1,t},p_{2,t})$ must enter an $\epsilon$-neigborhood of the SNE prices $(p_{1}^{*},p_{2}^{*})$ infinitely many times.} We use a contradiction argument. Fix any $\epsilon >0$ and let $C_{\epsilon} > 0$ be the absolute constant defined in Lemma \ref{lem:psdGameMatrix}. Assume by contradiction that $(p_{1,t},p_{2,t})$ only visits $\mathcal{N}_{\epsilon}(p_{1}^{*},p_{2}^{*}) = \left\{\bm{p}\in \R^{2}: \norm{\bm{p}-(p_{1}^{*},p_{2}^{*})} <\epsilon\right\}$ finitely many times, i.e. there exists some $t_{\mathcal{N}}$ s.t. $\norm{(p_{1,t},p_{2,t})- (p_{1}^{*},p_{2}^{*})}\geq \epsilon$ for all $t \geq t_{\mathcal{N}}$. Further, in the first part of this theorem we showed that $\sum_{i = 1,2}\theta_{i}p_{i,t} \to r_{t}$, so without loss of generality (by taking $t_{\mathcal{N}}$ large enough), we can also assume for some small $\eta> 0$ such that $2\eta\gamma(\Bar{p}-\underline{p}) < C_{\epsilon}$ (where $\gamma = \max\{\gamma_{1},\gamma_{2}\}$), we have 
$\left|r_{t} - \sum_{i = 1,2}\theta_{i}p_{i,t}\right|\leq \eta$ for all $t \geq t_{\mathcal{N}}$.

We start by deducing a recurrence relationship between $\BDi(p_i^*, p_{i,t+1})$ and $\BDi(p_i^*, p_{i,t})$ as followed:
\begin{align}
\begin{aligned}
\label{eq:OMD:standardAnalysis0}
 & \BDi(p_i^*, p_{i,t+1})\\
     ~\overset{(a)}{\leq}~ & \BDi(p_i^*, p_{i,t})  - \epsilon_{i,t}\left(g_i^{*} - g_{i,t}\right)\left( p_i^{*} - p_{i,t} \right) +  \frac{\left(\epsilon_{i,t}\right)^2 g_{i,t}^{2}}{2\sigma_i}\\
     ~\leq~ & \BDi(p_i^*, p_{i,t})  - \epsilon_{i,t}\left(2\beta_{i}p_{i}^{*} - \delta_{i}p_{-i}^{*} + \gamma_{i}r^{*} -2\beta_{i}p_{i,t} - \delta_{i}p_{-i,t} + \gamma_{i}r_{t} \right)\left( p_i^{*} - p_{i,t} \right) +  \frac{\left(\epsilon_{i,t}\right)^2 g_{i,t}^{2}}{2\sigma_i}\\
      ~\overset{(b)}{\leq}~ & \BDi(p_i^*, p_{i,t}) \\
      & - \epsilon_{i,t}\left(2\beta_{i}p_{i}^{*} - \delta_{i}p_{-i}^{*} + \gamma_{i}\left(\sum_{j=1,2}\theta_{j}p_{j}^{*}\right) -2\beta_{i}p_{i,t} - \delta_{i}p_{-i,t} + \gamma_{i}\left(\sum_{j=1,2}\theta_{j}p_{j,t}\right) \right)\left( p_i^{*} - p_{i,t} \right)\\
     & + \epsilon_{i,t}\eta \gamma_{i}\left(\Bar{p}-\underline{p}\right) +  \frac{\left(\epsilon_{i,t}\right)^2 g_{i,t}^{2}}{2\sigma_i}\\
       ~=~ & \BDi(p_i^*, p_{i,t})  - \epsilon_{i,t}\left(\left(2\beta_{i} -\theta_{i}\gamma_{i}\right)\left(p_{i}^{*} - p_{i,t}\right) - \left(\delta_{i} + \theta_{-i}\gamma_{i}\right)\left(p_{-i}^{*} - p_{-i,t}\right)\right)\left(p_{i}^{*} - p_{i,t}\right)\\
      & \quad + \epsilon_{i,t}\eta \gamma_{i}\left(\Bar{p}-\underline{p}\right) +  \frac{\left(\epsilon_{i,t}\right)^2 g_{i,t}^{2}}{2\sigma_i}
     \,,
\end{aligned}
\end{align}
where in (a) we directly evoked Corollary \ref{cor:SNEBregmanUpdate}; and in (b) we
used the fact that $r^{*}=\sum_{j=1,2}\theta_{j}p_{j}^{*}$ according to the definition of the SNE, and $\left|r_{t} - \sum_{j=1,2}\theta_{j}p_{j,t}\right|\leq \eta$ for $t \geq  t_{\mathcal{N}}$. Summing the above over $i=1,2$ and recalling $\epsilon_{i,t}=\epsilon_{t}$ we have 
\begin{align}
\begin{aligned}
\label{eq:OMD:standardAnalysis1}
   & \sum_{i=1,2}\BDi(p_i^*, p_{i,t+1}) \\ ~\overset{(a)}{\leq}~ & \sum_{i=1,2}\BDi(p_i^*, p_{i,t})  - \epsilon_{t}\left(\bm{p}^{
    *} - \bm{p}_{t}\right)^{\top}M \left(\bm{p}^{
    *} - \bm{p}_{t}\right)+  2\epsilon_{t}\eta \gamma \left(\Bar{p}-\underline{p}\right) +  \frac{\left(\epsilon_{t}\right)^2 \V}{\sigma}\\
    ~\overset{(b)}{\leq}~ & \sum_{i=1,2}\BDi(p_i^*, p_{i,t})  - \epsilon_{t}C_{\epsilon}+  2\epsilon_{t}\eta \gamma \left(\Bar{p}-\underline{p}\right) +  \frac{\left(\epsilon_{t}\right)^2 \V}{\sigma}\\
     ~=~ & \sum_{i=1,2}\BDi(p_i^*, p_{i,t})  - \epsilon_{t}\left(C_{\epsilon}- 2\eta \gamma \left(\Bar{p}-\underline{p}\right)\right) +  \frac{\left(\epsilon_{t}\right)^2 \V}{\sigma}
\end{aligned}
\end{align}
where in (a) we take some finite $\V > \max_{i\in\{1,2\},t\in \N^{+}}g_{i,t}^{2}$ 
for all $i,t$ by recalling  Equation (\ref{OMD:gradient}) which states $ g_{i,t} = g_{i}(\bm{p}_{t},r_{t}) = 2\beta_i p_{i,t} - \left(\alpha_{i} + \delta_{i} p_{-i,t} + \gamma_{i} r_{t} \right)$, and that $p_{i,t},r_{t}$ are bounded within $[\underline{p},\overline{p}]$ for all $i,t$; in (b) we applied Lemma \ref{lem:psdGameMatrix} since  $\norm{(p_{1,t},p_{2,t})- (p_{1}^{*},p_{2}^{*})}\geq \epsilon$ for all $t \geq t_{\mathcal{N}}$. 

Telescoping Equation (\ref{eq:OMD:standardAnalysis1}) from some large time period $T+1$ down to $t_{\mathcal{N}}$ we get
\begin{align}
    \label{eq:OMD:telescope}
   0~\leq~&   \sum_{i=1,2}\BDi(p_{i}^*, p_{i,T+1}) \nonumber \\
    ~\leq~& \sum_{i=1,2}\BDi(p_i^*, p_{i,t_{\mathcal{N}}})  - \left(C_{\epsilon}- 2\eta \gamma \left(\Bar{p}-\underline{p}\right)\right) \sum_{t =t_{\mathcal{N}}}^{T}\epsilon_{t}+  \frac{\V}{\sigma}\sum_{t =t_{\mathcal{N}}}^{T}\left(\epsilon_{t}\right)^2\,.
\end{align}
Rearranging terms in Equation \eqref{eq:OMD:telescope} and dividing both sides by $\sum_{t= t_{\mathcal{N}}}^{T} \epsilon_{t}$ we get
the following:
\begin{align*}
   \frac{-\sum_{i=1,2}\BDi(p_i^*, p_{i,t_{\mathcal{N}}}) }{\sum_{t= t_{\mathcal{N}}}^{T} \epsilon_{t}}  ~\leq~  - \left(C_{\epsilon}- 2\eta \gamma \left(\Bar{p}-\underline{p}\right)\right)  +  \frac{\V}{\sigma}\cdot \frac{\sum_{t= t_{\mathcal{N}}}^{T} \epsilon_{t}^{2}}{\sum_{t= t_{\mathcal{N}}}^{T} \epsilon_{t}} \,.
\end{align*} 
Since $\sum_{i=1,2}\BDi(p_i^*, p_{i,t_{\mathcal{N}}}) $ and $t_{\mathcal{N}}$ are finite, the condition $\lim_{T \to \infty}\sum_{t=1}^{T} \epsilon_{t} = \infty $ implies that $\lim_{T \to \infty}\sum_{t=t_{\mathcal{N}}}^{T} \epsilon_{t} = \infty$, and the condition $\lim_{T \to \infty}\sum_{t=1}^{T} \epsilon_{t}^{2} < \infty $ implies that $\lim_{T \to \infty}\sum_{t=t_{\mathcal{N}}}^{T} \epsilon_{t}^{2} <\infty$. Hence,
 $\frac{\sum_{t= t_{\mathcal{N}}}^{T} \epsilon_{t}^{2}}{\sum_{t= t_{\mathcal{N}}}^{T} \epsilon_{t}}= 0$ as $T\to \infty$. Finally, because $C_{\epsilon}- 2\eta \gamma \left(\Bar{p}-\underline{p}\right) >0$ due to our definition of $\eta$, as $T\to \infty$, the above left hand side goes to zero and the  right hand side goes to $-\left(C_{\epsilon}- 2\eta \gamma \left(\Bar{p}-\underline{p}\right)\right) < 0$. This is a contradiction, implying that $(p_{1,t},p_{2,t})$ must enter an $\mathcal{N}_{\epsilon}(p_{1}^{*},p_{2}^{*})$ infinitely many times.
 
 \textbf{Step 2: show when $(p_{1,t},p_{2,t})$  enters an $\epsilon$-neigborhood of the SNE prices $(p_{1}^{*},p_{2}^{*})$ for large $t$ (small step size), it must stay in the neighborhood.} Here, we will show that for any $\epsilon > 0$, if $(p_{1,t},p_{2,t}) \in \mathcal{N}_{\epsilon}(p_{1}^{*},p_{2}^{*})$ for some large $t$, then $(p_{1,\tau},p_{2,\tau}) \in \mathcal{N}_{\epsilon}(p_{1}^{*},p_{2}^{*})$ for all $\tau \geq t$. In fact, we only need to show that  $(p_{1,t},p_{2,t}) \in \mathcal{N}_{\epsilon}(p_{1}^{*},p_{2}^{*})$
 implies $(p_{1,t+1},p_{2,t+1}) \in \mathcal{N}_{\epsilon}(p_{1}^{*},p_{2}^{*})$
for large $t$, and the rest follows from an induction argument.
 
Consider two scenarios, namely $(p_{1,t},p_{2,t}) \in \mathcal{N}_{\frac{\epsilon}{2}}(p_{1}^{*},p_{2}^{*})$ and $(p_{1,t},p_{2,t}) \in \mathcal{N}_{\epsilon}(p_{1}^{*},p_{2}^{*})/\mathcal{N}_{\frac{\epsilon}{2}}(p_{1}^{*},p_{2}^{*})$.

\textit{Scenario 1:} If $(p_{1,t},p_{2,t}) \in \mathcal{N}_{\frac{\epsilon}{2}}(p_{1}^{*},p_{2}^{*})$, consider some $\teta > 0$ such that when $t\geq \teta$, we have 
$\left|r_{t} - \sum_{i = 1,2}\theta_{i}p_{i,t}\right|\leq \eta$ for some small $\eta$ that satisfies $\frac{\epsilon}{2} > 2\eta \gamma \left(\Bar{p}-\underline{p}\right)$. Following the same deduction as in Equation (\ref{eq:OMD:standardAnalysis1}), we have 
\begin{align}
\begin{aligned}
 \label{eq:OMD:standardAnalysis3}
    \sum_{i=1,2}\BDi(p_i^*, p_{i,t+1}) ~\leq~ & \sum_{i=1,2}\BDi(p_i^*, p_{i,t})  - \epsilon_{t}\left(\bm{p}^{
    *} - \bm{p}_{t}\right)^{\top}M \left(\bm{p}^{
    *} - \bm{p}_{t}\right)+  2\epsilon_{t}\eta \gamma \left(\Bar{p}-\underline{p}\right) +  \frac{\left(\epsilon_{t}\right)^2 \V}{\sigma}\\
    ~\overset{(a)}{\leq}~ & \frac{\epsilon}{2} +  2\epsilon_{t}\eta \gamma \left(\Bar{p}-\underline{p}\right) +  \frac{\left(\epsilon_{t}\right)^2 \V}{\sigma}\\
     ~=~ & \frac{\epsilon}{2} +  \epsilon_{t}\left(2\eta \gamma \left(\Bar{p}-\underline{p}\right) +  \frac{\epsilon_{t} \V}{\sigma}\right)\\
     ~\overset{(b)}{\leq}~ &  \epsilon\,.
\end{aligned}
\end{align}
In (a) we used Lemma \ref{lem:psdGameMatrix} such that $\left(\bm{p}^{*} -\bm{p}_{t}\right)^{\top}M \left(\bm{p}^{*} - \bm{p}_{t}\right) \geq 0$, and the fact that $(p_{1,t},p_{2,t}) \in \mathcal{N}_{\frac{\epsilon}{2}}(p_{1}^{*},p_{2}^{*})$ so  by definition of an $\epsilon$-neighborhood (see Lemma \ref{lem:psdGameMatrix}) $ \sum_{i=1,2}\BDi(p_i^*, p_{i,t+1})\leq \frac{\epsilon}{2}$. In (b), we considered large $t$ such that $\epsilon_{t} < \min\left\{1, \frac{\sigma}{\V} \left(\frac{\epsilon}{2} - 2\eta \gamma \left(\Bar{p}-\underline{p}\right)\right)\right\}$ and used the definition of $\eta$ such that 
$\frac{\epsilon}{2} > 2\eta \gamma \left(\Bar{p}-\underline{p}\right)$. 

\textit{Scenario 2:} If $(p_{1,t},p_{2,t}) \in \mathcal{N}_{\epsilon}(p_{1}^{*},p_{2}^{*})/\mathcal{N}_{\frac{\epsilon}{2}}(p_{1}^{*},p_{2}^{*})$, let $C_{\frac{\epsilon}{2}}$ be defined as in Lemma \ref{lem:psdGameMatrix}. Consider some $\teta' > 0$ such that when $t\geq \teta'$, we have 
$\left|r_{t} - \sum_{i = 1,2}\theta_{i}p_{i,t}\right|\leq \eta$ for some small $\eta$  that satisfies $C_{\frac{\epsilon}{2}} > 2\eta \gamma \left(\Bar{p}-\underline{p}\right)$. Following the same deduction as in Equation (\ref{eq:OMD:standardAnalysis1}), we have 
 \begin{align*}
%  \begin{aligned}
   \sum_{i=1,2}\BDi(p_i^*, p_{i,t+1}) ~\leq~ & \sum_{i=1,2}\BDi(p_i^*, p_{i,t})  - \epsilon_{t}\left(C_{\frac{\epsilon}{2}}- 2\eta \gamma \left(\Bar{p}-\underline{p}\right)\right) +  \frac{\left(\epsilon_{t}\right)^2 \V}{\sigma} \\
   ~\leq~ & \epsilon  - \epsilon_{t}\left(C_{\frac{\epsilon}{2}}- 2\eta \gamma \left(\Bar{p}-\underline{p}\right)\right) +  \frac{\left(\epsilon_{t}\right)^2 \V}{\sigma} \,,
%   \end{aligned}
\end{align*}
where the final inequality follows from 
$(p_{1,t},p_{2,t}) \in \mathcal{N}_{\epsilon}(p_{1}^{*},p_{2}^{*})$. Taking large $t$ such that $\epsilon_{t} < \frac{\sigma}{\V}\left(C_{\frac{\epsilon}{2}}- 2\eta \gamma \left(\Bar{p}-\underline{p}\right)\right)$ we get $\sum_{i=1,2}\BDi(p_i^*, p_{i,t+1})\leq \epsilon$.

Combining the above two scenarios, we showed that for any $t > 0$ such that $t\geq \max\{\teta,\teta'\}$ and $\epsilon_{t} \leq \min\left\{1,  \frac{\sigma}{\V} \left(\frac{\epsilon}{2} - 2\eta 
\gamma \left(\Bar{p}-\underline{p}\right)\right),\frac{\sigma}{\V}\left(C_{\frac{\epsilon}{2}}- 2\eta \gamma \left(\Bar{p}-\underline{p}\right)\right)\right\}$, 
$(p_{1,t},p_{2,t}) \in \mathcal{N}_{\epsilon}(p_{1}^{*},p_{2}^{*})$
 implies $(p_{1,t+1},p_{2,t+1}) \in \mathcal{N}_{\epsilon}(p_{1}^{*},p_{2}^{*})$ and hence $(p_{1,\tau},p_{2,\tau}) \in \mathcal{N}_{\epsilon}(p_{1}^{*},p_{2}^{*})$ for all large enough $\tau$ (by induction).

\subsection{Proof of Theorem \ref{thm:OMD:rateDecStep}}
Following the same deduction in Equation (\ref{eq:OMD:standardAnalysis0}) we have
\begin{align*}
    \BDi(p_{i}^*, p_{i,t+1}) 
     ~\leq~ & \BDi(p_{i}^*, p_{i,t}) -  \epsilon_{i,t} \left( 2\beta_{i}\left(p_{i}^* - p_{i,t} \right)^2 - 
    \delta_{i} \left(p_{-i}^{*} -  p_{-i,t}\right) \left(p_{i}^* - p_{i,t} \right) \right.\nonumber\\
     &~~~~\left.  - \gamma_{i} \left(r^{*} - r_{t} \right) \left(p_{i}^* - p_{i,t} \right)
    \right)  + \frac{\left(\epsilon_{i,t}g_{i,t}\right)^2}{2\sigma_{i}} \nonumber \\ 
    %   ~\leq~ &  \BDi(p_{i}^*, p_{i,t}) -  \epsilon_{i,t} \left( 2\beta_{i}\left(p_{i}^* - p_{i,t} \right)^2 - 
    % \frac{\delta_1}{2} \left(\left(p_{2}^{*} -  p_{2,t-1}\right)^2 +  \left(p_{i}^* - p_{i,t} \right)^2 \right) \right) \nonumber \\ 
    % ~~~~ & - \epsilon_{i,t} \gamma_{i} \left(r^{*} - r_{t} \right) \left(p_{i}^* - p_{i,t} \right)  + \frac{\left(\epsilon_{i,t}g_{i,t}\right)^2}{2\sigma_{i}} \nonumber \\ 
    ~\overset{(a)}{\leq}~ &  \BDi(p_{i}^*, p_{i,t}) -  \epsilon_{i,t} \left(\frac{4\beta_{i} - \delta_{i}}{2} \left(p_{i}^* - p_{i,t} \right)^2 - 
    \frac{\delta_{i}}{2} \left(p_{-i}^{*} -  p_{-i,t}\right)^2 \right) \nonumber \\ 
    ~~~~ & + \epsilon_{i,t} \gamma_{i} \left(r^{*} - r_{t} \right) \left(p_{i}^* - p_{i,t} \right)  + \frac{\left(\epsilon_{i,t}g_{i,t}\right)^2}{2\sigma_{i}} \,.
    % ~\overset{(a)}{\leq}~ &  \BDi(p_{i}^*, p_{i,t}) -  \frac{3n}{t} \left(p_{i}^* - p_{i,t} \right)^2 + \frac{n}{t} \left(p_{-i}^{*} -  p_{-i,t-1}\right)^2   + \frac{n}{t}\left(r^{*} - r_{t} \right)^2   + \frac{\left(\epsilon_{i,t}g_{i,t}\right)^2}{2\sigma_{i}} \\
    % ~\overset{(b)}{\leq}~ &  \BDi(p_{i}^*, p_{i,t}) -  \frac{3n}{t} \left(p_{i}^* - p_{i,t} \right)^2 + \frac{n}{t} \left(p_{-i}^{*} -  p_{-i,t-1}\right)^2   + \frac{n}{t}\left(r^{*} - r_{t} \right)^2 +  \left(\epsilon_{i,t}\right)^2 \V
\end{align*}

where in (a) we used the basic inequality $AB \leq (A^{2} + B^{2})/2$ for $A= p_{-i}^{*} -  p_{-i,t}$ and $B = p_{i}^* - p_{i,t}$.

Now, consider the step-size sequences $\{\epsilon_{i,t}\}_{t}$ that satisfy
\begin{align}
\label{eq:OMD:stepsizeRange}
    \frac{1}{t+1}\cdot \frac{10}{4\beta_{i} - \delta_{i} }~\leq~ \epsilon_{i,t} ~\leq~ \frac{1}{t+1}\cdot \frac{2}{\max\{\delta_{i},\gamma_{i}\}},\quad i = 1,2\,.
\end{align}
Equation (\ref{eq:OMD:stepsizeRange}) holds due to the fact that $\beta_{i} > \margin( \delta_{i} + \gamma_{i}) > 0$ and $\margin\geq 2$, which further implies $2\left( 4\beta_{i} - \delta_{i} \right) > 8\margin(\delta_{i} + \gamma_{i}) - 2\delta_{i}  > 10(\delta_{i} + \gamma_{i}) > 10\max\{\delta_{i},\gamma_{i}\}$. This leads to
\begin{align*}
    &\BDi(p_{i}^*, p_{i,t+1}) \\
    ~\leq~ &  \BDi(p_{i}^*, p_{i,t}) -  \frac{5}{t+1} \left(p_{i}^* - p_{i,t} \right)^2 + \frac{1}{t+1} \left(p_{-i}^{*} -  p_{-i,t}\right)^2  + \frac{2}{t+1} \left(r^{*} - r_{t} \right) \left(p_{i}^* - p_{i,t} \right)    + \frac{\left(\epsilon_{i,t}g_{i,t}\right)^2}{2\sigma_{i}} \\
    ~\overset{(a)}{\leq}~ &  \BDi(p_{i}^*, p_{i,t}) -  \frac{5}{t+1} \left(p_{i}^* - p_{i,t} \right)^2 + \frac{1}{t+1} \left(p_{-i}^{*} -  p_{-i,t}\right)^2   + \frac{2}{t+1} \left(r^{*} - r_{t} \right) \left(p_{i}^* - p_{i,t} \right)  +  \frac{\V}{2(t+1)^{2}}
    \,, 
\end{align*}
where in (a) we take some $\V > \max_{i\in\{1,2\},t\in \N^{+}}\frac{4g_{i,t}^{2}}{\sigma_{i}\max\{\delta_{i},\gamma_{i}\}^{2}}$  %\negin{$\max\{\delta_{i},\gamma_{i}\}$
%should be bounded away from zero.}
for all $i,t$ by using the fact that $p_{i,t},r_{t} \in \mathcal{P}$. Summing across $i = 1,2$, we have
\begin{align*}
    & \sum_{i=1,2}\BDi(p_{i}^*, p_{i,t+1}) \\
    ~\leq~ &   \sum_{i=1,2}\BDi(p_{i}^*, p_{i,t}) -  \frac{4}{t+1} \norm{\bm{p}^{*} - \bm{p}_{t}}^{2}  + \frac{2}{t+1} \left(r^{*} - r_{t} \right) \left(p_{1}^{*} - p_{1,t}+ p_{2}^{*} - p_{2,t} \right)  +  \frac{\V}{(t+1)^{2}}\\
    %  ~\leq~ &   \sum_{i=1,2}\BDi(p_{i}^*, p_{i,t}) -  \frac{4}{t+1} \norm{\bm{p}^{*} - \bm{p}_{t}}^{2}  + \frac{1}{t+1} \left(r^{*} - r_{t} \right)^{2} + \frac{1}{t} \left(p_{1}^{*} - p_{1,t}+ p_{2}^{*} - p_{2,t} \right)^{2} +  \frac{\V}{(t+1)^{2}}\\
     ~\overset{(a)}{\leq}~ &   \sum_{i=1,2}\BDi(p_{i}^*, p_{i,t}) -  \frac{2}{t+1} \norm{\bm{p}^{*} - \bm{p}_{t}}^{2}  + \frac{1}{t+1} \left(r^{*} - r_{t} \right)^{2} +  \frac{\V}{(t+1)^{2}}
    \,, 
\end{align*}
where in inequality (a) we applied  $C(A+B) \leq \frac{C^{2}}{2} + \frac{(A+B)^{2}}{2} \leq \frac{C^{2}}{2} + A^{2} + B^{2}$ for $A = p_{1}^{*} - p_{1,t}$, $B = p_{2}^{*} - p_{2,t}$ and $C =r^{*} - r_{t} $.

When $R_{i}(z) = z^{2}$, we have $\BDi(p,p') = (p-p')^{2}$. Therefore, denoting $x_{t} =   \sum_{i=1,2}\BDi(p_{i}^{*}, p_{i,t}) = \norm{\bm{p}^{*} - \bm{p}_{t}}^{2}$ for $i = 1,2$ and $x_{n,t} = \left(r^{*} - r_{t} \right)^{2}$, the equation above yields
\begin{align}
\label{eq:OMD:rateIteration0}
     x_{t+1} ~\leq~  \left(1-\frac{2}{t+1}\right)x_{t} +\frac{1}{t+1}x_{n,t} + \frac{\V}{(t+1)^{2}}\,.
\end{align}

We will show via induction that $x_{t}\leq \frac{\const}{t}$ for some $\const > 0$ and any $t\in \N^{+}$. The proof is constructive and will rely on the following definitions, whose motivations will later be clear.

Fix  $\NN = \left \lceil\frac{a}{1-a}\right \rceil + 1$, $\MN = \left\lceil \frac{\frac{a}{1-a}(\NN+1)}{\NN - \frac{a}{1-a}}\right\rceil$,  and take any $\ourrho$ such that $\max\left\{\theta_{1}, \theta_{2}\right\} < \ourrho < 1$. Here,  $\lceil x \rceil = \min \{y \in \N^{+}: y \geq x \}$ for any $x \in \R$. Note that $\NN$ is bounded as $a$ is bounded away from 1.
%\negin{How does $\const$ scale with $a$ and $\max\{\theta_1, \theta_2\}$? Does it go to infinity as $a$ goes to one?}

Next, define
\begin{align}
    & \trho \defeq \min\left\{\tau \in \N^{+}: \tau \geq \NN \text{ and } \frac{(\NN+1)\log(\tau-\NN-1)}{\tau} \leq  \frac{\ourrho}{\max\{\theta_{1}, \theta_{2} \}}-1 \right\} \label{def:OMD:Trho}\\
    & \Q \defeq (1-a)\max\{\theta_{1}, \theta_{2}\}\sum_{\tau=1}^{\NN+\MN-1} \frac{a^{-\tau}}{\tau}\,. \label{eq:OMD:Q}
\end{align}
%\negin{Feel free to change the notation for $\Q$ as I was not sure what the nest notation is here. I chose $\NN$ for $N$ to remind the readers (1) $N$ depends on $a$ and (2) we are talking about a ration. I made several notation changes here. Please review carefully to make sure I did not miss anything.  } \jasoncom{I changed this to $u$ because $p$ and $q$ look very similar.}
Note that $\trho$ is bounded because $\max\{\theta_1, \theta_2\}$ is bounded away from one. Further, since $\NN$ and $\ourrho$ are constant, and $\log(t) =o(t)$, it is easy to see that $\trho$ exists. Furthermore, define
\begin{align*}
    & \tildet \defeq \min \Bigg\{\tau >  \max\left\{\NN + \MN, \trho\right\}:\\
    & ~~~~~~~~ \left. (t-\NN) \cdot \left(2(\Bar{p} -\underline{p})^{2} + \Q\cdot \frac{ 2t \cdot (\Bar{p} -\underline{p})^{2} + \V + 1}{1-\ourrho}\right) < a^{-t}\quad, \text{for } \forall t \geq \tau \right\}\,,\\
    &  \const \defeq\frac{ 2 \tildet (\Bar{p} -\underline{p})^{2} + \V + 1}{1-\ourrho}\,.
\end{align*}
Note that $\tildet$ must exist because  the left hand side is quadratic in $t$, while the right hand side is exponential in $t$ for $a \in (0,1)$. {We provide an illustration for the size of $\const$ w.r.t. memory parameter $a$ and $\max\{\theta_{1},\theta_{2}\}$ in Figure \ref{fig:rate_constant} of Appendix \ref{app:sec:addFigures}.}

Note that the definition of $\tildet$ and $\const$ implies that the following three equations hold 
\begin{align}
   & a^{t}\left(2(\Bar{p} -\underline{p})^{2} + \Q\const\right) < \frac{1}{t-\NN} \quad \forall t \geq \tildet \label{eq:OMD:tildeT1}\\
   &   2 \tildet (\Bar{p} -\underline{p})^{2} + \V + 1 + \ourrho \const = \const \label{eq:OMD:tildeT2}\\
   &  \const > 2\tildet (\Bar{p} -\underline{p})^{2} \label{eq:OMD:tildeT3}
   \,.
\end{align}
Here, Equation (\ref{eq:OMD:tildeT1}) is due to the following: plugging the definition of $\const$ into that of $\tildet$ we get
$(\tildet-\NN) \cdot \left(2(\Bar{p} -\underline{p})^{2} + \Q\const \right) < a^{-\tildet}$, and since $\const = \frac{ 2 \tildet (\Bar{p} -\underline{p})^{2} + \V + 1}{1-\ourrho} \leq   \frac{ 2t \cdot (\Bar{p} -\underline{p})^{2} + \V + 1}{1-\ourrho}$ for any $t \geq \tildet > \NN$, we have  
$$(t-\NN)\left(2(\Bar{p} -\underline{p})^{2} + \Q\const\right) \leq  (t-\NN)\left(2(\Bar{p} -\underline{p})^{2} + \Q\cdot \frac{ 2t \cdot (\Bar{p} -\underline{p})^{2} + \V + 1}{1-\ourrho}\right) \overset{(a)}{<} a^{-t}\,,$$
where (a) follows from the definition of $\tildet$. Equation (\ref{eq:OMD:tildeT2}) directly follows from the definition of $\const$. Equation (\ref{eq:OMD:tildeT3}) follows because $\ourrho \in (0,1)$ and hence $\const = \frac{ 2 \tildet (\Bar{p} -\underline{p})^{2} + \V + 1}{1-\ourrho} > 2 \tildet (\Bar{p} -\underline{p})^{2} + \V + 1 >  2 \tildet (\Bar{p} -\underline{p})^{2}$. Hence, this implies that  $x_{t} \leq 2(\Bar{p} -\underline{p})^{2}< \frac{\const}{t}$ for any $t = 1\dots \tildet$, where we recall $x_{t} = \norm{\bm{p}^{*} - \bm{p}_{t}}^{2}$.\\

Consider $t\ge \tildet$.  We will now show via induction that $x_{t+1} \leq \const/(t+1)$ using our induction hypothesis  that 
% $x_{t}\leq {\const}/{t}$ using the induction hypothesis
$x_{\tau}\leq \const/\tau$ holds for all $\tau =1, \dots, t$. % and our base case 
% is $x_{\tau}\leq \const/\tau$ for $\tau=1, 2, \ldots, \tildet$.
 %We will further show that  $x_{T+1} \leq \const/(T+1)$. 
Note that the base case $x_{t} \leq 2(\Bar{p} -\underline{p})^{2}< \frac{\const}{t}$ for any $t = 1\dots \tildet$ is trivially true as we just discussed. Then, multiplying $t(t+1)$ on both sides of the recurrence relation in Equation (\ref{eq:OMD:rateIteration0}) and telescoping from $\tildet$  to $t$, we have
\begin{align}
\label{eq:OMD:rateIteration}
     t(t+1)x_{t+1} ~\leq~&  (t-1)t x_{t} +t x_{n,t} + \V \nonumber \\
     ~\leq~&  (t-2)(t-1) x_{t-1} + \sum_{\tau=t-1}^{t} t x_{n,\tau} + 2\V \nonumber \\
      ~\vdots~& \nonumber \\
     ~\leq~& (\tildet-1) \tildet\cdot x_{\tildet} + \sum_{\tau=\tildet}^{t} \tau x_{n,\tau} + (t - \tildet+1)\V  \nonumber\\
     ~\leq~&  (\tildet-1) \tildet \cdot x_{\tildet} + \sum_{\tau= \tildet}^{t} \tau x_{n,\tau} + t\V  \,.
\end{align}
We will now bound $x_{n,\tau}$ for all $\tau = \tildet\dots t$. Using the definition $r^{*} = \theta_{1}p_{1}^{*} + \theta_{2}p_{2}^{*}$, we get
\begin{align*}
    r^{*} - r_{\tau+1} ~=~ &  r^{*} - ar_{\tau}-(1-a)\left(\theta_{1}p_{1,\tau} + \theta_{2}p_{2,\tau} \right) \nonumber \\
     ~{=}~ &  a\left( r^{*} - r_{\tau}\right) -(1-a)\left(\theta_{1}\left(p_{1}^{*} - p_{1,\tau}\right) + \theta_{2}\left(p_{2}^{*} - p_{2,\tau}\right)\right)
      \,.
\end{align*}
By convexity, we further have for any $\tau = 1\dots t$,
\begin{align*}
x_{n,\tau+1} = \left(r^{*} - r_{\tau+1}\right)^{2}  ~\leq~ &  a x_{n,\tau} + (1-a)\left(\theta_{1}\left(p_{1}^{*} - p_{1,\tau}\right) + \theta_{2}\left(p_{2}^{*} - p_{2,\tau}\right)\right)^{2}\\
 ~\leq~ & a x_{n,\tau} + (1-a)\left(\theta_{1}\left(p_{1}^{*} - p_{1,\tau}\right)^{2} + \theta_{2}\left(p_{2}^{*} - p_{2,\tau}\right)^{2}\right)\\
 ~\leq~ & a x_{n,\tau} + (1-a)\max\{\theta_{1}, \theta_{2}\}x_{\tau}\\
 ~\overset{(a)}{\leq}~ & a x_{n,\tau} + (1-a)\max\{\theta_{1}, \theta_{2}\}\frac{\const}{\tau}\,,
\end{align*}
where (a) follows from the induction hypothesis, i.e., $x_{\tau}\leq \const/\tau$ holds for all $\tau =1 \dots t$. Using a telescoping argument, we then have for any $t \geq \tildet$,
\begin{align*}
    x_{n, t+1}~\leq~ &a x_{n,t} + (1-a)\max\{\theta_{1}, \theta_{2}\}\frac{\const}{t} \\
    % ~\leq~ & a^2x_{n, t-1}+ a(1-a)\const\max\{\theta_1, \theta_2\}\left(\frac{1}{t} + \frac{a}{t-1}\right)\\
    ~\leq~ & a^2x_{n, t-1}+ (1-a)\const\max\{\theta_1, \theta_2\}\sum_{\tau=t-1}^{t}\frac{a^{t-\tau}}{\tau}\\
    & \vdots \\
     ~\leq~ & a^{t}x_{n,1} +(1-a) \const\max\{\theta_{1}, \theta_{2}\} \sum_{\tau=1}^{t} \frac{a^{t-\tau}}{\tau}\\
     ~=~ & a^{t}x_{n,1}  +(1-a)a^{t} \const\max\{\theta_{1}, \theta_{2}\}\left( \sum_{\tau=1}^{\NN+\MN-1} \frac{a^{-\tau}}{\tau} + \sum_{\tau=\NN+\MN}^{t} \frac{a^{-\tau}}{\tau} \right)\\
      ~\overset{(a)}{=}~ & a^{t}\left( x_{n,1} + \Q \const\right) +  (1-a)a^{t} \const\max\{\theta_{1}, \theta_{2}\}
      \sum_{\tau=\NN+\MN}^{t} \frac{a^{-\tau}}{\tau}\\\
       ~\leq~ & a^{t}\left(2(\Bar{p}-\underline{p})^{2}+ \Q \const\right) +  (1-a)a^{t} \const\max\{\theta_{1}, \theta_{2}\}
      \sum_{\tau=\NN+\MN}^{t} \frac{a^{-\tau}}{\tau}\\
       ~\overset{(b)}{\leq}~ & \frac{1}{t-\NN} +  (1-a)a^{t} \const\max\{\theta_{1}, \theta_{2}\}
      \sum_{\tau=\NN+\MN}^{t} \frac{a^{-\tau}}{\tau}\\
     ~\overset{(c)}{\leq}~ &  \frac{1 + \max\{\theta_{1}, \theta_{2}\} \const}{t-\NN}\,.
\end{align*}
Here, (a) follows from the definition of $\Q$ in Equation (\ref{eq:OMD:Q}); (b) follows from Equation (\ref{eq:OMD:tildeT1}); and (c) follows from Lemma \ref{lem:boundSum} since $t \geq \tildet \geq \NN + \MN$.
Applying this upper bound on $x_{n,t}$ in Equation (\ref{eq:OMD:rateIteration}) we have
\begin{align*}
     & t(t+1)x_{t+1} \nonumber \\
     ~\leq~& (\tildet-1) \tildet \cdot x_{\tildet} + \left(1 + \max\{\theta_{1}, \theta_{2}\} \const\right) \sum_{\tau=\tildet}^{t} \frac{\tau}{\tau-\NN-1} + t\V \nonumber \\
     ~=~& (\tildet-1) \tildet\cdot x_{\tildet} + \left(1 + \max\{\theta_{1}, \theta_{2}\} \const\right) \sum_{\tau = \tildet}^{t} \left(1 + \frac{\NN+1}{\tau-\NN-1}\right) + t\V \nonumber\\
     ~\overset{(a)}{\leq}~& (\tildet-1) \tildet\cdot x_{\tildet} + \left(1 + \max\{\theta_{1}, \theta_{2}\} \const\right) \left(t + (\NN+1)\log(t-\NN-1)\right) +t\V
     \,.
\end{align*}
 where (a) follows from $\sum_{\tau = \tildet}^{t}\frac{1}{\tau-\NN-1} \leq \int_{\tildet-1}^{t}\frac{1}{\tau-\NN-1}  d\tau \leq \log(t-\NN-1) $ since $\tildet\geq \NN+3$. Dividing both sides of the above equation by $t(t+1)$, and using the fact that $x_{t}\leq 2(\Bar{p} -\underline{p})^{2}$ for any $t$, we have
 \begin{align*}
     x_{t+1}
     ~\leq ~ & \left(\frac{2(\tildet-1) \tildet(\Bar{p} -\underline{p})^{2}}{t} + \V \right)\cdot \frac{1}{t+1} + \frac{1 + \max\{\theta_{1}, \theta_{2}\} \const\left(1  + \frac{(\NN+1)\log(t-\NN-1)}{t}\right)}{t+1}\\
      ~\overset{(a)}{\leq}~ &  \frac{2\tildet (\Bar{p} -\underline{p})^{2} + \V + 1 + \max\{\theta_{1}, \theta_{2}\} \const\left(1  + \frac{(\NN+1)\log(t-\NN-1)}{t}\right)}{t+1}\\
       ~\overset{(b)}{\leq}~ &  \frac{2\tildet (\Bar{p} -\underline{p})^{2} + \V + 1 + \ourrho \const}{t+1}\\
        ~\overset{(c)}{=}~ &  \frac{\const}{t+1}
      \,.
 \end{align*}
Here, (a) follows from the fact that $t\geq \tildet$, so $\frac{2(\tildet-1) \tildet(\Bar{p} -\underline{p})^{2}}{t} + \V\leq 2\tildet (\Bar{p} -\underline{p})^{2} + \V$; (b) follows from $t\geq \tildet\geq \trho$ so that $\frac{(\NN+1)\log(t-\NN-1)}{t} \leq  \frac{\ourrho}{\max\{\theta_{1}, \theta_{2}\}}-1$ according to Equation (\ref{def:OMD:Trho}); finally, (c) follows from Equation (\ref{eq:OMD:tildeT2}).

\subsection{Proof of Theorem \ref{thm:OMD:convergenceSuf}}
Here, we first provide a roadmap for the proof. Evoking Corollary \ref{cor:SNEBregmanUpdate}, we get
\begin{align}\label{eq:DR}
     \BDi(p_i^*, p_{i,t+1}) ~\leq~
       \BDi(p_i^*, p_{i,t})  - \epsilon_{i,t}\left(g_i^{*} - g_{i,t}\right)\left( p_i^{*} - p_{i,t} \right) +  \frac{\left(\epsilon_{i,t}\right)^2 \left(g_i^{*} - g_{i,t}\right)^2}{2\sigma_i} \,.
 \end{align}
By bounding the first order term $\left(g_i^{*} - g_{i,t}\right)\left( p_i^{*} - p_{i,t} \right)$ and the second order term $\frac{\left(g_i^{*} - g_{i,t}\right)^2}{2\sigma_i} $, we achieve a recursive relation in the form of 
\begin{align*}
    & \sum_{i =1,2,n} \BDi(p_i^*, p_{i,t+1}) ~\leq~  \sum_{i =1,2,n} \BDi(p_i^*, p_{i,t}) + \sum_{i =1,2,n} \kappa_{i,t}x_{i,t}\,,
\end{align*}
where we recall the definition $x_{i,t} = (p_{i}^{*} - p_{i,t})^{2}$ for $i = 1,2,n$ as in the proof of Theorem \ref{thm:OMD:rateDecStep}, and $\kappa_{i,t}$ is some constant that takes negative values if the conditions in the theorem's statement are satisfied. We then argue if  
$(\bm{p}_{t},r_{t})$ does not converge to the SNE, $\sum_{i =1,2,n} \BDi(p_i^*, p_{i,t}) $ will be greater than some positive constant $\epsilon > 0$ for all large enough $t$. Combining this with the above recursive relationship, this further implies that the distance between the price profile $(\bm{p}_{t},r_{t})$ and the SNE decreases by a positive constant for each period. This will eventually contradict the fact that Bregman divergence is positive.

We start our proof by recalling Equation (\ref{OMD:gradient}) which states $ g_{i}(\bm{p},r) = 2\beta_i p_{i} - \left(\alpha_{i} + \delta_{i} p_{-i} + \gamma_{i} r \right)$. Hence, 
\begin{align*}
  g_{i}^*- g_{i,t} 
    ~=~ & 2\beta_{i} \left( p_{i}^{*} - p_{i,t}\right)-  \delta_{i}\left(p_{-i}^{*} -  p_{-i,t}\right) - \gamma_{i}\left(r^{*} - r_{t} \right) \,.
\end{align*}
% For $i = 1$, we use Equation (\ref{eq:OMD:grad1diff}):
% \begin{align*}
%   g_1^*- g_{1,t-1} ~=~ 2\beta_1 \left( p_1^{*} - p_{1,t-1}\right)-  \delta_1\left(p_{2}^{*} -  p_{2,t-1}\right) - \gamma_1 \left(r^{*} - r_{t-1} \right) \,.
% \end{align*}
Furthermore, for $i=1,2$, we have
\begin{align}
\label{eq:OMD:squaredGrad}
    \left(g_{i}^*- g_{i,t}\right)^{2} ~\leq~ 8\beta_{i}^{2}x_{i,t} + 4\delta_{i}^{2}x_{-i,t}+ 4\gamma_{i}^{2}x_{n,t}\,,
\end{align}
where we used $(A + B +C)^{2} \leq 2A^{2} + 2(B+C)^{2} \leq 2A^{2} + 4B^{2}+4C^{2} $ for $A = 2\beta_{i} \left( p_{i}^{*} - p_{i,t}\right)$, $B = \delta_{i} \left( p_{-i}^{*} - p_{-i,t}\right)$, and $C = \gamma_{i}\left(r^{*} - r_{t}\right)$.  Hence, we have 
\begin{align}
\label{eq:OMD:convergenceSuf1}
     &\BDi(p_{i}^*, p_{i,t+1}) \nonumber \\
     ~\leq~ & \BDi(p_{i}^{*}, p_{i,t}) -  \epsilon_{i,t} \left( 2\beta_{i} x_{i,t} - 
    \delta_{i} \left(p_{-i}^{*} -  p_{-i,t}\right) \left(p_{i}^{*} - p_{i,t} \right) - \gamma_{i} \left(r^{*} - r_{t} \right) \left(p_{i}^{*} - p_{i,t} \right)
    \right)  + \frac{\left(\epsilon_{i,t}g_{i,t}\right)^2}{2\sigma_{i}} \nonumber \\ 
      ~\overset{(a)}{\leq}~ &  \BDi(p_{i}^{*}, p_{i,t}) -  \epsilon_{i,t} \left( 2\beta_{i} x_{i,t} - 
    \frac{\delta_{i}}{2} \left(x_{i,t} + x_{-i,t}\right) - \frac{\gamma_{i}}{2}\left(x_{n,t}  + x_{i,t}\right)\right)  + \frac{\left(\epsilon_{i,t}g_{i,t}\right)^2}{2\sigma_{i}}  \nonumber \\ 
    ~=~ &  \BDi(p_{i}^{*}, p_{i,t}) -  \epsilon_{1,t} \left( \frac{4\beta_{i} - \delta_{i} - \gamma_{i}}{2}x_{i,t}- 
    \frac{\delta_{i}}{2} x_{-i,t}- \frac{\gamma_{i}}{2}x_{n,t}\right) + \frac{\left(\epsilon_{i,t}g_{i,t}\right)^2}{2\sigma_{i}}  \nonumber\\
    ~\leq~ & \BDi(p_{i}^{*}, p_{i,t}) -   \left( \frac{\left(4\beta_{i} - \delta_{i} - \gamma_{i}\right)\epsilon_{i,t}}{2} - \frac{4\beta_{i}^{2}\left(\epsilon_{i,t}\right)^{2}}{\sigma_{i}}\right)x_{i,t}\nonumber\\
    &~~~~ +\left( \frac{\delta_{i}\epsilon_{i,t}}{2}  +\frac{2\delta_{i}^{2}\left(\epsilon_{i,t}\right)^{2}}{\sigma_{i}}  \right) x_{-i,t} + \left(\frac{\gamma_{i}\epsilon_{i,t}}{2}  + \frac{2\gamma_{i}^{2}\left(\epsilon_{i,t}\right)^{2}}{\sigma_{i}} \right)x_{n,t} \,.
\end{align}
% Here, (a) follows because $\beta_i \geq \delta_i + \gamma_i > 0$ for $i = 1,2$. 
In the  inequality (a),  we used the basic inequality $AB \leq (A^{2}+B^{2})/2$ twice, and the last inequality is obtained by invoking Equation (\ref{eq:OMD:squaredGrad}). 
Furthermore, we have $g_n^{*}- g_{n,t} ~=~  r^{*} - r_{t} - \left( \theta_1 \left(p_1^{*} - p_{1,t}\right)  + \theta_2 \left(p_2^{*} - p_{2,t} \right)\right)$. Thus,
\begin{align*}
   \left(g_n^{*}- g_{n,t} \right)^{2}~\leq~&  \frac{1}{2}x_{n,t} + \frac{1}{2}\left( \theta_1 \left(p_1^{*} - p_{1,t}\right)  + \theta_2 \left(p_2^{*} - p_{2,t} \right)\right)^{2}\\
    ~\overset{(a)}{\leq}~ & \frac{1}{2}x_{n,t} + \frac{1}{2}\left(\theta_1 x_{1,t}  + \theta_2 x_{2,t}\right)
   \,,
\end{align*}
where (a) follows from $\theta_{1} + \theta_{2}= 1$ and convexity. By applying the above inequality in Equation \eqref{eq:DR} with $i=n$, we have
\begin{align}
\label{eq:OMD:convergenceSuf3}
     & \BDn(p_{n}^{*}, p_{n,t+1})\nonumber\\
    ~{\leq}~ & \BDn(p_n^*, p_{n,t}) -  (1-a) \left( \frac{1}{2}x_{n,t}- 
    \frac{\theta_{1}}{2}x_{1,t} - \frac{\theta_{2}}{2}x_{2,t}\right)+ \frac{\left((1-a)g_{n,t}\right)^2}{2} \nonumber \\
    ~\leq ~ & \BDn(p_n^*, p_{n,t}) - \left(\frac{1-a}{2} - \frac{(1-a)^{2}}{4} \right)x_{n,t}    \nonumber \\
   &~~~~ + 
   \left(\frac{(1-a)\theta_1}{2}+ \frac{(1-a)^{2}\theta_{1}}{4}\right) x_{1,t} +    \left(\frac{(1-a)\theta_{2}}{2} + \frac{(1-a)^{2}\theta_{2}}{4}\right)x_{2,t} 
    \,,
\end{align}
where in the second inequality, we again use the inequality  $g_{n,t}^2\le \frac{1}{2}x_{n,t} + \frac{1}{2}\left(\theta_1 x_{1,t}  + \theta_2 x_{2,t}\right)$.
%where in (a) we replaced $\epsilon_{n,t} = 1-a$.

Summing up Equations (\ref{eq:OMD:convergenceSuf1}) (over $i=1,2$) and (\ref{eq:OMD:convergenceSuf3}), and collecting terms yields 
\begin{align}
\label{eq:OMD:decayingBregman}
    & \sum_{i =1,2,n} \BDi(p_i^*, p_{i,t+1}) ~\leq~  \sum_{i =1,2,n} \BDi(p_i^*, p_{i,t}) + \sum_{i =1,2,n} \kappa_{i,t}x_{i,t}\,,
\end{align}{}
where the coefficient for $x_{i,t}$ is
\begin{align}
\label{eq:OMD:decayingBregmanStepsize}
\kappa_{i,t} = \begin{cases}
    -   \frac{\left(4\beta_i - \delta_i - \gamma_i\right)\epsilon_{i,t}}{2} + \frac{4\beta_i^{2}\epsilon_{i,t}^{2}}{\sigma_{i}}+  \frac{\delta_{-i}\epsilon_{-i,t}}{2} + \frac{2\delta_{-i}^{2}\epsilon_{-i,t}^{2}}{\sigma_{-i}} + \frac{(1-a)\theta_{i}}{2}+ \frac{(1-a)^{2}\theta_{i}}{4}, &  i = 1,2
        \vspace{0.3cm}\\ 
    - \frac{1-a}{2}+ \frac{(1-a)^{2}}{4} +  \frac{\gamma_1\epsilon_{1,t}}{2}+ \frac{2\gamma_1^{2}\epsilon_{1,t}^{2}}{\sigma_1}  + \frac{\gamma_2 \epsilon_{2,t}}{2}  + \frac{2\gamma_2^{2}\epsilon_{2,t}^{2}}{\sigma_2}, &  i = n
    \end{cases}
\end{align}

Now, for $i=1, 2$, consider taking step size $\epsilon_{i,t} = \frac{\var \sigma_{i}}{\beta_{i}}(1-a)$, for some constant $\var > 0$ that will be determined later, and denote the corresponding $\kappa_{i,t}$ as $\kappa_{i}(z)$ (we drop the dependence on time $t$ as step sizes are constant), where for $i=1, 2$, 
\begin{align}
\label{eq:OMD:constKappaUB1}
    \kappa_{i}(\var) ~\overset{(a)}{=}~ & - \frac{\left(4\beta_i - \delta_i - \gamma_i\right)\var}{2\beta_{i}}(1-a)\sigma_{i}+ 4\var^{2}(1-a)^{2}\sigma_{i}+\frac{\var\delta_{-i}}{2\beta_{-i}}(1-a)\sigma_{-i}  \nonumber \\
    & ~~~~ +  \frac{2\delta_{-i}^{2}\var^{2}}{\beta_{-i}^{2}}(1-a)^{2}\sigma_{-i} + \frac{(1-a)\theta_{i}}{2}+ \frac{(1-a)^{2}\theta_{i}}{4} \nonumber\\
    % ~\overset{(a)}{\leq}~ & -   \frac{\left(4-\frac{1}{\margin}\right)\beta_{i}\epsilon_{i,t}}{2}+ \frac{4\beta_i^{2}\epsilon_{i,t}^{2}}{\sigma_{i}}+  \frac{\delta_{-i}\epsilon_{-i,t}}{2} + \frac{2\delta_{-i}^{2}\epsilon_{-i,t}^{2}}{\sigma_{-i}} + \frac{3(1-a)}{4} \nonumber \\
      ~\overset{(b)}{\leq}~ & - \frac{\left(4 -\frac{1}{\margin}\right)\var}{2}(1-a)\sigma_{i}+ 4\var^{2}(1-a)\sigma_{i}+\frac{\var}{2\margin}(1-a)\sigma_{-i}  +  \frac{2\var^{2}}{\margin^{2}}(1-a)\sigma_{-i} + \frac{3(1-a)}{4} \nonumber\\
       ~=~ & (1-a)\left(\left(4\sigma_{i} +\frac{2\sigma_{-i}}{\margin^{2}}\right)\var^{2} - \left( \left(2- \frac{1}{2\margin}\right)\sigma_{i} - \frac{\sigma_{-i}}{2\margin}\right)\var + \frac{3}{4}\right) \nonumber\\
      ~\defeq~ & (1-a)f_{i,\margin}(\var)
      \,.
\end{align}
Here, in (a) we substitute $\epsilon_{i,t} = \frac{\var \sigma_{i}}{\beta_{i}}(1-a)$ for $i = 1,2$; in (b)
we use the fact that $\theta_{i}, a \in (0,1)$ (wich implies $(1-a)^{2} \leq 1-a$) and $\beta_{i} > \margin(\delta_{i} + \gamma_{i}) > \margin\max\{\delta_{i}, \gamma_{i}\}$.

We follow a similar argument as above and obtain 
\begin{align}
\label{eq:OMD:constKappaUB2}
    \kappa_{n}(\var) ~\overset{(a)}{=}~ &   - \frac{1-a}{2}+ \frac{(1-a)^{2}}{4}  + \frac{\var\gamma_{1}\sigma_{1}}{2\beta_{1}}(1-a)+ \frac{2\var^{2}\gamma_{1}^{2}\sigma_{1}}{\beta_{1}^{2}}(1-a)^{2}\nonumber \\
    & ~~~~ + \frac{\var\gamma_{2}\sigma_{2}}{2\beta_{2}}(1-a)+ \frac{2\var^{2}\gamma_{2}^{2}\sigma_{2}}{\beta_{2}^{2}}(1-a)^{2} \nonumber  \\
    % ~=~ &  - \frac{1-a}{2}+ \frac{(1-a)^{2}}{4} +  \frac{\gamma_1\epsilon_{1,t}}{2}+ \frac{2\gamma_1^{2}\epsilon_{1,t}^{2}}{\sigma_1}  + \frac{\gamma_2 \epsilon_{2,t}}{2}  + \frac{2\gamma_2^{2}\epsilon_{2,t}^{2}}{\sigma_2} \nonumber \\
     ~\overset{(b)}{\leq}~ & (1-a)\left(-\frac{1}{4} + \frac{\var\sigma_{1}}{2\margin}+ \frac{2\var^{2}\sigma_{1}}{\margin^{2}}+ \frac{\var\sigma_{2}}{2\margin}+ \frac{2\var^{2}\sigma_{2}}{\margin^{2}}\right) \nonumber  \\
      ~=~ &(1-a)\left(\frac{2}{\margin^{2}}\left(\sigma_{1} + \sigma_{2}\right)\var^{2}+ \frac{1}{2\margin}  \left(\sigma_{1} + \sigma_{2}\right)\var -\frac{1}{4}\right) \nonumber\\
      ~\defeq~ & (1-a)f_{n,\margin}(\var)
      \,,
\end{align}
where in (a) we substitute $\epsilon_{i,t} = \frac{\var \sigma_{i}}{\beta_{i}}(1-a)$ for $i = 1,2$; in (b) we used the fact that $\theta_{i}, a \in (0,1)$ and $\beta_{i} > \margin(\delta_{i} + \gamma_{i}) > \margin \max\{\delta_{i}, \gamma_{i} \}$ for any $i=1,2$.

Now, recall the definition $\mathcal{S}_{i,\margin} = \left\{\var > 0: f_{i,\margin}(\var) < 0 \right\}$. Then,  if we have $\cap_{i=1,2,n} \mathcal{S}_{i,m}  \neq \emptyset$, taking any $\cone \in \cap_{i=1,2,n} \mathcal{S}_{i,\margin} $ yields $\kappa_{i}(\cone)< 0$ for $i = 1,2,n$. Hence, Equation (\ref{eq:OMD:decayingBregman}) now becomes 
\begin{align}
\label{eq:OMD:decayingBregman1}
    & \sum_{i =1,2,n} \BDi(p_i^*, p_{i,t+1}) ~\leq~  \sum_{i =1,2,n} \BDi(p_i^*, p_{i,t}) + \sum_{i =1,2,n} \kappa_{i}(\cone)x_{i,t}, \quad \kappa_{i}(\cone) < 0\,.
\end{align}
Therefore, we know that 
\begin{align}
\label{eq:OMD:constRecurrence}
     \sum_{i =1,2,n} \BDi(p_i^*, p_{i,t+1}) ~<~  \sum_{i =1,2,n} \BDi(p_i^*, p_{i,t})\,.
\end{align}
Furthermore, by strong convexity, 
$$  \sum_{i =1,2,n} \BDi(p_i^*, p_{i,t}) \geq  \sum_{i =1,2,n} \frac{\sigma_{i}^{2}}{2} \left(p_i^*-  p_{i,t}\right)^2 \geq \frac{\min_{i=1,2,n} \sigma_{i}^{2}}{2} \norm{\bm{p}^{*} - \bm{p}_{t}}.$$
Hence, for any small $\epsilon > 0$, if there exists some $\te \in \N^{+}$  such that $\sum_{i =1,2,n} \BDi(p_i^*, p_{i,\te })\leq \frac{\epsilon\cdot \min_{i=1,2,n} \sigma_{i}^{2}}{2}$, then by Equation (\ref{eq:OMD:constRecurrence}),  $\sum_{i =1,2,n} \BDi(p_i^*, p_{i,t})\leq \frac{\epsilon\cdot \min_{i=1,2,n} \sigma_{i}^{2}}{2}$ for all $t \geq \te$, which further implies $\norm{\bm{p}^{*} - \bm{p}_{t}}\leq \epsilon$ for all  $t \geq \te$. Hence $\left(\bm{p}_{t}, r_{t}\right) \overset{t\to \infty}{\longrightarrow} \left(\bm{p}^{*}, r^{*}\right)$.

Thus, it remains to show that for any small $\epsilon > 0$, there exists $\te > 0$ such that $\sum_{i =1,2,n} \BDi(p_i^*, p_{i, \te}) < \epsilon$. We will prove this by contradiction. If this is not the case, there exists $\epsilon > 0$, and $\sum_{i =1,2,n} \BDi(p_i^*, p_{i,t}) \geq \epsilon $ for all $t \geq 0$. Define $R(\var_{1},\var_{2}, \var_{3}) = \sum_{i=1,2,3}R_{i}(\var_{i})$ for any $\var_{1},\var_{2}, \var_{3}\in \R$, and slightly abuse the notation to define $D: \R^{3} \times \R^{3} \to \R $ as the Bregman divergence with respect to $R$. In the rest of this proof for simplicity we also write $\bm{p}^{*} = (p_{1}^{*}, p_{2}^{*}, r^{*})$ and $\bm{p}_{t} =  (p_{1,t}, p_{2,t}, r_{t})$. A simple analysis shows $D(\bm{p}^{*}, \bm{p}_{t}) =\sum_{i =1,2,n} \BDi(p_i^*, p_{i,t})$. 

Since $R_i$ is continuously differentiable (by definition of Bregman divergence), $R$ is also continuously differentiable, and hence it is easy to see for any $\bm{x},\bm{y} \in \R^{3}$ there exists $ \delta > 0$ such that 
$$ D(\bm{x},\bm{y} ) < \epsilon, \quad \forall \norm{\bm{x}-\bm{y}} < \delta.$$
Since we assumed  $D(\bm{p}^{*}, \bm{p}_{t}) = \sum_{i =1,2,n} \BDi(p_i^*, p_{i,t}) \geq \epsilon $ for all $t\geq 0$, the above implies 
$\norm{\bm{p}^{*} - \bm{p}_{t}} \geq \delta$ for all $t \geq 0$. Hence following Equation (\ref{eq:OMD:decayingBregman1}),
\begin{align*}
    D(\bm{p}^{*}, \bm{p}_{t+1})  ~\leq~ & D(\bm{p}^{*}, \bm{p}_{t}) + \sum_{i =1,2,n} \kappa_{i}(\cone)\left(p_{i}^{*} - p_{i,t} \right)^{2} \\
    ~\leq~ &  D(\bm{p}^{*}, \bm{p}_{t}) +\max_{i=1,2,n}\kappa_{i}(\cone) \sum_{i =1,2,n} \left(p_{i}^{*} - p_{i,t} \right)^{2}\\
      ~=~ &  D(\bm{p}^{*}, \bm{p}_{t})+ \max_{i=1,2,n}\kappa_{i}(\cone)  \cdot \norm{\bm{p}^{*} - \bm{p}_{t}}^{2} \\
       ~\overset{(a)}{\leq}~ &  D(\bm{p}^{*}, \bm{p}_{t})+ \delta^{2}\max_{i=1,2,n}\kappa_{i}(\cone)  \\
       ~\overset{(b)}{\leq}~ &  D(\bm{p}^{*}, \bm{p}_{1}) + t \delta^{2}\max_{i=1,2,n}\kappa_{i}(\cone) \,,
\end{align*}
% \begin{align*}
%     \sum_{i =1,2,n} \BDi(p_i^*, p_{i,t+1}) ~\leq~ & \sum_{i =1,2,n}\BDi(p_{i}^*, p_{i,t}) + \sum_{i =1,2,n} \kappa_{i}(\cone)\left(p_{i}^{*} - p_{i,t} \right)^{2} \\
%     ~\overset{(a)}{\leq}~ & \sum_{i =1,2,n} \BDi(p_i^*, p_{i,t}) + \sum_{i =1,2,n} \kappa_{i}(\cone)\delta_{i}^{2}\\
%       ~\leq~ & \sum_{i =1,2,n} \BDi(p_i^*, p_{i,t}) + \min_{i = 1,2,n}\delta_{i}^{2}\cdot \sum_{i =1,2,n} \kappa_{i}(\cone)\\
%       ~\overset{(b)}{\leq}~ & \sum_{i =1,2,n} \BDi(p_{i}^{*}, p_{i,1}) + \min_{i = 1,2,n}\delta_{i}^{2} \cdot \sum_{\tau = 1}^{t}\sum_{i =1,2,n} \kappa_{i,\tau}(\cone)\,,
% \end{align*} 
where (a) follows because $\kappa_{i}(\cone) < 0$ for $i = 1,2,n $  and $\norm{\bm{p}^{*} - \bm{p}_{t}} \geq \delta$ for all $t \geq 0$; (b) follows from a telescoping argument.
Finally, $\max_{i=1,2,n}\kappa_{i}(\cone) < 0$ implies the right hand side in the above inequality goes to negative infinity as  $t$ goes to infinity. This implies that  $ D(\bm{p}^{*}, \bm{p}_{t+1})  = \sum_{i =1,2,n} \BDi(p_i^*, p_{i,t}) \leq - \infty $, which contradicts nonnegativity of Bregman divergence. Hence, for any small $\epsilon > 0$, there exists $\te > 0$ such that $\sum_{i =1,2,n} \BDi(p_i^*, p_{i,\te}) < \epsilon$,  concluding the proof.

\subsection{Proof of Corollary \ref{cor:OMD:convergenceSuf}}
When $\sigma_{1} = \sigma_{2} = \sigma$, Equation (\ref{eq:OMD:constConvRegion}) becomes 
\begin{align*}
    f_{i,\margin}(\var) = \begin{cases}
    2\sigma \left(2 +\frac{1}{\margin^{2}}\right)\var^{2} - \sigma\left(2 -\frac{1}{\margin}\right)\var+ \frac{3}{4} & i=1,2\\
    \frac{4\sigma}{\margin^{2}}\var^{2}+ \frac{\sigma}{\margin}\var -\frac{1}{4} & i = n
    \end{cases}\,.
\end{align*}
Since in this case the function $f_{1,\margin}(\var)$ and  $f_{2,\margin}(\var)$ are identical, we only consider $f_{1,\margin}(\var)$. Note that the function $f_{1,\margin}(\var)$ has two distinct zero roots if and only if its discriminant is strictly greater than 0, i.e., $\left(1-\frac{1}{2\margin}\right)^{2} - \frac{3}{2\sigma}\left(2+\frac{1}{\margin^{2}}\right) > 0$ which is equivalent to $\sigma > \frac{6(2\margin^{2} + 1)}{(2\margin-1)^{2}}$. Therefore, when $f_{1,\margin} $ has two distinct zero roots, the smaller one is given by 
\begin{align}
\label{eq:const:m1}
    \var_{1} ~=~ & \frac{1-\frac{1}{2\margin} - \sqrt{\left(1-\frac{1}{2\margin}\right)^{2} - \frac{3}{2\sigma}\left(2+\frac{1}{\margin^{2}}\right)}}{2\left(2+\frac{1}{\margin^{2}} \right)} \nonumber \\
    ~=~ & \frac{1}{\sigma}\cdot \underbrace{\frac{{3}/{4}}{1-\frac{1}{2\margin} + \sqrt{\left(1-\frac{1}{2\margin}\right)^{2} - \frac{3}{2\sigma}\left(2+\frac{1}{\margin^{2}}\right)}}}_{A}> 0.
\end{align}
Similarly, the discriminant of $f_{n,\margin}$ (i.e., $\frac{1}{\margin^{2}}+ \frac{4}{\sigma \margin^{2}}$) is always positive, so  $f_{n,\margin}$ always has two zero roots. The larger one is given by 
\begin{align}
    \label{eq:const:m2}
    \var_{2} = \frac{-\frac{1}{\margin} + \sqrt{\frac{1}{\margin^{2}}+ \frac{4}{\sigma \margin^{2}}}}{\frac{8}{\margin^{2}}}= \frac{1}{\sigma} \cdot \underbrace{\frac{{1}/{2}}{\frac{1}{\margin} + \sqrt{\frac{1}{\margin^{2}}+ \frac{4}{\sigma \margin^{2}}}}}_{B} > 0.
\end{align}
For any $\sigma > \frac{6(2\margin^{2} + 1)}{(2\margin-1)^{2}}$, the two roots of $f_{1,\margin}(\var)$ are both positive, while  $f_{n,\margin}(\var)$ always has one positive root and one negative root. Hence, using a simple geometric argument regarding two quadratic functions, it is easy to see that if $\var_{2} > \var_{1}$, any $\cone \in (\var_{1},\var_{2})$ satisfies $f_{1,\margin}(\cone) , f_{n,\margin}(\cone) < 0$.

Now, consider $ \var_{2} -\var_{1} = \frac{1}{\sigma} (B- A)$. Since we observe $A$ is decreasing in $\sigma$ and $B$ is increasing in $\sigma$, we have $B-A$ is increasing in $\sigma$. By direct calculations, we see that when $\sigma =  \frac{\left(2\margin^{2} + 7\right)^{2}}{8\margin^{3} - 36\margin + 8} > 0$, $\var_{1} = \var_{2}$ (i.e., $B -A =0$). Therefore because $B-A$ is increasing in $\sigma$, we conclude $B - A>  0$ for any $$\sigma > \sigma_{0}\defeq \max\left\{\frac{6(2\margin^{2} + 1)}{(2\margin-1)^{2}}, \frac{\left(2\margin^{2} + 7\right)^{2}}{8\margin^{3} - 36\margin + 8}\right\}\,, $$
which implies $\var_{2} > \var_{1}$ for any $\sigma > \sigma_{0}$. 

In sum, we conclude for any $\margin > 2$, if $\sigma >\sigma_0$, there exists $\cone > 0$ that depends on $\sigma$ and $\margin$ such that $f_{i,\margin}(\cone) < 0$ for $i=1,2,n$, and by Theorem \ref{thm:OMD:convergenceSuf}, this implies that there exist constant step sizes under which prices and reference prices converge to the unique  interior SNE.

\subsection{Proof of Theorem \ref{thm:OMD:constExpConvergence}}
When $R_{i}(x) = \frac{\sigma}{2} x^{2}$ for $i = 1,2$, we have $\sigma_{1} = \sigma_{2} = \sigma$, and Equation (\ref{eq:OMD:constConvRegion}) becomes
\begin{align*}
    f_{i,\margin}(\var) = \begin{cases}
    2\sigma \left(2 +\frac{1}{\margin^{2}}\right)\var^{2} - \sigma\left(2 -\frac{1}{\margin}\right)\var + \frac{3}{4}, & i=1,2\\
    \frac{4\sigma}{\margin^{2}}\var^{2}+ \frac{\sigma}{\margin}\var -\frac{1}{4}, & i = n    \end{cases}\,.
\end{align*}
We  define $ h_{i,\margin}(\var):=  f_{i,\margin}(\var)/2\sigma$ for $i=1,2$ and $h_{n,\margin}(\var):= f_{n,\margin}(\var)$, i.e. %\negin{Since gradient is $g$, please consider changing the notation here. }
\begin{align}
\begin{aligned}
\label{eq:OMD:constConvRegion1}
    h_{i,\margin}(\var) = \begin{cases}
    \left(2 +\frac{1}{\margin^{2}}\right)\var^{2} - \left(1 -\frac{1}{2\margin}\right)\var + \frac{3}{8\sigma} & i=1,2\\
    \frac{4\sigma}{\margin^{2}}\var^{2}+ \frac{\sigma}{\margin}\var -\frac{1}{4} & i = n
    \end{cases}\,.
\end{aligned}
\end{align}
Note that for any $i=1,2,n$, $f_{i,\margin}(\var) < 0$ if and only if $h_{i,\margin}(\var) < 0$. Hence, according to Corollary \ref{cor:OMD:convergenceSuf}, we know that when $\margin > 2$ and $\sigma > \sigma_{0} = \max\left\{\frac{6(2\margin^{2} + 1)}{(2\margin-1)^{2}}, \frac{\left(2\margin^{2} + 7\right)^{2}}{8\margin^{3} - 36\margin + 8} \right\}$, for any $M \in (\var_{1},\var_{2})$ (defined in Equations (\ref{eq:const:m1}) and (\ref{eq:const:m2})) we have $f_{i,\margin}(\cone) < 0$ for $i=1,2,n$, which implies  $h_{i,\margin}(\cone) < 0$ for $i=1,2,n$. Furthermore, via a simple geometric argument, the quadratic functions  $h_{1,\margin}$ (with two positive zero roots) and $h_{n,\margin}$ (with two zero roots, one positive and one negative) have a unique intersection point $\widetilde{\cone} \in (\var_{1},\var_{2})$. Define 
 $H\defeq h_{1,\margin}(\widetilde{\cone}) = h_{2,\margin}(\widetilde{\cone}) = h_{n,\margin}(\widetilde{\cone}) <0$. Furthermore, since $\min_{m\geq 0}h_{n,\margin}(\var) = -\frac{1}{4}$, we have   
 $$-\frac{1}{4} \leq H = h_{1,\margin}(\widetilde{\cone}) = h_{2,\margin}(\widetilde{\cone}) = h_{n,\margin}(\widetilde{\cone}) <0\,. $$
 
Now, note that when $R_{1}(x) = R_{2}(x) =\frac{\sigma}{2} x^{2}$, $\BDone(p,p')= \BDtwo(p,p') = \frac{\sigma}{2}(p-p')^{2}$. Also recall $R_{n}(x) = \frac{1}{2}x^{2}$, so $\BDn(p,p') = \frac{1}{2}(p-p')^{2}$. Hence, $\sum_{i =1,2,n} \BDi(p_i^*, p_{i,t}) =  \frac{1}{2} \left(\sigma x_{t}  + x_{n,t}\right)$, where we define $x_{t} = \norm{\bm{p}^{*} - \bm{p}_{t}}^{2}$ and $x_{n,t}= (r^{*} - r_{t})^{2}$ as in the proof of Theorem \ref{thm:OMD:convergenceSuf}. Hence, by taking $\epsilon_{i,t} =\frac{\sigma \widetilde{\cone}(1-a)}{\beta_{i}}$, and continuing from Equation (\ref{eq:OMD:decayingBregman1}), we get 
\begin{align*}
  \frac{1}{2} \left(\sigma x_{t+1}  + x_{n,t+1}\right) ~\leq~ & \frac{1}{2} \left(\sigma x_{t}  + x_{n,t}\right)+ \sum_{i =1,2,n} \kappa_{i}(\widetilde{\cone})x_{i,t} \nonumber \\
    ~\overset{(a)}{\leq}~ & \frac{1}{2} \left(\sigma x_{t}  + x_{n,t}\right) + (1-a)\sum_{i =1,2,n} f_{i,\margin}(\widetilde{\cone})x_{i,t}\\
     ~\overset{(b)}{=}~ & \frac{1}{2} \left(\sigma x_{t}  + x_{n,t}\right)  + (1-a)\left(\sum_{i =1,2} \sigma h_{i,\margin}(\widetilde{\cone})x_{i,t} + h_{n,\margin}(\widetilde{\cone})x_{n,t}\right)\\
    %   ~\overset{(c)}{\leq}~ & \frac{1}{2} \left(\sigma x_{t}  + x_{n,t}\right)  + (1-a)\left(\sum_{i =1,2} \sigma h_{\sigma}(\widetilde{\cone})x_{i,t} + h(\widetilde{\cone})x_{n,t}\right)\\
      ~\overset{(c)}{=}~ & \frac{1}{2} \left(\sigma x_{t}  + x_{n,t}\right) + (1-a) H\cdot\left(\sigma x_{t} + x_{n,t}\right) \\
      ~=~ & \frac{1}{2} \left(1 + 2(1-a)H\right)\left(\sigma x_{t}  + x_{n,t}\right)\,.
\end{align*}
Here, (a) follows from upper bounding $\kappa_{i}(\var)$ with $f_{i,\margin}(\var)$ for any $\var > 0$ and $i=1,2,n$ in Equations (\ref{eq:OMD:constKappaUB1}) and  (\ref{eq:OMD:constKappaUB2}) within the proof of Theorem \ref{thm:OMD:convergenceSuf}; (b) follows from the definition of $h_{i,\margin}$ in Equation (\ref{eq:OMD:constConvRegion1}); 
(c) follows from the definition of $H \defeq h_{i,\margin}(\widetilde{\cone})\in [-\frac{1}{4},0)$ for $i = 1,2,n$ and $x_{t} = x_{1,t} + x_{2,t}$.
% (c) 
% follows from $\margin \geq 1 $ and the definition of $h_{\sigma}$ and $h$ such that $h_{\sigma}(\var) \geq h_{1,\margin}(\var)=h_{2,\margin}(\var)$ and $h(\var) \geq h_{n,\margin}(\var)$ for all $m > 0$; (d) follows from the definition of $H \defeq h(\cone) = h_{\sigma}(\cone)$.

Using a telescoping argument, we have
$$ \sigma x_{t} < \sigma x_{t}  + x_{n,t} \leq \left(1+ 2(1-a)H\right)^{t}\left(\sigma x_{1}+ x_{n,1}\right) \leq  \left(\sigma x_{1}+ x_{n,1}\right) \left(\frac{1+a}{2}\right)^{t}\,,
% \left(\Bar{p} - \underline{p}\right)^{2}\left(\frac{1+a}{2}\right)^{t}\,,
$$ 
where the final inequality follows from $0 < 1 + 2(1-a)H \leq \frac{1+a}{2}$ since $H \in [-\frac{1}{4},0)$. Finally, because
$x_{1} \leq 2\left(\Bar{p} - \underline{p}\right)^{2}$ and $x_{n,1}\leq \left(\Bar{p} - \underline{p}\right)^{2}$,
we have 
$$  x_{t} < \left( x_{1}+ \frac{1}{\sigma} x_{n,1}\right) \left(\frac{1+a}{2}\right)^{t} \leq \frac{1+2\sigma}{\sigma} \left(\Bar{p} - \underline{p}\right)^{2} \left(\frac{1+a}{2}\right)^{t}\,.
% \left(\Bar{p} - \underline{p}\right)^{2}\left(\frac{1+a}{2}\right)^{t}\,,
$$ 

\subsection{Supplementary Figures for Section \ref{sec:firstOrder}}
\label{app:sec:addFigures}
\begin{figure}[H]
	\centering
	\begin{subfigure}{0.45\linewidth}
		\centering
		\includegraphics[width=1\linewidth]{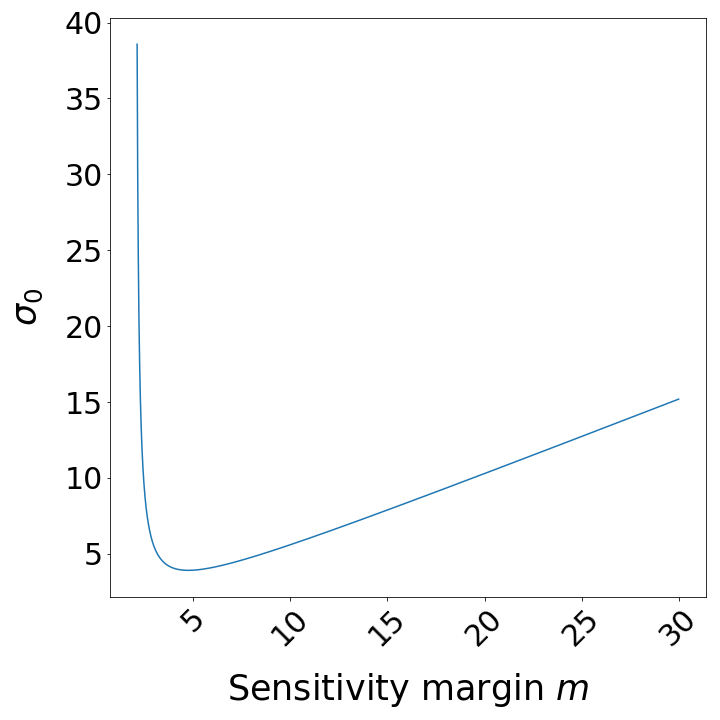}
    \caption{}
		\label{fig:sig0K}
	\end{subfigure}
		\begin{subfigure}{0.45\linewidth}
		\centering
		\includegraphics[width=1.0\linewidth]{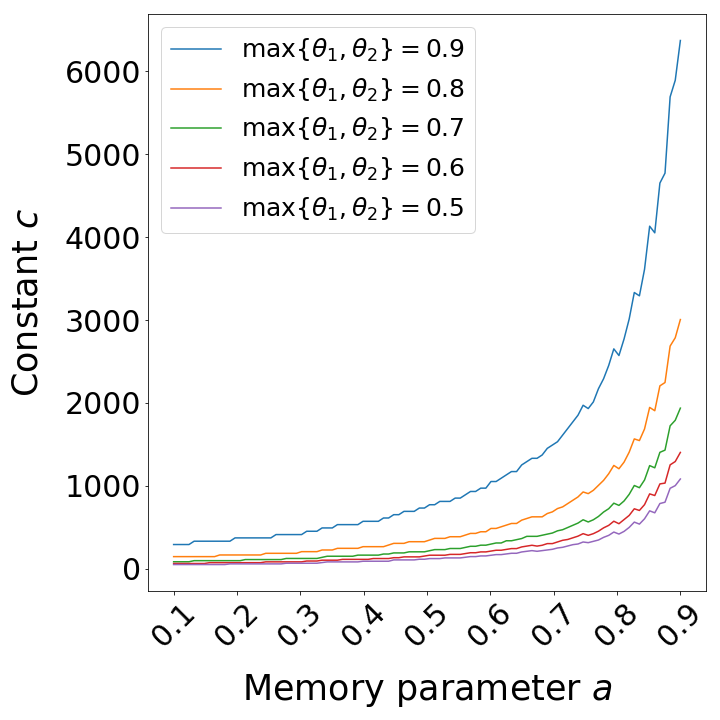}
        \caption{}
	\label{fig:rate_constant}
	\end{subfigure}
	\caption{(a) $\sigma_{0}$ as a function of sensitivity margin $m$, where $\sigma_{0}$ is defined in Corollary \ref{cor:OMD:convergenceSuf} (b) Illustration of absolute constant $c$ in Theorem \ref{thm:OMD:rateDecStep} w.r.t. memory parameter $a$ and $\max\{ \theta_{1}, \theta_{2}\}$. All other model parameters take respective values as in Example \ref{ex:setting}, and firm $i=1,2$ again adopts regularizer $R_{i}(\var) = \frac{1}{2}\var^{2}$.} 
\end{figure}

\section{Supplementary Lemmas of Section \ref{sec:firstOrder}}

\begin{lemma}
\label{lem:OMD:bregmanupdateUB}
For $i = 1,2,n$ and any $\tilde{\var} \in \mathcal{P}$, we have for any $t \in \N^{+}$,
\begin{align}
\label{eq:OMD:distanceApprox}
    \BDi(\tilde{\var}, p_{i,t+1})~\leq~ \BDi(\tilde{\var}, p_{i,t}) +\epsilon_{i,t} \cdot g_{i,t} \left(\tilde{\var}- p_{i,t} \right)+  \frac{\left(\epsilon_{i,t}g_{i,t}\right)^2}{2\sigma_i}\,.
\end{align}
\end{lemma}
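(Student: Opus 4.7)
The plan is to follow the canonical one-step descent argument for online mirror descent, adapted to our 1D projected setting. Fix $i\in\{1,2,n\}$ and $\tilde{\var}\in\mathcal{P}$, and let $y_{i,t+1}$ denote the proxy variable satisfying $R_i'(y_{i,t+1})=R_i'(p_{i,t})-\epsilon_{i,t}g_{i,t}$ from Step 5 of Algorithm~\ref{algo:firmOMD}, so that $p_{i,t+1}=\Pi_{\mathcal{P}}(y_{i,t+1})$. The argument assembles three ingredients which I will put together at the end.

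First, I will invoke the three-point identity for Bregman divergences (a direct expansion of the definition): for any $a,b,c\in\R$, $\BDi(a,c) = \BDi(a,b) + \BDi(b,c) + (R_i'(b)-R_i'(c))(a-b)$. Setting $a=\tilde{\var}$, $b=p_{i,t}$, $c=y_{i,t+1}$, and using the update rule to rewrite $R_i'(p_{i,t})-R_i'(y_{i,t+1})=\epsilon_{i,t}g_{i,t}$, this yields $\BDi(\tilde{\var}, y_{i,t+1}) = \BDi(\tilde{\var}, p_{i,t}) + \BDi(p_{i,t}, y_{i,t+1}) + \epsilon_{i,t}g_{i,t}(\tilde{\var}-p_{i,t})$.

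Second, I will establish $\BDi(\tilde{\var}, p_{i,t+1})\le \BDi(\tilde{\var}, y_{i,t+1})$ for every $\tilde{\var}\in\mathcal{P}$. The subtlety is that the algorithm uses the Euclidean projection $\Pi_{\mathcal{P}}$ rather than a Bregman projection, whereas Pythagoras for Bregman divergences classically requires the latter. In our scalar setting, however, the two coincide: because $R_i'$ is strictly increasing by strong convexity, the unconstrained minimizer of $p\mapsto \BDi(p,y_{i,t+1})$ is $y_{i,t+1}$ itself, so the constrained minimizer over $[\underline{p},\bar{p}]$ is simply the nearest endpoint---exactly what $\Pi_{\mathcal{P}}(y_{i,t+1})$ returns. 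Pythagoras for Bregman divergences then gives the desired contraction.

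Third, I will bound $\BDi(p_{i,t},y_{i,t+1})$. Strong convexity yields $R_i(y_{i,t+1})-R_i(p_{i,t})\ge R_i'(p_{i,t})(y_{i,t+1}-p_{i,t})+\tfrac{\sigma_i^2}{2}(y_{i,t+1}-p_{i,t})^2$, and substituting this into the definition of $\BDi(p_{i,t},y_{i,t+1})$ along with the update rule gives $\BDi(p_{i,t}, y_{i,t+1}) \le \epsilon_{i,t}g_{i,t}(p_{i,t}-y_{i,t+1}) - \tfrac{\sigma_i^2}{2}(p_{i,t}-y_{i,t+1})^2$. Applying Young's inequality (equivalently, maximizing over the scalar gap $p_{i,t}-y_{i,t+1}$) then produces the claimed second-order term $(\epsilon_{i,t}g_{i,t})^2/(2\sigma_i)$. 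Combining this with the first two ingredients produces the lemma's inequality. The main obstacle is the second step: in general dimensions one cannot swap a Bregman projection for a Euclidean one and retain Pythagoras, but the scalar structure plus strict monotonicity of $R_i'$ collapses this distinction cleanly, leaving the other two steps as routine manipulations of the definition of Bregman divergence and strong convexity.
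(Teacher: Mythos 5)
Your proof is correct, and it reaches the lemma by a mildly different decomposition than the paper. The paper folds the gradient step and the projection into a single composite Bregman minimization, writing $p_{i,t+1}=\arg\min_{p\in\mathcal{P}}\{\BDi(p,p_{i,t})+p\,\epsilon_{i,t}g_{i,t}\}$ and invoking its Lemma \ref{lem:bregman}(ii) (the Chen--Teboulle optimality lemma) with comparison point $\tilde\var$; this yields a bound with the term $\epsilon_{i,t}g_{i,t}(\tilde\var-p_{i,t+1})-\BDi(p_{i,t},p_{i,t+1})$, after which strong convexity and Young's inequality are applied to the gap $p_{i,t}-p_{i,t+1}$. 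You instead keep the unprojected proxy $y_{i,t+1}$ explicit: the three-point identity (essentially Lemma \ref{lem:bregman}(i)) at $(\tilde\var,p_{i,t},y_{i,t+1})$, a Bregman--Pythagoras step to pass from $y_{i,t+1}$ to $p_{i,t+1}$, and strong convexity plus Young's applied to $p_{i,t}-y_{i,t+1}$. Both arguments hinge on the same structural fact --- that in one dimension the Euclidean clip $\Pi_{\mathcal{P}}$ coincides with the Bregman projection --- which the paper asserts tersely (``due to convexity of $R_i$'') and which you justify more carefully via strict monotonicity of $R_i'$; your route makes the role of the projection (and why it can only help) more transparent, while the paper's route avoids stating Pythagoras by absorbing everything into the composite-minimization lemma. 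One small point: you state strong convexity with modulus $\sigma_i^2/2$ (as in the paper's Definition) but report the final second-order term as $(\epsilon_{i,t}g_{i,t})^2/(2\sigma_i)$; strictly these correspond to different conventions ($2\sigma_i^2$ versus $2\sigma_i$ in the denominator). The paper's own proof commits the identical slippage (its step (a) uses $\sigma_i/2$), so this is a shared notational inconsistency rather than a gap in your argument --- just fix the convention consistently in either direction.
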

\begin{proof} 
In the projection step of Algorithm \ref{algo:firmOMD}, we have $p_{i,t+1} = \Pi_{\mathcal{P}}(y_{i,t+1})$. Since we are working with one-dimensional decision sets, it is easy to see that $\Pi_{\mathcal{P}}(y_{i,t+1}) = \arg\min_{p\in \mathcal{P}}\BDi(p, y_{i,t+1})$ due to convexity of $R_i$. Recalling the definition $R_i'(p) = \frac{d R_{i}(\var)}{d\var}\Big|_{\var = p}$, we have
\begin{align*}
    p_{i,t+1} =  \arg\min_{p\in \mathcal{P}}\BDi(p, y_{i,t+1}) ~=~ &  \arg\min_{p\in \mathcal{P}} R_{i}(p) - R_{i}(y_{i,t+1}) - R_{i}'(y_{i,t+1}) (p-y_{i,t+1})\\ 
    ~=~ &  \arg\min_{p\in \mathcal{P}} R_{i}(p)  - p\cdot R_{i}'(y_{i,t+1})\\
     ~\overset{(a)}{=}~ &  \arg\min_{p\in \mathcal{P}} R_{i}(p)  - p\cdot \left(R_{i}'(p_{i,t}) - \epsilon_{i,t}g_{i,t}\right)\\
     ~=~ &  \arg\min_{p\in \mathcal{P}}  R_{i}(p)  -  R_{i}(p_{i,t}) -   R_{i}'(p_{i,t})\left(p - p_{i,t}\right) + p\cdot \epsilon_{i,t}g_{i,t} \\
      ~=~ &  \arg\min_{p\in \mathcal{P}} \BDi(p, p_{i,t}) + p\cdot \epsilon_{i,t}g_{i,t} \,.
\end{align*}
Here (a) %\negin{What is (a)? Please explain how you got the fourth equations. I know it is obvious. So maybe instead of explaining, don't change the order of the terms so it becomes visible what you added there}\jasoncom{fixed} 
follows from the proxy update step in Algorithm \ref{algo:firmOMD}. Now, evoking Lemma \ref{lem:bregman} (ii) by taking $x = p$, $f(p) =  p \cdot \epsilon_{i,t}g_{i,t}$, $z = p_{i,t}$, $y= \tilde{\var}\in \mathcal{P}$, we have %\negin{in the statement of the lemma, I replaced $q$ with $z$. Please follow the same thing here. But be careful about the other $z$!} \jasoncom{replaced}
\begin{align*}
    \BDi(\tilde{\var}, p_{i,t+1}) \leq \BDi(\tilde{\var}, p_{i,t}) + \epsilon_{i,t}g_{i,t} \left(\tilde{\var}- p_{i,t+1} \right) - \BDi(p_{i,t}, p_{i,t+1})\,.
\end{align*}

It then follows that 
\begin{align*}
    & \BDi(\tilde{\var}, p_{i,t+1}) \nonumber \\
    ~\leq~ &  \BDi(\tilde{\var}, p_{i,t}) +\epsilon_{i,t}g_{i,t} \left( \tilde{\var} - p_{i,t} \right) +  \epsilon_{i,t}g_{i,t} \left( p_{i,t} - p_{i,t+1} \right) - \BDi(p_{i,t}, p_{i,t+1}) \nonumber\\
    ~\overset{(a)}{\leq}~ &  \BDi(\tilde{\var}, p_{i,t}) +\epsilon_{i,t}g_{i,t} \left( \tilde{\var} - p_{i,t} \right) +  \epsilon_{i,t}g_{i,t} \left( p_{i,t} - p_{i,t+1} \right) - \frac{\sigma_i}{2}\left(p_{i,t}- p_{i,t+1}\right)^2 \nonumber\\
    ~\leq~ &  \BDi(\tilde{\var}, p_{i,t}) +\epsilon_{i,t}g_{i,t} \left( \tilde{\var} - p_{i,t} \right) +  \frac{\left(\epsilon_{i,t}g_{i,t}\right)^2}{2\sigma_i}\,,
\end{align*}
where (a) follows from strong convexity of $R_i$.
\end{proof}

\begin{corollary}\label{cor:SNEBregmanUpdate}
Under Assumption 1, let $(\bm{p}^{*},r^{*})$ be the unique interior SNE as illustrated in Lemma \ref{lem:uniqueNE}, then for $i = 1,2,n$, %\negin{is it the first time, you define $g_i^*$. If not, please replace $:=$ with $=$} \jasoncom{now, I defined $ g_i^*$ at the very beginning of this appendix, along with the definition of Bregman divergence.}
\begin{align}
\label{eq:SNEFOC}
     g_i^* = \frac{\partial \widetilde{\pi}_i}{\partial p_i}\Big|_{\bm{p} = \bm{p^{*}}, r = r^{*}} = 0\,,
\end{align}
and for any $t \in \N^{+}$
\begin{align}
\label{eq:SNEBregmanUpdate}
    \BDi(p_{i}^{*}, p_{i,t+1}) 
    ~\leq~  \BDi(p_{i}^{*}, p_{i,t}) -\epsilon_{i,t}\left(g_{i}^{*} - g_{i,t} g_{i}^{*}\right) \left(p_{i}^{*} - p_{i,t} \right) +  \frac{\left(\epsilon_{i,t}g_{i,t}\right)^2}{2\sigma_i}\,.
\end{align}
\end{corollary}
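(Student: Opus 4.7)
The plan is to combine the prior lemma (the one bounding $\BDi(\tilde{z}, p_{i,t+1})$, which I will call Lemma~\ref{lem:OMD:bregmanupdateUB}) with the first-order optimality conditions satisfied at the interior SNE, which in turn come from Proposition~\ref{lem:OMD:PSNE}.

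First I would establish $g_i^* = 0$ for $i=1,2,n$. By Proposition~\ref{lem:OMD:PSNE}, the interior SNE $(\bm{p}^*, r^*)$ is the unique PSNE of the 3-player static game in which each player $i\in\{1,2,n\}$ minimizes $\widetilde{\pi}_i$ over the common action set $\mathcal{P}=[\underline{p},\bar p]$. By Assumption~\ref{assum:FOCinterior} this PSNE lies in $(\underline{p},\bar p)^3$, i.e., strictly in the interior for every coordinate. Since each $\widetilde{\pi}_i$ is convex in $p_i$ and the minimizer $p_i^*$ of $\widetilde{\pi}_i(\cdot, \bm{p}_{-i}^*)$ is interior to $\mathcal{P}$, the unconstrained first-order optimality condition must hold, yielding
\[
g_i^* \;=\; \frac{\partial \widetilde{\pi}_i}{\partial p_i}\bigg|_{\bm{p}=\bm{p}^*,\,r=r^*} \;=\; 0, \qquad i=1,2,n.
\]

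Second, to obtain the recursive bound, I would apply Lemma~\ref{lem:OMD:bregmanupdateUB} with the specific choice $\tilde{z}=p_i^*\in\mathcal{P}$ (which is valid since $p_i^*\in(\underline{p},\bar p)\subset\mathcal{P}$). This immediately gives
\[
\BDi(p_i^*, p_{i,t+1}) \;\leq\; \BDi(p_i^*, p_{i,t}) + \epsilon_{i,t}\, g_{i,t}\,(p_i^* - p_{i,t}) + \frac{(\epsilon_{i,t} g_{i,t})^2}{2\sigma_i}.
\]
Then, using $g_i^*=0$ established above, I would rewrite the middle term as
\[
\epsilon_{i,t}\, g_{i,t}\,(p_i^* - p_{i,t}) \;=\; -\,\epsilon_{i,t}\,(g_i^* - g_{i,t})\,(p_i^* - p_{i,t}),
\]
which delivers exactly the form claimed in Equation~\eqref{eq:SNEBregmanUpdate}.

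The main (and essentially only) subtlety is verifying that the first-order condition $g_i^*=0$ applies to all three players, including nature $n$, rather than only to the two real firms. This is handled cleanly by Proposition~\ref{lem:OMD:PSNE}, which treats nature symmetrically with the firms in the induced 3-player static game, so the interior argument applies uniformly. Once that is in place, the rest is a direct substitution into Lemma~\ref{lem:OMD:bregmanupdateUB}, and no further computation is needed.
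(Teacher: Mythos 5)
Your proposal is correct and takes essentially the same route as the paper: the interior PSNE/SNE of the induced 3-player game (Proposition \ref{lem:OMD:PSNE}, Assumption \ref{assum:FOCinterior}) gives the first-order condition $g_i^*=0$ for $i=1,2,n$, and then Lemma \ref{lem:OMD:bregmanupdateUB} applied with $\tilde\var = p_i^*$ together with the substitution $g_{i,t} = -(g_i^*-g_{i,t})$ yields the recursion. (The displayed factor $g_i^* - g_{i,t}g_i^*$ in the corollary is a typo for $g_i^*-g_{i,t}$; what you prove is exactly the version the paper proves and uses later, e.g., in Equation \eqref{eq:OMD:standardAnalysis1}.)
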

\begin{proof}
Similar to the proof of Lemma \ref{lem:uniqueNE} and Proposition \ref{lem:OMD:PSNE}, the SNE  $(\bm{p}^{*},r^{*})$ must satisfy first order conditions w.r.t. quadratic cost function $\widetilde{\pi}_{1},\widetilde{\pi}_{2},\widetilde{\pi}_{n} $, respectively, due to the fact that it lies in the interior of the decision set. So $ g_i^*  = 0$ for $i = 1,2,n$.

Furthermore, Evoking Lemma \ref{lem:OMD:bregmanupdateUB} by replacing $\var$ with $p_{i}^{*}$ and combining  $ g_i^*  = 0$ yields the second part of the proof.
\end{proof}

\begin{lemma}[Lemma 3.1 and 3.2 of \cite{chen1993convergence}]
\label{lem:bregman}
Let $\BD: \mathcal{C} \times \mathcal{C} \to \R^+$ be the Bregman divergence associated with convex function $R$ on the convex set $\mathcal{C}$: $\BD(x,y) = R(x) - R(y) -  R'(y)(x- y) ~~~~,\forall x,y \in \mathcal{C}.$ Then, 
\begin{enumerate}
    \item[(i)]  For any $x,y,z \in \mathcal{C}$, $\BD(x,y)  + \BD(y,z)= \BD(x,z) +\left(R'(z) -  R'(y)\right)(x - y)\rangle $. 
    \item[(ii)] Let $f: \mathcal{C} \to \R$ be any convex function and $z \in \mathcal{C}$. If $x^* = \arg\min_{x\in\mathcal{C}}\left\{f(x) +  \BD(x,z) \right\}$, then for any $y\in \mathcal{C}$, we have $f(y) + \BD(y,z) \geq f(x^*) + \BD(x^*,z) +  \BD(y,x^*).$
\end{enumerate}
\end{lemma}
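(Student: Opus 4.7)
For part (i), my plan is to proceed by direct algebraic expansion. I will expand both sides using the definition $\BD(u,v) = R(u) - R(v) - R'(v)(u-v)$, collect the $R(\cdot)$ terms and the linear terms involving $R'(y)$ and $R'(z)$ separately, and verify that the two sides match. Concretely, the left-hand side expands to $R(x) - R(z) - R'(y)(x-y) - R'(z)(y-z)$, and the right-hand side expands to $R(x) - R(z) - R'(z)(x-z) + (R'(z)-R'(y))(x-y)$; a single regrouping $R'(z)(x-z) - R'(z)(x-y) = R'(z)(y-z)$ completes the verification. This step is purely computational and poses no obstacle.

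For part (ii), the plan is to combine three ingredients in sequence: the first-order optimality condition at $x^*$, convexity of $f$, and the three-point identity from part (i). First, because the objective $\Phi(x) \defeq f(x) + \BD(x,z)$ is convex in $x$ (the Bregman divergence is convex in its first argument since $R$ is convex) and $x^*$ minimizes $\Phi$ over the convex set $\mathcal{C}$, the standard first-order optimality condition yields, for every $y \in \mathcal{C}$,
\[
\bigl(f'(x^*) + R'(x^*) - R'(z)\bigr)(y - x^*) \;\geq\; 0\,.
\]
Second, convexity of $f$ gives $f(y) \geq f(x^*) + f'(x^*)(y - x^*)$. Third, applying part (i) with the relabeling $(x,y,z) \mapsto (y, x^*, z)$ produces
\[
\BD(y,z) \;=\; \BD(y,x^*) + \BD(x^*,z) + \bigl(R'(x^*) - R'(z)\bigr)(y - x^*)\,.
\]
Adding the convexity inequality for $f$ to this identity, the linear term in $(y-x^*)$ combines to exactly the expression bounded below by $0$ via first-order optimality; discarding this nonnegative term yields the stated inequality $f(y) + \BD(y,z) \geq f(x^*) + \BD(x^*,z) + \BD(y,x^*)$.

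The only mild subtlety is the justification of the first-order optimality condition at a boundary point of $\mathcal{C}$; however, since $\mathcal{C} \subset \R$ is an interval and in each application of this lemma in the paper (e.g., within the proof of Lemma \ref{lem:OMD:bregmanupdateUB}) the function $f$ is affine and $R$ is continuously differentiable, the condition follows immediately from variational inequalities on a one-dimensional convex set. No further obstacles are anticipated.
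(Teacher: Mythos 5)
Your proof is correct. Note that the paper itself gives no proof of this lemma --- it cites Lemma 3.1 and 3.2 of \cite{chen1993convergence} and explicitly omits the argument as standard --- and what you have written is precisely that standard argument: direct expansion for the three-point identity, and optimality plus convexity plus the identity from part (i) for the descent inequality. The only quibble is that for a general convex $f$ you should replace $f'(x^*)$ by a subgradient of $f$ at $x^*$, but as you observe, in every application in the paper $f$ is affine, so this is immaterial.
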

The proofs for the above lemma are very standard and we will omit them in this paper.

\begin{lemma}
\label{lem:boundSum}
% \jasoncom{Old version: When $a < \frac{1}{2}$, for any $N > \max\{ \frac{2a}{1-2a}, \frac{a}{1-a}\}$ we have
% \begin{align*}
%     \sum_{t=N+1}^{T} \frac{a^{-t}}{t} ~\leq~& \frac{1}{1-a}\cdot \frac{a^{-T}}{T-N}, \quad \forall T \geq \frac{(1-a)N^{2}+ a}{N(1-a) -a }\,.
% \end{align*}}
Let $a\in(0,1)$,  $\NN = \left \lceil\frac{a}{1-a}\right \rceil + 1$, and $\MN = \left\lceil \frac{\frac{a}{1-a}(\NN+1)}{\NN - \frac{a}{1-a}}\right\rceil$. Then, for any ${t} \geq \NN + \MN$, we have
\begin{align*}
    \sum_{\tau=\NN + \MN}^{{t}} \frac{a^{-\tau}}{\tau} ~\leq~& \frac{1}{1-a}\cdot \frac{a^{-{t}}}{{t}-\NN}\,.
\end{align*}
\end{lemma}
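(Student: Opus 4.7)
The plan is to prove the bound by induction on $t$, with the base case $t = \NN + \MN$ and the inductive step adding a single term.

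For the base case, the left-hand side reduces to the single term $\frac{a^{-(\NN+\MN)}}{\NN+\MN}$, and the desired inequality becomes $\MN \leq \frac{\NN+\MN}{1-a}$, which is immediate since $a \in (0,1)$ and $\NN > 0$.

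For the inductive step, assume the bound holds at $t$. Using the induction hypothesis,
\[
\sum_{\tau=\NN + \MN}^{t+1} \frac{a^{-\tau}}{\tau} \;\le\; \frac{1}{1-a}\cdot \frac{a^{-t}}{t-\NN} + \frac{a^{-(t+1)}}{t+1},
\]
so it suffices to show this is bounded above by $\frac{1}{1-a}\cdot \frac{a^{-(t+1)}}{t+1-\NN}$. Dividing through by $a^{-(t+1)}$ and multiplying by $1-a$, the goal reduces to the purely algebraic inequality
\[
\frac{a}{t-\NN} + \frac{1-a}{t+1} \;\le\; \frac{1}{t+1-\NN}.
\]
I would then clear denominators; after combining the left-hand side over the common denominator $s(s+\NN+1)$ where $s := t - \NN$, this reduces to
\[
(s+1)\bigl(s + a(\NN+1)\bigr) \;\le\; s(s+\NN+1),
\]
which after cancelling the $s^2$ terms becomes $a(\NN+1) \le s\bigl((1-a)\NN - a\bigr)$. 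Since $\NN \ge \lceil a/(1-a)\rceil + 1 > a/(1-a)$ guarantees $(1-a)\NN - a > 0$, this is equivalent to
\[
s \;\ge\; \frac{a(\NN+1)}{(1-a)\NN - a} \;=\; \frac{\tfrac{a}{1-a}(\NN+1)}{\NN - \tfrac{a}{1-a}},
\]
which is precisely the condition $s = t - \NN \ge \MN$ encoded in the definition of $\MN$. This closes the induction.

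The conceptual content is light: the sum is essentially a geometric series with ratio $a^{-1}$ that is slightly distorted by the $1/\tau$ factors, so the last term dominates up to the factor $1/(1-a)$. The main obstacle is simply keeping the algebraic manipulation clean enough to see that the definitions of $\NN$ and $\MN$ are calibrated exactly to make the inductive step work (with $\NN$ chosen to ensure $(1-a)\NN - a$ is positive, and $\MN$ chosen to serve as the threshold for $t - \NN$).
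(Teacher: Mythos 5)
Your proposal is correct and follows essentially the same route as the paper: induction on $t$ with base case $t=\NN+\MN$, reducing the inductive step to the inequality $\frac{a}{t-\NN}+\frac{1-a}{t+1}\le\frac{1}{t-\NN+1}$, which holds precisely because $\NN>\frac{a}{1-a}$ and $t-\NN\ge\MN$. Your substitution $s=t-\NN$ is just a cleaner bookkeeping of the same algebra the paper carries out.
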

\begin{proof} [Proof of Lemma \ref{lem:boundSum}]
We adopt an induction argument with  hypothesis  $\sum_{\tau=\NN+\MN}^{{t}} \frac{a^{-\tau}}{\tau} \leq \frac{1}{1-a}\cdot \frac{a^{-{t}}}{{t}-\NN}$. For the base case, consider ${t} = \NN+ \MN$. We can easily see $\frac{a^{-( \NN+ \MN)}}{ \NN+ \MN} <\frac{1}{1-a}\cdot \frac{a^{-( \NN + \MN)}}{\MN}$. Now assume that the induction hypothesis holds for some 
some ${t} \geq \NN + \MN$. We will show $\sum_{\tau=\NN+\MN}^{{t}+1} \frac{a^{-\tau}}{\tau} \leq \frac{1}{1-a}\cdot \frac{a^{-({t}+1)}}{{t}-\NN + 1}$. We start with 
\begin{align*}
    \sum_{\tau=\NN+\MN}^{{t}+1} \frac{a^{-\tau}}{\tau} ~\leq~& \frac{1}{1-a}\cdot \frac{a^{-{t}}}{{t}-\NN} + \frac{a^{-({t}+1)}}{{t}+1}
    ~=~ \frac{a^{-({t}+1)}}{1-a}\cdot \left(\frac{a}{{t}-\NN} + \frac{1-a}{{t}+1}\right)\,.
\end{align*}
Furthermore, 
\begin{align*}
    \frac{a}{{t}-\NN} + \frac{1-a}{{t}+1} - \frac{1}{{t}-\NN + 1} ~=~ & a\left(\frac{1}{{t}-\NN} -\frac{1}{{t}-\NN + 1} \right) + (1-a)\left( \frac{1}{{t}+1}-\frac{1}{{t}-\NN + 1}\right) \\
     ~=~ & \frac{1}{{t}-\NN+ 1}  \left( \frac{a}{{t}-\NN} -  \frac{(1-a)\NN}{{t}+1}\right) \\
      ~=~ & \frac{1}{{t}-\NN + 1} \cdot \frac{\left(a - \NN(1-a)\right){t} + (1-a)\NN^{2} + a}{({t}-\NN)({t}+1)} \\
    ~\overset{(a)}{\leq}~ & 0 \,,
\end{align*}{}
where (a) follows from $\NN = \left\lceil \frac{a}{1-a}\right\rceil + 1>  \frac{a}{1-a}$ and the fact that 
$$ \frac{(1-a)\NN^{2}+ a}{\NN(1-a) -a } = \frac{\NN^{2} + \frac{a}{1-a}}{\NN - \frac{a}{1-a}} = \NN + \frac{\frac{a}{1-a}(\NN+1)}{\NN - \frac{a}{1-a}} < \NN + \MN \leq {t}.$$
Therefore, we can conclude that
\begin{align*}
    \sum_{\tau=\NN+\MN}^{{t}+1} \frac{a^{-\tau}}{\tau} ~\leq~& \frac{1}{1-a}\cdot \frac{a^{-({t}+1)}}{{t}-\NN+1}\,,
\end{align*}
which is the desired result. 
\end{proof}

% Appendix here
% Options are (1) APPENDIX (with or without general title) or 
%             (2) APPENDICES (if it has more than one unrelated sections)
% Outcomment the appropriate case if necessary
%
% \begin{APPENDIX}{<Title of the Appendix>}
% \end{APPENDIX}
%
%   or 
%
% \begin{APPENDICES}
% \section{<Title of Section A>}
% \section{<Title of Section B>}
% etc
% \end{APPENDICES}

% References here (outcomment the appropriate case) 

% CASE 1: BiBTeX used to constantly update the references 
%   (while the paper is being written).
%\bibliographystyle{informs2014} % outcomment this and next line in Case 1
%\bibliography{<your bib file(s)>} % if more than one, comma separated

% CASE 2: BiBTeX used to generate mypaper.bbl (to be further fine tuned)
%\input{mypaper.bbl} % outcomment this line in Case 2

\end{document}